\documentclass[12pt,reqno,centertags]{amsart}
\usepackage{a4}
\usepackage{amsmath,amsthm,amscd,amssymb}
\usepackage{latexsym}
\usepackage{upref}
\usepackage{hyperref}
\usepackage{upgreek}
\usepackage{textgreek}
\usepackage{cite}
\usepackage{tikz}
\usepackage{enumerate}
\usepackage{multicol}
\usepackage{accents}
\usepackage{lmodern}
\usepackage{microtype}
\usepackage{graphicx}
\usepackage{subcaption} 
\usepackage{dsfont}
\usepackage{colonequals}
\usepackage{bbm}
\usepackage{fullpage}
\UseRawInputEncoding

\newcommand{\abs}[1]{\left|#1\right|}
\newcommand\rE[1]{\upharpoonleft_{#1}}
\newcommand\lrb[1]{\left\lbrace#1 \right\rbrace}
\newcommand\lrp[1]{\left(#1 \right)}
\newcommand\lrc[1]{\left[#1 \right]}
\newcommand\hh{{\mathcal H^{\oplus_2}}}
\newcommand\C{{\mathbb C}}
\newcommand\N{{\mathbb N}}
\newcommand\R{{\mathbb R}}
\newcommand\cc[1]{\overline {#1}}
\newcommand\ip[2]{\left\langle {#1},{#2} \right\rangle}
\newcommand\no[1]{\left\| {#1} \right\|}
\newcommand\vE[2]{{\scriptsize\begin{pmatrix}{#1}\\[-.7mm]{#2}\end{pmatrix}}}
\newcommand\oP[1]{{#1}_{\tiny{\odot}}}
\newcommand\dS{{\rm\bf N}}
\DeclareMathOperator{\im}{Im}
\DeclareMathOperator{\re}{Re}
\DeclareMathOperator{\dom}{dom}
\DeclareMathOperator{\ran}{ran}
\DeclareMathOperator{\mul}{mul}
\DeclareMathOperator{\Span}{span}

\allowdisplaybreaks
\numberwithin{equation}{section}

\newtheorem{theorem}{Theorem}[section]
\newtheorem{lemma}[theorem]{Lemma}
\newtheorem{corollary}[theorem]{Corollary}

\newtheorem{proposition}[theorem]{Proposition}
\theoremstyle{definition}
\newtheorem{definition}[theorem]{Definition}

\theoremstyle{remark}
\newtheorem{remark}[theorem]{Remark}

\begin{document}
\title[Semigroups of linear relations]{$C_0$-Semigroups on d.m.o. linear relations}
\author[D. Regalado-Hern\'andez and J. I. Rios-Cangas]{ Luis D. Regalado-Hern\'andez and Josu\'e I. Rios-Cangas}
\address{Departamento de Matem\'aticas, Universidad Aut\'onoma Metropolitana, Iztapalapa Campus,  San Rafael Atlixco 186, 09340 Iztapalapa, Mexico City.}
\email{l.omega0613@gmail.com,\quad jottsmok@xanum.uam.mx}

\date{\today}
\subjclass{Primary 47A06, 47D06;
Secondary 47D60}

\keywords{Linear relations, $C_0$-semigroups, Infinitesimal generators, Anti-accumulative linear relations, Hille-Yosida-Lumer-Phillips Theorem}

\begin{abstract}
This article develops a theory of one-parameter semigroups for linear relations, introducing the notion of domain-multivalued orthogonal (d.m.o.) relations. We establish complete characterizations for generators of uniformly and strongly continuous semigroups, showing them to be maximal and densely d.m.o. relations, respectively. Furthermore, the Hille-Yosida and Lumer-Phillips theorems are extended to this general setting, providing necessary and sufficient conditions for generating $C_0$-semigroups of contractions (identified as maximal anti-accumulative relations) and unitary operators (anti-selfadjoint relations). This rigorous framework sheds new light on the peculiarities of one-parameter semigroups for linear relations.
 \end{abstract}

\maketitle

\section{Introduction}  
The landscape of functional analysis, particularly in the study of Hilbert spaces, frequently encounters mathematical objects that defy the traditional definitions of single-valued, densely defined linear operators. Such scenarios, which arise naturally in the study of singular differential equations, boundary value problems, and quantum mechanics, require the broader framework of \textit{linear relations} (also know as multi-valued operators). Historically, the notion of a linear relation arose from the compelling need to describe the adjoint of a non-densely defined linear operator. John von Neumann pioneered this theory, and his seminal work \cite{MR0034514} is widely recognized as the precursor to the modern treatment of linear relations. This framework was further expanded in \cite{MR0123188,MR1631548}, leading to the sophisticated theories detailed in recent comprehensive accounts such as \cite{MR3971207}.

Within this generalized setting, the concepts of dissipativity and accumulativity are not merely abstract properties; they are crucial for characterizing the physical behavior of complex systems. Their profound relevance is evident across various applications \cite{MR4275499,MR1318517,MR2486805}, particularly in the extension theory of dissipative and accumulative relations as well as operators \cite{MR4082306,MR3971207,MR4302456}. The dissipative, as well as accumulative, properties of a relation are fundamental to understanding the energetics and time-evolution of systems, serving as the very cornerstones for the generation of semigroups. The authoritative monograph by Behrndt, Hassi, and de Snoo \cite{MR3971207} rigorously develops these concepts, leveraging sophisticated tools like boundary triplets and Weyl functions to tackle boundary value problems and extension theory for differential operators.

Classical von Neumann extension theory provides a powerful, elegant method for classifying the self-adjoint extensions of symmetric operators using deficiency indices, a theory that naturally extends to symmetric linear relations \cite{MR0477855, MR0361889}. However, because many real-world physical systems are inherently non-conservative, there is a critical need for a broader extension theory capable of encompassing dissipative and accumulative relations. This expanded theory, which also benefits significantly from the boundary triplet formalism, allows for a comprehensive parameterization of maximally dissipative or anti-dissipative extensions of a given relation. Significant contributions to these generalized extension theories have been made in recent years, notably by R\'ios-Cangas and Silva \cite{MR4082306}, as well as by Behrndt, Hassi, and de Snoo \cite{MR3971207}.

Semigroup theory for linear operators was extensively developed in \cite{MR1503076,MR1503079,MR1503425}, profoundly impacting the study of the time evolution of diverse physical systems, particularly in quantum mechanics.  Building upon the foundational work on operator semigroups \cite{MR710486}, the concept of one-parameter semigroups has been successfully extended to linear relations. Early contributions by Favini and Fuhrman \cite{MR1666189} established crucial convergence results for degenerate evolution equations, while Baskakov \cite{MR2475046} provided a broad framework for analyzing their spectral properties. More recently, Arendt, Chalendar, and Moletsane \cite{MR4815643} generalized fundamental milestones --such as the Hille-Yosida, Lumer-Phillips, and Trotter-Kato theorems-- to $m$-dissipative linear relations. This shift toward multi-valued generators does more than just complement classical operator theory; it captures the inherent peculiarities of these relations, providing a rigorous mathematical framework essential for analyzing open quantum systems, singular differential operators, and complex control systems. 

To systematically establish the theory of semigroups for linear relations, we organize the present article as follows. In Section~\ref{s2}, we briefly review standard facts about linear relations and introduce the pivotal concept of \textit{domain-multivalued orthogonal} (d.m.o.) relations, exploring their deep connection with their operator parts. For instance, Theorem~\ref{th:resolvet-dom-mul} reveals how the powers of a relation's resolvent --and their respective norms-- are inextricably linked to those of its operator part. Section~\ref{s3} is devoted to the classes of dissipative and accumulative relations. These are inherently d.m.o. and can be completely characterized by their operator parts (Theorem~\ref{th:maximal-Ll}). Within this section, we also address symmetric (and particularly, self-adjoint) relations, establishing a similar characterization in Theorem~\ref{rmk:Asym-AA}. In Section~\ref{s4}, we explore anti-dissipative and anti-accumulative relations; since these arise as $-i$ variations of dissipative and accumulative relations (Lemma~\ref{lem:nonA-Diss}), they too are d.m.o. This section establishes in Theorem~\ref{thm:vN-formulae-AS} the von Neumann formulae for anti-symmetric relations, and culminates with a characterization of this class (Theorem~\ref{th:oP-asymm}). Finally, the core objective of this article is realized in Section~\ref{s5}. Here, Definition~\ref{dfn:semigroup} formalizes the concept of a one-parameter semigroup of a linear relation, and Definition~\ref{dfn:generator-of-semigroup} identifies its infinitesimal generator. We prove in Theorem~\ref{th:uniformly-CSG} that a closed relation generates a uniformly continuous semigroup if and only if it is maximal d.m.o. Furthermore, Theorem~\ref{th:resl-C0} dictates that a closed relation generates a strongly continuous semigroup if and only if it is densely d.m.o. and possesses a positive ray in its regular set. The culmination of this framework --the Hille-Yosida and Lumer-Phillips theorems for linear relations-- is presented in Subsection~\ref{ss5.3}, where Theorem~\ref{thm:HY-LP-lr} asserts that the generator of a $C_0$-semigroup with a contractive operator part is precisely a maximal anti-accumulative relation. Moreover, this generator becomes anti-selfadjoint if the operator part of the semigroup is unitary (Corollary~\ref{cor:HY-LP-aA}). By way of illustrating these rich theoretical results, the final section develops examples of $C_0$-semigroup generators constructed as one-dimensional perturbations.

\section{Domain-multivalued orthogonal linear relations}\label{s2}
For the Hilbert space $\hh\colonequals\mathcal H\oplus\mathcal H$, a \emph{linear relation} (relation for short) is a subspace $T\leq\hh$ such that 
\begin{align*}
\dom T&\colonequals\lrb{f\in \mathcal  H\,:\, \vE fg\in T}\,,&
\ran T&\colonequals\lrb{g\in \mathcal  H\,:\, \vE fg\in T}\,,\\
\ker T&\colonequals\lrb{f\in \mathcal  H\,:\, \vE f0\in T}\,,&
\mul T&\colonequals\lrb{g\in \mathcal  H\,:\, \vE 0g\in T}\,,
\end{align*} 
are subspaces in $\mathcal H$ and denote the \emph{domain}, \emph{range}, \emph{kernel} and \emph{multivalued} of $T$, respectively. So, a relation extend the notion of a linear operator $T$ in $\mathcal H$ since the graph of $T$ is a particular case of a linear relation. Indeed, a relation $T$ is an operator (identifying with its graph) if and only if $\mul T=\{0\}$.

The \emph{closure} of $T$ is denoted by $\cc T$. Besides, for $T,S\leq \hh$ and $\alpha\in\C$, we consider the following operations
\begin{align*}
T+S&\colonequals\lrb{\vE f{g+h}\,:\,\vE fg\in T,\,\vE fh\in S}\,,&
\zeta T&\colonequals \lrb{ \vE f{\zeta g}\,:\,\vE fg\in T}\,,\\
ST&\colonequals\lrb{ \vE fk\,:\,\vE fg\in T,\,\vE gk\in S}\,,& T^{-1}&\colonequals\lrb{\vE gf\,:\, \vE fg\in T }\,,
\end{align*}
which turn out relations in $\hh$. 
\begin{remark}\label{rm:LR-distribution} For $n\in\N$ and linear relations $T,T_1,\dots,T_n$  it follows from \cite[Prop.\,I.4.2]{MR1631548} that
\begin{align*}
TT_1+\dots+TT_n\subset T(T_1+\dots+T_n)\qquad\mbox{and}\qquad (T_1+\dots+T_n)T\subset T_1T+\dots+T_nT\,.
\end{align*}
Besides:
\begin{enumerate}[(i)]
\item\label{eq:it1LRD} If $\ran T_j\subset\dom T$, for $j=1,\dots,n$, then $TT_1+\dots+TT_n= T(T_1+\dots+T_n)$.
\item\label{eq:it2LRD} If $T$ is an operator then $(T_1+\dots+T_n)T= T_1T+\dots+T_nT$.
\end{enumerate}
In particular, one has for $\{\alpha_j\}_{j=1}^n\subset \C$ that $\alpha_1T+\dots+\alpha_nT=\lrp{\alpha_1+\dots+\alpha_n}T$.
\end{remark}

For $T, S$ l.i. (linearly independent), i.e, $T\cap S=\lrb{\vE 00}$, it is readily to verify that 
\begin{gather}\label{eq:distributed-inverse}
(T\dotplus S)^{-1}=T^{-1}\dotplus S^{-1}\,.
\end{gather}  
Consequently, $(T\oplus S)^{-1}=T^{-1}\oplus S^{-1}$, if $T\perp S$.

It is well-known that the \emph{adjoint} of $T$ is the closed relation $T^*$ given by
\begin{gather*}
 T^*\colonequals\lrb{\vE hk\in \hh\,:\,\ip kf=\ip hg,\, \forall\, \vE fg\in T}\,,
\end{gather*}
which satisfies
\begin{equation}
\begin{aligned}\label{eq:p-Adjoint}
T^*&=(-T^{-1})^{\perp}\,,&\qquad\qquad S\subset  T&\Rightarrow T^*\subset S^*\,,\\
T^{**}&=\cc T\,,&\qquad\qquad (\alpha T)^*&=\cc{\alpha} T^*\,,\,\mbox{ with } \alpha\neq0\,,\\
(T^*)^{-1}&=(T^{-1})^*\,,&\qquad\qquad \mathcal  H&=\cc{\ran T}\oplus\ker T^*\,.
\end{aligned}
\end{equation}

If $\dom T\subset \dom S$ and $\dom (T+S)^*\subset \dom S^*$, then \cite[Thm.\,3.41]{MR0123188}
\begin{gather*}
(T+S)^*=T^*+S^*\,.
\end{gather*}
 
In this work, a relation $T$ is said to be \emph{bounded} if there exists $C>0$, such that 
\begin{gather*}
\no g\leq C\no f\,,\quad\mbox{for all}\quad\vE fg\in T\,.
\end{gather*}
Thereby, the multivalued of any bounded relation $T$ is trivial, viz. $T$ is a linear operator. Besides, $T+S$ is closed if $T, S$ are closed and $S$ is bounded \cite[Thm.\,3.2.3]{MR1192782}. 

It is convenient to address the \emph{quasi-regular} set of a relation $T$ given by 
\begin{gather*}
\hat\rho(T)\colonequals\lrb{\zeta \in \C\
:\ (T-\zeta I)^{-1}\mbox{ is bounded}}\,,
\end{gather*} 
which is an open set in $\C$. Regarding the class of all bounded operators $\mathcal {B(H)}$ with domain the whole space $\mathcal  H$, we define the \emph{regular} set of $T$ by
\begin{gather*}
\rho(T)\colonequals\lrb{\zeta \in \C\ :\ (T-\zeta I)^{-1}\in\mathcal {B(H)}}\,,
\end{gather*}
which also is open and belongs to $\hat\rho(T)$. For simplicity, the regular set is taken over closed relations, since otherwise it is empty. Also, the following spectral sets of a relation $T$ are regarding in this work:
\begin{align*}
\sigma(T)&\colonequals\C\backslash \rho(T), &\mbox{\it (spectrum)}\\
\hat\sigma(T)&\colonequals\C\backslash \hat\rho(T),&\mbox{\it(spectral core)}\\
\sigma_p(T)&\colonequals\{\zeta \in \C\, :\, \ker (T-\zeta I)\neq \{0\}\},&\mbox{\it(point)}\\
\sigma_d(T)&\colonequals\{\zeta \in \sigma_{p}(T)\, :\, \dim\ker (T-\zeta I)<\infty\},&\mbox{\it(discrete)}\\
\sigma_p^{\infty}(T)&\colonequals\{\zeta \in \sigma_{p}(T)\, :\, \dim\ker (T-\zeta I)=\infty\},&\mbox{\it(point non-discrete)}\\
\sigma_c(T)&\colonequals\{\zeta \in \C\, :\, \ran (T-\zeta I)\neq \overline{\ran (T-\zeta I)}\},&\mbox{\it(continuous)}
\end{align*} 
Thence, it is not hard to show that $\hat\sigma(T)=\sigma_p(T)\cup\sigma_c(T)$.

It is a well-known, and a very useful one, that a closed relation $T$ is decomposed by 
\begin{gather}\label{eq:multivalued-operator.parts}
T=\oP T\oplus T_\infty\,,\quad\mbox{where}\quad T_{\infty}\colonequals\lrb{\vE 0g\in T}\,;\quad \oP T\colonequals T\ominus T_{\infty}\,,
\end{gather}
denote the \emph{multivalued part} and \emph{operator part} of $T$, respectively. The operator part $\oP T$ is a closed linear operator, while the multivalued part $T_\infty$ is a purely multivalued closed relation. The decomposition \eqref{eq:multivalued-operator.parts} allows studying spectral of $T$ by means of $\oP T$.

\begin{lemma}\label{rm:distributionTS}
Let $T,S$ be relations with $\dom{T} \cap \mul{S} = \{0\}$.
\begin{enumerate}[(i)]
	\item\label{distrTS1} If $S$ is closed, then $TS=T \oP S$.
	\item\label{distrTS2} If $T$ is closed, then $TS = \oP T S \oplus T_\infty$. 
\end{enumerate}
\end{lemma}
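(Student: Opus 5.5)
The plan is to work elementwise with the orthogonal decomposition \eqref{eq:multivalued-operator.parts}. For a closed relation $R$, every $\vE fg\in R$ splits uniquely as
\begin{gather*}
\vE fg=\vE f{g_1}+\vE 0{g_2}\,,\qquad \vE f{g_1}\in\oP R\,,\quad g_2\in\mul R\,,\quad g_1\perp\mul R\,.
\end{gather*}
In particular $\dom\oP R=\dom R$. Both items then come down to two inclusions. In each item one inclusion is trivial because $\oP R\subset R$ and relation products are monotone.

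For \eqref{distrTS2} I would argue as follows. Take $\vE fk\in TS$, so that $\vE fg\in S$ and $\vE gk\in T$ for some $g$. Split $\vE gk=\vE g{k_1}+\vE 0{k_2}$ with $\vE g{k_1}\in\oP T$ and $k_2\in\mul T$. Then $\vE f{k_1}\in\oP T S$ and $\vE 0{k_2}\in T_\infty$, which gives $TS\subset\oP TS+T_\infty$. The sum is orthogonal: for $\vE f{k_1}\in \oP TS$ we have $k_1\in\ran\oP T\perp\mul T$, and the first components pair with $0$. For the reverse inclusion, $\oP T S\subset TS$ is immediate. Moreover $T_\infty\subset TS$, because $\vE 00\in S$ and $\vE 0{k_2}\in T$ give $\vE 0{k_2}\in TS$. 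Since $TS$ is a subspace, $\oP TS\oplus T_\infty\subset TS$. The hypothesis $\dom T\cap\mul S=\{0\}$ seems not to be needed for this item.

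For \eqref{distrTS1} the easy inclusion is $T\oP S\subset TS$. For the converse, take $\vE fk\in TS$ via $\vE fg\in S$ and $\vE gk\in T$, and split $g=g_1+g_2$ with $\vE f{g_1}\in\oP S$ and $g_2\in\mul S$. The hypothesis gives uniqueness: if $\vE fg,\vE f{g'}\in S$ with $g,g'\in\dom T$, then $g-g'\in\dom T\cap\mul S=\{0\}$. So the intermediate vector $g$ in the product is uniquely determined by $f$. What remains is to show $g=g_1$, that is $g_2=0$. Equivalently, the unique intermediate vector lying in $\dom T$ must be the orthogonal representative $\oP S f$.

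I expect this last identification to be the main obstacle, and I suspect it can fail as stated. Let $\mul S=\Span\{e_2\}$ and $\oP Sf=e_1$, so that $\vE f{e_1+e_2}\in S$. Let $\dom T=\Span\{e_1+e_2\}$. Then $\dom T\cap\mul S=\{0\}$, yet $\vE f{T(e_1+e_2)}\in TS$, while $e_1\notin\dom T$. My first attempt would be to look for an implicit extra assumption that forces $g_2\in\dom T$. Natural candidates are $\mul S\subset\dom T$ together with linearity, which gives $g_1=g-g_2\in\dom T$ and then $g_2=0$ directly, or orthogonality $\dom T\perp\mul S$. Under such a hypothesis the argument above closes: $g_1\in\dom T$, so $g_2=g-g_1\in\dom T\cap\mul S=\{0\}$, hence $\vE f{g_1}=\vE fg\in\oP S$ and $\vE fk\in T\oP S$. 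If no such hypothesis is available, I would record the uniqueness statement as the correct content of \eqref{distrTS1}: $TS$ equals $TS'$, where $S'\subset S$ is the operator selecting the unique value of $S$ lying in $\dom T$.
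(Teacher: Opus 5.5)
Your part (ii) is the paper's argument: split the value through $T=\oP T\oplus T_\infty$, and use $\vE 00\in S$ to get $T_\infty\subset TS$. You are also right that $\dom T\cap\mul S=\{0\}$ is not needed there. The paper only uses it to note $\mul TS=\mul T$, and the decomposition never uses that fact.

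On (i) your objection is correct, and the gap is in the paper, not in your proposal. The paper's proof takes $\vE hf\in S$, $\vE fk\in T$ and asserts that $f\in\dom T\cap\ran S$ ``means'' $\vE hf\in\oP S$. As you observed, the hypothesis only makes the intermediate vector unique; it does not identify it with $\oP S h$. Your counterexample works. Here is a concrete version in $\mathcal H=\C^2$:
\begin{enumerate}[(a)]
\item Take $S=\Span\{\vE{e_1}{e_1},\vE 0{e_2}\}$. It is closed, even maximal d.m.o., with $\oP S=\Span\{\vE{e_1}{e_1}\}$.
\item Take the closed operator $T=\Span\{\vE{e_1+e_2}{0}\}$.
\item Then $\dom T\cap\mul S=\{0\}$, but $TS=\Span\{\vE{e_1}{0}\}$ while $T\oP S=\{\vE 00\}$.
\end{enumerate}
So (i) is false as stated.

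On your repairs, the first one is vacuous. Together with $\dom T\cap\mul S=\{0\}$, the condition $\mul S\subset\dom T$ forces $\mul S=\{0\}$, i.e.\ $S=\oP S$. The right hypothesis is $\dom T\subset(\mul S)^\perp$, but your closing line does not prove it, because $g_1\in\dom T$ is not known. Argue instead as follows: $g\in\dom T\subset(\mul S)^\perp$ and $g_1\in\ran\oP S\subset(\mul S)^\perp$, so $g_2=g-g_1\in\mul S\cap(\mul S)^\perp=\{0\}$. This hypothesis implies $\dom T\cap\mul S=\{0\}$, so the original condition becomes redundant.

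This corrected version is exactly what the paper has wherever it invokes (i), namely in Theorem~\ref{prop:PC-tn}(ii). There, each $T_j$ is d.m.o.\ and $\mul T_j=\mul T_0$, which gives $\dom T_j\subset(\mul T_{j-1})^\perp$. The paper's displayed domain inclusion in that proof is already an orthogonality statement, so the later results are unaffected. Your final reformulation $TS=TS'$ is true but tautological and can be dropped.
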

\begin{proof}
 Since $\dom{T} \cap \mul{S} = \{0\}$, for every $\vE{h}{k} \in TS$, there exist $\vE{h}{f} \in S$ and $\vE{f}{k} \in T$ with $f \in \dom{T} \cap \ran{S}$, which means $\vE{h}{f} \in \oP{S}$, i.e., $\vE{h}{k} \in T\oP{S}$, whence it follows \eqref{distrTS1}. Now, $\dom{T} \cap \mul{S} = \{0\}$ yields $\mul T=\mul TS$. Thus, $\vE{h}{k} \in TS$ is equivalent to the existence of  $\vE{h}{f} \in S$, $\vE{f}{k_1} \in \oP T$ and $\vE{0}{k_2} \in T_\infty$, such that $k = k_1 + k_2$, viz. $\vE{h}{k} = \vE{h}{k_1 + k_2} \in \oP T S \oplus T_\infty$. This prove \eqref{distrTS2}.	
\end{proof}

\begin{definition}\label{df:dmo-T}
We say that a relation $T$ is \emph{domain-multivalued orthogonal} (briefly d.m.o.) if 
\begin{gather*}
\dom T\subset\lrp{\mul T}^\perp\,.
\end{gather*}
In this fashion, we may say that $T$ is \emph{densely d.m.o.}, if $\cc{\dom T}=\lrp{\mul T}^\perp$. Moreover, $T$ is \emph{maximal d.m.o.}, if $\dom T=\lrp{\mul T}^\perp$.
\end{definition}

E.g., every linear operator is d.m.o, and every operator in $\mathcal B(\mathcal H)$ is maximal d.m.o.
 
\begin{theorem}\label{prop:PC-tn}
For $n\in\N$ and closed relations $T_0,T_1,\dots, T_n$ such that $\mul T_j=\mul T_0$, $j=1,\dots,n$,  the following hold:
\begin{enumerate}[(i)]
\item\label{ita:PC} $T_0+\dots +T_n=(\oP {T_0}+\dots +\oP {T_n})\oplus {T_0}_\infty$, wherefrom  $T_0+\dots +T_n$, $\oP {T_0}+\dots +\oP {T_n}$ are simultaneously closed. In which case,
\begin{gather*}
\oP{(T_0+\dots +T_n)}=\oP {T_0}+\dots +\oP {T_n}\qquad\mbox{while}\qquad
(T_0+\dots +T_n)_\infty= {T_0}_\infty\,.
\end{gather*}

\item\label{itb:PC} If $T_1,\dots, T_n$ are d.m.o., then 
\begin{gather}\label{eq:TS-oP-dmo}
T_nT_{n-1}\cdots T_0=T_n\lrp{\oP{T_{n-1}}\cdots \oP{T_0}}=\oP {T_n}\oP{T_{n-1}}\cdots \oP{T_0}\oplus {T_0}_\infty\,,
\end{gather}
whence $T_nT_{n-1}\cdots T_0$ and $\oP{T_{n-1}}\cdots \oP{T_0}$ are simultaneously closed. In this case, 
\begin{gather*}
\oP{(T_nT_{n-1}\cdots T_0)}=\oP {T_n}\oP{T_{n-1}}\cdots \oP{T_0}\qquad\mbox{while}\qquad
(T_nT_{n-1}\cdots T_0)_\infty= {T_0}_\infty\,.
\end{gather*}
\end{enumerate}
\end{theorem}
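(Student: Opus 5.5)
For (i) I will prove the decomposition directly from the definitions, then use it for the closedness and uniqueness claims. Write $M\colonequals\mul T_0=\mul T_j$. Each $T_j$ is closed, so \eqref{eq:multivalued-operator.parts} gives $T_j=\oP{T_j}\oplus {T_j}_\infty$. Since ${T_j}_\infty=\{0\}\oplus M={T_0}_\infty$, every $\oP{T_j}$ is an operator with $\ran\oP{T_j}\subset M^\perp$ and $\dom\oP{T_j}=\dom T_j$.

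Take $\vE fg\in T_0+\dots+T_n$, so $g=\sum g_j$ with $\vE f{g_j}\in T_j$. Split each $g_j=\oP{T_j}f+m_j$ with $m_j\in M$. Then $g=\sum_j\oP{T_j}f+\sum_j m_j$, where the first sum lies in $M^\perp$ and the second in $M$. This shows the element lies in $(\oP{T_0}+\dots+\oP{T_n})\oplus{T_0}_\infty$. Orthogonality holds because $\vE f{\sum\oP{T_j}f}\perp\vE 0m$. The reverse inclusion is immediate, since $\vE 0m\in T_0$ can be added to the $T_0$-component.

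Closedness is then simultaneous, since an orthogonal sum with the closed summand ${T_0}_\infty$ is closed iff the other summand is closed. The sum $\oP{T_0}+\dots+\oP{T_n}$ is an operator with range in $M^\perp$. So when the sum is closed, its multivalued part is $\{0\}\oplus M={T_0}_\infty$, and its orthogonal complement is exactly that operator. This gives the formulas for $\oP{(\cdot)}$ and $(\cdot)_\infty$.

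For (ii) I will induct on $n$, using Lemma~\ref{rm:distributionTS} at each step. The d.m.o. hypothesis on $T_j$ ($j\ge1$) says $\dom T_j\subset M^\perp$, with $M=\mul T_{j-1}$. Hence $\dom T_j\cap\mul T_{j-1}=\{0\}$.

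Base case $n=1$. Lemma~\ref{rm:distributionTS}\eqref{distrTS1} gives $T_1T_0=T_1\oP{T_0}$. Lemma~\ref{rm:distributionTS}\eqref{distrTS2} gives $T_1T_0=\oP{T_1}T_0\oplus{T_1}_\infty$. Applying \eqref{distrTS1} to $\oP{T_1}T_0$ (the domain of $\oP{T_1}$ is the same) gives $\oP{T_1}\oP{T_0}$, and ${T_1}_\infty={T_0}_\infty$.

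Inductive step. The point needing care is that $S\colonequals T_{n-1}\cdots T_0$ need not be closed, so Lemma~\ref{rm:distributionTS}\eqref{distrTS1} is not directly available. I will bypass this with the inductive description $S=\oP{T_{n-1}}\cdots\oP{T_0}\oplus{T_0}_\infty$, which exhibits $\mul S=M$ and writes $S$ as an operator with range in $M^\perp$ plus $\{0\}\oplus M$. Since $\dom T_n\subset M^\perp$, any $\vE hf\in S$ with $f\in\dom T_n$ has its $M$-component of $f$ equal to zero, so it lies in $\oP{T_{n-1}}\cdots\oP{T_0}$. This reproduces the argument of \eqref{distrTS1} with the operator part replaced by that explicit operator, giving the first equality in \eqref{eq:TS-oP-dmo}.

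The second equality follows as in \eqref{distrTS2}, which needs only $T_n$ closed. Indeed, $T_n=\oP{T_n}\oplus{T_n}_\infty$ with $\dom T_n=\dom\oP{T_n}$, so $T_nR=\oP{T_n}R\oplus(\{0\}\oplus M)$ for the operator $R=\oP{T_{n-1}}\cdots\oP{T_0}$. Here the product with the multivalued part contributes exactly ${T_0}_\infty$, because $\mul T_n=M$. The sum is orthogonal since $\ran\oP{T_n}\subset M^\perp$.

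For the closedness statement, note that the operator $\oP{T_n}\oplus$-part is the $\oP{T_n}$-composite. So it is closed iff the whole product is, and the uniqueness of the operator/multivalued decomposition yields the final formulas. I expect the main obstacle to be the inductive step, specifically keeping track of the non-closedness of the intermediate products and checking the orthogonality, i.e.\ that ranges of the operator parts lie in $M^\perp$. I will handle it by always working with the explicit decomposition rather than with $\oP{S}$.
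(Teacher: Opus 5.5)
Your proof is correct. Part (i) is essentially the paper's argument. In (ii) you organize the iteration differently from the paper.

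\textbf{Your route.} You associate the product as $T_n(T_{n-1}\cdots T_0)$, so the right-hand factor is a possibly non-closed relation. You compensate by inducting on $n$: the inductive decomposition $T_{n-1}\cdots T_0=\oP{T_{n-1}}\cdots\oP{T_0}\oplus{T_0}_\infty$ lets you redo the proof of Lemma~\ref{rm:distributionTS}.\eqref{distrTS1} by hand.

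\textbf{The paper's route.} The paper associates from the left. Lemma~\ref{rm:distributionTS}.\eqref{distrTS1} only needs the right factor to be closed. So it is applied with $T=T_n\oP{T_{n-1}}\cdots\oP{T_{n-j}}$, whose closedness is irrelevant and whose domain lies in $\dom T_{n-j}\subset M^\perp$. The right factor is $T_{n-j-1}$, one of the original closed relations. This peels off one operator part at a time. A single application of Lemma~\ref{rm:distributionTS}.\eqref{distrTS2}, to $T_n$ and the operator $\oP{T_{n-1}}\cdots\oP{T_0}$, then gives the second equality. The obstacle you anticipated is therefore an artifact of your grouping, and the paper needs no induction hypothesis.

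\textbf{What your route buys.} You get the orthogonal decomposition, and the multivalued part $M$, of every partial product $T_k\cdots T_0$. You also make explicit that Lemma~\ref{rm:distributionTS}.\eqref{distrTS1} really uses only a splitting $S=R\oplus(\{0\}\oplus\mul S)$ with $R$ an operator, not closedness of $S$.

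\textbf{A remark on the closedness claim.} Your claim concerns the full composite $\oP{T_n}\oP{T_{n-1}}\cdots\oP{T_0}$, whereas the statement literally names $\oP{T_{n-1}}\cdots\oP{T_0}$. Your reading is the right one. Take $n=1$, $T_0$ a closed densely defined unbounded operator, and $T_1=0\in\mathcal B(\mathcal H)$. Then $T_1T_0=\{\vE f0 : f\in\dom T_0\}$ is not closed, although $\oP{T_0}=T_0$ is closed. So the literal claim fails. What both your argument and the paper's ``same lines as in item (i)'' actually establish is the version that includes $\oP{T_n}$.
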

\begin{proof}
\eqref{ita:PC}: bearing in mind ${T_j}_\infty={T_0}_\infty$ and  \eqref{eq:multivalued-operator.parts}, one has that $\ran \oP {T_j},\ran \oP {T_0}\perp \mul T_0$, wherefrom $(\oP {T_0}+\dots+\oP {T_n}) \perp {T_0}_\infty$. Besides, if $\vE f{h_0+\dots+h_n}\in T_0+\dots +T_n$ then there exist $\vE f{t_s}\in\oP {T_s}$ and $a_s\in\mul T_s$, such that $h_s=t_s+a_s$, with $s=0,\dots,n$. Thus, 
\begin{gather*}
\vE f{h_0+\dots+h_n}=\vE f{t_0+\dots+t_n}+\vE 0{a_0+\dots+a_n}\in (\oP {T_0}+\dots+\oP {T_n})\oplus {T_0}_\infty\,,
\end{gather*}
whence $T_0+\dots +T_n\subset(\oP {T_0}+\dots+\oP {T_n})\oplus {T_0}_\infty$, and thence equal, since 
\begin{gather*}
(\oP {T_0}+\dots+\oP {T_n})\oplus {T_0}_\infty=T_0+\oP {T_1}+\dots+\oP {T_n}\subset T_0+\dots +T_n\,.
\end{gather*}
To conclude, if $T_0+\dots +T_n$ is closed then so is $\oP{(T_0+\dots +T_n)}=\oP {T_0}+\dots+\oP {T_n}$. On the other hand, if $\oP {T_0}+\dots+\oP {T_n}$ is closed then so is $(\oP {T_0}+\dots+\oP {T_n})\oplus {T_0}_\infty=T_0+\dots +T_n$, since ${T_0}_\infty$ is closed.

\eqref{itb:PC}: the d.m.o property and Lemma~\ref{rm:distributionTS}.\eqref{distrTS1} readily implies that $T_nT_{n-1}=T_n\oP{T_{n-1}}$. Besides, 
\begin{gather*}
	\dom{T_{n} \oP{T_{n-1}}\cdots \oP{T_{n-j}}} \subset \dom{T_{n-j}} \perp \mul{T_{n-j}} = \mul{T_{n-j-1}}, \qquad  j=1,\ldots,n-1 \,,
\end{gather*}
where again Lemma~\ref{rm:distributionTS}.\eqref{distrTS1} yields 
\begin{gather*}
T_nT_{n-1}T_{n-2}\cdots T_0=T_n\oP{T_{n-1}}T_{n-2}\cdots T_0=T_n\oP{T_{n-1}}\oP{T_{n-2}}\cdots T_0=\dots=T_n\lrp{\oP{T_{n-1}}\cdots \oP{T_0}}\,,
\end{gather*}
whence it follows the left-hand side of \eqref{eq:TS-oP-dmo}. Now, the right-hand side of \eqref{eq:TS-oP-dmo} is straightforward by above and Lemma~\ref{rm:distributionTS}.\eqref{distrTS2}. The closedness condition is proved along the same lines as in item \eqref{ita:PC}.
\end{proof}

Given $T,S\leq\hh$, the relation $T_S$ is considered in the Hilbert space ${(\mul S)^\perp}^{\oplus_2}$ by
\begin{gather*}
T_S\colonequals T\cap {(\mul S)^\perp}^{\oplus_2}\,.
\end{gather*}
Thus, $T$ and $T_S$ are closed simultaneously. Besides, one simply computes that $(T^{-1})_T=(T_T)^{-1}$. Moreover, when $T$ is closed, $T_T=(\oP T)_T$ and therefore, $T_T$ is a closed operator. Oftentimes $T_S$ is regarding as a relation in $\hh$. 

\begin{remark}\label{rmk:dom-multperp-spectral} A closed d.m.o. $T$ satisfies $\dom \oP T, \ran \oP T\subset (\mul T)^\perp$ and
\begin{gather}\label{eq:Tt-ToP}
T_T=(\oP T)_T=\oP T\cap{(\mul T)^\perp}^{\oplus_2}=\oP T\,.
\end{gather}
Additionally (cf. \cite[Thm.\,2.10]{MR4082306}),
\begin{align}\label{eq:espectral-properties-inherited}
\begin{aligned}
\sigma(T)&=\sigma(T_T)\,,&\hat\sigma(T)&=\hat\sigma(T_T)\,,&\sigma_c(T)&=\sigma_c(T_T)\,,\\
\sigma_p(T)&=\sigma_p(T_T)\,,&\sigma_p^\infty(T)&=\sigma_p^\infty(T_T)\,,&\sigma_d(T)&=\sigma_d(T_T)\,.
\end{aligned}
\end{align}
\end{remark}

Consider the \emph{resolvent} of a relation $T$ given by 
\begin{gather*}
\mathcal R_T(\zeta)=(T-\zeta I)^{-1}\,,\qquad\zeta\in\C\,,
\end{gather*}
which is a relation and bounded if $\zeta\in\hat\rho(T)$. Moreover, $\mathcal R_T(\zeta)$ and $T$ are simultaneously closed and in such a case $\mathcal R_T(\zeta)\subset \mathcal B(\mathcal H)$, for $\zeta\in\rho(T)$.

\begin{remark}
It is not hard to show the following \emph{Hilbert's resolvent identity} for relations,
\begin{gather}\label{eq:Hilbert's-resolvent}
\mathcal R_T(\zeta)-\mathcal R_T(\lambda)=(\zeta-\lambda)\mathcal R_T(\zeta)\mathcal R_T(\lambda)\,,
\end{gather}
wherefrom it follows that $\mathcal R_T(\zeta)$ and $\mathcal R_T(\lambda)$ commute.  
\end{remark}

Similarly to operators (cf. \cite[Sect.\,3.7]{MR1192782}), $\mathcal R_T(\lambda)$ depends analytically on $\lambda \in\rho(T)$. Besides, for $\zeta \in\rho(T)$ such that $\abs{\lambda-\zeta}\leq C_\zeta^{-1}$, where $\no{\mathcal R_T(\zeta)}\leq C_\zeta$, the resolvent $\mathcal R_T(\lambda)$ can be represented by a uniformly convergent power series
\begin{gather*}
\mathcal R_T(\lambda)=\sum_{k=0}^\infty(\lambda-\zeta)^k\mathcal R_T(\zeta)^{k+1}\,.
\end{gather*}
 Moreover, the scalar-valued function $\lambda\mapsto \ip{f}{\mathcal R_T(\lambda) g}$ is analytic on $\rho(T)$, for all $f,g\in\mathcal H$.

It is worth to mention that the Hilbert's resolvent identity \eqref{eq:Hilbert's-resolvent} can be extended to resolvent operators with respect
to a bounded operator \cite[Sect.\,1.11]{MR3971207}.

\begin{theorem}\label{th:resolvet-dom-mul}
For $\zeta\in\C$, the resolvent of a closed d.m.o. $T$ satisfies 
\begin{gather}\label{eq:resolvent-OP}
\lrc{\mathcal R_T(\zeta)}^n=\lrc{\mathcal R_{T_T}(\zeta)}^n\oplus T_{\infty}^{-1}\quad\mbox{and}\quad \lrb{[\mathcal R_T(\zeta)]^n}_T=\lrc{\mathcal R_{T_T}(\zeta)}^n\,,\qquad n\in\N\,.
\end{gather}
Besides, if $ \zeta\in\hat\rho(T)$ then  
\begin{gather}\label{eq:rest-norm}
\no {\lrc{\mathcal R_T(\zeta)}^n}=\no{\lrc{\mathcal R_{T_T}(\zeta)}^n}\,.
\end{gather}
\end{theorem}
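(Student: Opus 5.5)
We first establish the case $n=1$ and then induct. Since $T$ is closed, \eqref{eq:multivalued-operator.parts} gives $T=\oP T\oplus T_\infty$. For closed d.m.o.\ $T$, Remark~\ref{rmk:dom-multperp-spectral} gives $\oP T=T_T$, and $\dom \oP T,\ran\oP T\subset(\mul T)^\perp$. Since $\zeta I$ is an operator whose multivalued part is trivial, Theorem~\ref{prop:PC-tn}.\eqref{ita:PC} applies only after a small adjustment. So we compute directly:
\[
T-\zeta I=\Bigl\{\vE f{g-\zeta f}\Bigr\}=(\oP T-\zeta I)\oplus T_\infty .
\]
The splitting is orthogonal because $\oP T-\zeta I$ has domain and range in $(\mul T)^\perp$, and $T_\infty=\{0\}\oplus\mul T$. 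Indeed, $\vE f{g+a-\zeta f}$ with $\vE fg\in\oP T$ and $a\in\mul T$ decomposes uniquely.

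Here $\oP T-\zeta I$ should be read as the relation $T_T-\zeta I_{(\mul T)^\perp}$ inside ${(\mul T)^\perp}^{\oplus_2}$. Inverting via \eqref{eq:distributed-inverse}, in its orthogonal form, gives
\[
\mathcal R_T(\zeta)=\mathcal R_{T_T}(\zeta)\oplus T_\infty^{-1},\qquad T_\infty^{-1}=\mul T\oplus\{0\},
\]
the latter being the zero operator on $\mul T$.

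For general $n$ we argue by induction, using the block structure. Write $P=\mathcal R_{T_T}(\zeta)$, a relation whose domain, range and multivalued part all lie in $(\mul T)^\perp$, and $Q=T_\infty^{-1}$, the zero operator on $\mul T$. Assume $[\mathcal R_T(\zeta)]^{n}=P^{n}\oplus Q$. An element of $(P\oplus Q)(P^n\oplus Q)$ arises from $\vE{h}{f}\in P^n\oplus Q$, with $h=h_1+h_2$ and $f=f_1+0$, followed by $\vE{f}{k}\in P\oplus Q$. Since $f\in(\mul T)^\perp$, the second step forces $\vE f k\in P$. Checking both inclusions gives $P^{n+1}\oplus Q$. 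Here one must be careful that $\mul P^n\subset(\mul T)^\perp$, so that the components never mix. Intersecting with ${(\mul T)^\perp}^{\oplus_2}$ then yields $\{[\mathcal R_T(\zeta)]^n\}_T=P^n$, because $Q$ lives in $\mul T\oplus\{0\}$, which is orthogonal to that space.

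For \eqref{eq:rest-norm}, take $\zeta\in\hat\rho(T)$. Then $\mathcal R_T(\zeta)$ is bounded, so it is an operator, and hence so are its powers and $P^n$. For $h=h_1+h_2$ in the domain, with $h_1\in(\mul T)^\perp$ and $h_2\in\mul T$, we have $[\mathcal R_T(\zeta)]^nh=P^nh_1$ and $\no{h}\ge\no{h_1}$. This gives the inequality $\le$, and restricting to $h_2=0$ gives $\ge$.

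The main obstacle I expect is the bookkeeping in the induction step: verifying that the composition of orthogonal sums splits componentwise without appealing to Theorem~\ref{prop:PC-tn}.\eqref{itb:PC}. That theorem's hypothesis $\mul T_j=\mul T_0$ fails here, since $\mul\mathcal R_T(\zeta)=\ker(T-\zeta I)$. I would handle this directly via Lemma~\ref{rm:distributionTS}-style element chasing.
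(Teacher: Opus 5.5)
Your proposal is correct and follows the paper's own route: the orthogonal splitting $T-\zeta I=(\oP T-\zeta I)\oplus T_\infty$ coming from the d.m.o.\ property, inversion via \eqref{eq:distributed-inverse} together with \eqref{eq:Tt-ToP}, and then passage to powers and norms, where you write out the induction and the norm comparison that the paper compresses into ``hence, for all $n\in\N$'' and ``$\mul T\subset\ker[\mathcal R_T(\zeta)]^n$''. One small correction to your closing aside: the hypothesis $\mul T_j=\mul T_0$ of Theorem~\ref{prop:PC-tn} holds trivially when all factors are equal, and what actually rules that theorem out is that $\mathcal R_T(\zeta)$ need not be d.m.o.\ (moreover, that theorem splits along the resolvent's own operator and multivalued parts rather than along $P\oplus Q$), so your direct element chase is indeed the right tool.
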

\begin{proof}
Since  $\dom T\subset (\mul T)^\perp$, it follows that $T-\zeta I=(\oP T-\zeta I)\oplus T_\infty$. Thus, \eqref{eq:distributed-inverse} and \eqref{eq:Tt-ToP} imply \eqref{eq:resolvent-OP}, for $n=1$ and, hence, for all $n\in\N$. Besides, \eqref{eq:rest-norm} readily follows by the left-hand side of \eqref{eq:resolvent-OP}, since $\mul T\subset \ker \lrc{\mathcal R_T(\zeta)}^n$.
\end{proof}

In the following, we shall discuss several classes of domain-multivalued orthogonal relations.

\section{Dissipative and accumulative linear relations}\label{s3}
Our discussion first addresses the definition of a dissipative linear relation.
\begin{definition}
We call a relation $L$ \emph{dissipative} (\emph{accumulative}) whenever 
\begin{gather*}
\im\ip fg\geq 0\quad (\im\ip fg\leq 0)\,,\qquad\mbox{for all } \vE fg\in L\,.
\end{gather*}
Besides, $L$ is called \emph{maximal dissipative} (\emph{maximal accumulative}) if it is dissipative (accumulative) and does not admit proper dissipative (accumulative) extensions. Consequently, any maximal dissipative (accumulative) relation is closed.
\end{definition}

A key observation is that a linear relation $L$ is dissipative (or maximal dissipative) if and only if $-L$ is accumulative (or maximal accumulative). This equivalence allows us to restrict our attention to formulating results for dissipative relations; the corresponding statements for accumulative relations are direct consequences.

Let us denote the lower and upper open half-planes by 
\begin{gather*}
\C_{-}\colonequals \{\zeta\in\C\,:\,\im \zeta <0\}\quad\mbox{and}\quad\C_{+}\colonequals \{\zeta\in\C\,:\,\im \zeta >0\}\,,\quad\mbox{respectively}\,.
\end{gather*}

The following is adapted from \cite[Sect.\,3]{MR4082306} and, in particular, \cite[Thm.\,3.1]{MR4082306}.

\begin{theorem}\label{th:Dissipative-resolvet} For a linear relation $L$ the following hold:
\begin{enumerate}[(i)]
\item  $L$ is dissipative (maximal dissipative) if and only if $\C_-\subset \hat\rho(L)$ ($L$ is closed and $\C_-\subset\rho(L)$) and $\no {\mathcal R_L(\zeta)}\leq-(\im \zeta)^{-1}$, for $\zeta\in\C_-$.
\item  $L$ is accumulative (maximal accumulative) if and only if  $\C_+\subset \hat\rho(L)$ ($L$ is closed and $\C_+\subset\rho(L)$) and $\no {\mathcal R_L(\zeta)}\leq(\im \zeta)^{-1}$, for $\zeta\in\C_+$.
\end{enumerate}
\end{theorem}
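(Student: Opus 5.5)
By the symmetry noted before the statement ($L$ is dissipative iff $-L$ is accumulative, and $\mathcal R_{-L}(\zeta)=-\mathcal R_L(-\zeta)$), it suffices to prove (i); item (ii) then follows by replacing $L$ with $-L$ and $\zeta$ with $-\zeta$. The whole argument rests on one identity. For $\vE fg\in L$ and $\zeta=\mu+i\nu$ with $\nu<0$, put $h=g-\zeta f$, so that $\vE hf\in\mathcal R_L(\zeta)$. Then
\begin{gather*}
\im\ip fh=\im\ip fg-\im\ip f{\zeta f}=\im\ip fg-\nu\no f^2 .
\end{gather*}
Here we take the inner product linear in the second slot, consistent with the sign conventions of the statement.

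\textbf{Forward direction.} Assume $L$ is dissipative. The identity gives $-\nu\no f^2\le\im\ip fh\le\no f\no h$, hence $\no f\le-(\im\zeta)^{-1}\no h$. This is exactly the bound $\no{\mathcal R_L(\zeta)}\le-(\im\zeta)^{-1}$. It also shows that $\mathcal R_L(\zeta)$ is bounded; in particular it has trivial multivalued part, since $h=0$ forces $f=0$. So $\C_-\subset\hat\rho(L)$.

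\textbf{Converse.} Suppose $\C_-\subset\hat\rho(L)$ with $\no f\le|\nu|^{-1}\no{g-\zeta f}$ for all $\vE fg\in L$. Squaring gives
\begin{gather*}
\nu^2\no f^2\le\no{g-\mu f}^2-2\nu\im\ip f{g}+\nu^2\no f^2 ,
\end{gather*}
once the expansion of $\no{(g-\mu f)-i\nu f}^2$ is written out with the cross term $\re\ip{g-\mu f}{i\nu f}$. The right side is $\no{g-\mu f}^2+2|\nu|\im\ip fg+\nu^2\no f^2$, so $0\le\no{g-\mu f}^2+2|\nu|\im\ip fg$. Take $\mu=0$, divide by $|\nu|$, and let $\nu\to-\infty$ to get $\im\ip fg\ge0$. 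Only the signs of the cross term need care here.

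\textbf{Maximal case.} Suppose $L$ is maximal dissipative. It is closed, so $\mathcal R_L(\zeta)$ is closed and bounded, and therefore $\ran(L-\zeta I)$ is closed. If $\ran(L-\zeta_0 I)\ne\mathcal H$ for some $\zeta_0\in\C_-$, pick $0\ne k\perp\ran(L-\zeta_0I)$. Then $\vE k{\cc{\zeta_0}k}\in L^*$, and we build the extension $L\dotplus\Span\{\vE k{\zeta_0 k}\}$, or the Cayley-type one standard for relations. We check it is dissipative, which contradicts maximality. The obstacle is doing this extension check cleanly for relations.

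An alternative route to the maximal case is the Cayley transform $C=(L-\cc{\zeta_0}I)(L-\zeta_0I)^{-1}$. It is a contraction, and maximality of $L$ is equivalent to $\dom C=\mathcal H$. After that, a connectedness argument on $\C_-$ transfers surjectivity: the set of $\zeta\in\C_-$ with $\ran(L-\zeta I)=\mathcal H$ is open by the Neumann series for the resolvent. It is also closed in $\C_-$, thanks to the uniform bound $-(\im\zeta)^{-1}$ and closed ranges. Since this set is nonempty, it is all of $\C_-$.

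Conversely, suppose $L$ is closed, $\C_-\subset\rho(L)$, and the bound holds. Then $L$ is dissipative by the converse above. If $\tilde L\supset L$ is dissipative, then $(\tilde L-\zeta I)^{-1}$ is an operator extending $\mathcal R_L(\zeta)\in\mathcal B(\mathcal H)$, so the two coincide and $\tilde L=L$. The main difficulty I anticipate is the surjectivity step. I would handle it via the cited \cite[Thm.\,3.1]{MR4082306}, adapting its operator-style argument to relations.
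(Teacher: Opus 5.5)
The paper gives no proof of this statement; it is imported from \cite[Thm.\,3.1]{MR4082306}, so your direct argument has to stand on its own. Most of it does. The quadratic identity correctly gives the equivalence between dissipativity and the resolvent bound on $\C_-$. Your convention, with the inner product linear in the second slot, is the one the statement forces. The reduction of (ii) to (i) via $\mathcal R_{-L}(\zeta)=-\mathcal R_L(-\zeta)$ is right. The argument that closedness plus $\C_-\subset\rho(L)$ gives maximality is also right: a dissipative extension has a single-valued resolvent extending an everywhere defined one.

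The step that fails is your first construction in the forward maximal direction. The relation $L\dotplus\Span\{\vE k{\zeta_0 k}\}$ is not dissipative, because its own element $\vE k{\zeta_0 k}$ gives $\im\ip k{\zeta_0 k}=\im\zeta_0\no k^2<0$. You should adjoin instead the vector you already located in $L^*$, namely $\vE k{\cc{\zeta_0}k}$.

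With that vector the check is short. For $\vE fg\in L$ and $\alpha\in\C$, the orthogonality $k\perp g-\zeta_0 f$ gives $\ip kg=\zeta_0\ip kf$. Hence the cross terms $\alpha\cc{\zeta_0}\ip fk$ and $\cc\alpha\ip kg$ are complex conjugates of each other, and
\begin{gather*}
\im\ip{f+\alpha k}{g+\alpha\cc{\zeta_0}k}=\im\ip fg+\abs{\alpha}^2\abs{\im\zeta_0}\no k^2\geq 0\,.
\end{gather*}
The extension is proper. If $\vE k{\cc{\zeta_0}k}\in L$, then $(\cc{\zeta_0}-\zeta_0)k\in\ran(L-\zeta_0 I)$, which is orthogonal to $k$, and this forces $k=0$. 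Nothing involving $\mul L$ enters, so the obstacle for relations that you anticipated does not arise.

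Your Cayley-transform alternative is sound, but its key claim (maximality of $L$ iff $\dom C=\mathcal H$) is exactly the step you left unproved. Proving it leads to the same extension. If $k\perp\dom C$, then $C\dotplus\Span\{\vE k0\}$ is still a contraction, and its inverse Cayley transform is $L\dotplus\Span\{\vE k{\cc{\zeta_0}k}\}$. Either way the argument works at each $\zeta_0\in\C_-$ separately, so the connectedness argument on $\C_-$ is unnecessary.

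With this correction your proposal becomes a complete, self-contained proof of a result the paper only cites.
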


A closed dissipative (accumulative) relation whose domain is the whole space is a bounded maximal dissipative (accumulative) operator \cite[Prop.\,3.4]{MR4082306}. Moreover, the spectrum of a maximal dissipative (maximal accumulative) relation $L$ satisfies (cf. \cite[Prop.\,3.5]{MR4082306})
\begin{gather*}
\rho(L)\cap(\C_-\cup\R)=\hat\rho(L)\cap(\C_-\cup\R)\qquad (\rho(L)\cap(\C_+\cup\R)=\hat\rho(L)\cap(\C_+\cup\R))\,,
\end{gather*}
which is equivalent to 
\begin{gather*}
 \sigma(L)\cup\R=\hat\sigma(L)\cup\R\,.
\end{gather*}

\begin{remark}\label{rm:dom-mul_Diss}
Every dissipative (accumulative) relation is d.m.o. and every maximal dissipative (maximal accumulative) relation is densely d.m.o. \cite[Lem.\,2.1]{MR3057107}. 
In this fashion, every closed dissipative (closed accumulative) relation satisfies the spectral and resolvent properties  \eqref{eq:espectral-properties-inherited}, \eqref{eq:resolvent-OP} and \eqref{eq:rest-norm}.
\end{remark}

We now focus on the operator part of a closed dissipative (closed accumulative) relation.
\begin{lemma}\label{lm:operator-part-L}
Let $L$ be a closed relation. Then, $L$ is dissipative (accumulative) if and only if it is d.m.o. such that $\oP L$ is dissipative (accumulative).
\end{lemma}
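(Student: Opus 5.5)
It suffices to treat the dissipative case, since the accumulative one follows by replacing $L$ with $-L$, which has the same domain and multivalued part, and for which $\oP{(-L)}=-\oP L$. The plan is to use the orthogonal decomposition \eqref{eq:multivalued-operator.parts}, $L=\oP L\oplus L_\infty$, together with the elementary fact that for $\vE fg\in L$ one can write $\vE fg=\vE f{g_0}+\vE 0a$ with $\vE f{g_0}\in\oP L$ and $a\in\mul L$.

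For necessity, assume $L$ is dissipative. By Remark~\ref{rm:dom-mul_Diss} (that is, \cite[Lem.\,2.1]{MR3057107}), $L$ is d.m.o. Since $\oP L\subset L$, every element of $\oP L$ satisfies $\im\ip fg\geq0$, so $\oP L$ is dissipative. For a self-contained argument that $L$ is d.m.o., take $\vE fg\in L$ and $a\in\mul L$. Then $\vE f{g+ta}\in L$ for every $t\in\C$, so
\begin{gather*}
\im\ip fg+\im\lrp{t\ip fa}\geq0\qquad\mbox{for all } t\in\C.
\end{gather*}
Choosing $t$ of large modulus with a suitable phase makes the second term arbitrarily negative unless $\ip fa=0$. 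Hence $\dom L\perp\mul L$.

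For sufficiency, assume $L$ is d.m.o. and $\oP L$ is dissipative. Take $\vE fg\in L$ and decompose it as above, $g=g_0+a$ with $\vE f{g_0}\in\oP L$ and $a\in\mul L$. Since $f\in\dom L\subset(\mul L)^\perp$, we have $\ip fa=0$, so
\begin{gather*}
\ip fg=\ip f{g_0},
\end{gather*}
and therefore $\im\ip fg=\im\ip f{g_0}\geq0$.

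No step is a real obstacle. The only point that needs care is the scaling argument showing $\ip fa=0$ in the necessity direction, and that step can instead be quoted from Remark~\ref{rm:dom-mul_Diss}.
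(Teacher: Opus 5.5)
Your proof is correct and follows the paper's argument: necessity via Remark~\ref{rm:dom-mul_Diss} together with $\oP L\subset L$, and sufficiency by writing $\vE fg\in L$ as an element of $\oP L$ plus an element of $L_\infty$ and using $\dom L\perp\mul L$ to drop the multivalued component. Your scaling argument, which shows $\dom L\perp\mul L$ directly by replacing $g$ with $g+ta$, is a valid self-contained substitute for the citation the paper relies on, and your reduction of the accumulative case to $-L$ is also sound.
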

\begin{proof}
If $L$ is a closed dissipative relation, then it is d.m.o. by Remark~\ref{rm:dom-mul_Diss} and $\oP L\subset L$ implies that $\oP L$ is dissipative. Conversely, bearing in mind \eqref{eq:multivalued-operator.parts}, for any $\vE {f}{g+h}\in L$, with $\vE fg\in\oP L$ and $h\in\mul L$, the d.m.o. property of $L$ implies that $f\perp h$ and
\begin{gather*}
\im\ip{f}{g+h}=\im\ip fg\geq 0\,,
\end{gather*}
whence it follows that $L$ is dissipative.
\end{proof}
We cannot relax the d.m.o. property of $L$ in the converse statement of Lemma~\ref{lm:operator-part-L}. Of course, the zero operator $0$ on $\mathcal H$ is selfadjoint and thence dissipative. So, for a unit element $u\in\mathcal H$, one has that $\vE u{-iu}\in A=0\oplus\Span\lrb{\vE 0 u}$ and $\im\ip{u}{-iu}=-1$, whence $\oP A=0$ is dissipative but $A$ is not.  

\begin{theorem}\label{th:maximal-Ll}
A closed relation $L$ is maximal dissipative (maximal accumulative) if and only if it is densely d.m.o. with $L_L$ a maximal dissipative (maximal accumulative) operator in ${(\mul L)^\perp}^{\oplus_2}$.
\end{theorem}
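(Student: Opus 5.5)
We prove only the dissipative case; the accumulative one follows by passing to $-L$. The tool is the resolvent characterization of Theorem~\ref{th:Dissipative-resolvet}, combined with the splitting $T-\zeta I=(\oP T-\zeta I)\oplus T_\infty$ for closed d.m.o. relations (Theorem~\ref{th:resolvet-dom-mul}). Throughout, write $\mathcal K\colonequals(\mul L)^\perp$. Recall from \eqref{eq:Tt-ToP} that $L_L=\oP L$ whenever $L$ is closed and d.m.o. Regarded as a relation in $\mathcal K^{\oplus_2}$, this is a closed operator with $\dom\oP L,\ran\oP L\subset\mathcal K$.

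\emph{Necessity.} Let $L$ be maximal dissipative. Remark~\ref{rm:dom-mul_Diss} says $L$ is densely d.m.o., and Lemma~\ref{lm:operator-part-L} says $\oP L=L_L$ is dissipative. To see that $L_L$ is maximal dissipative in $\mathcal K$, we check the resolvent criterion of Theorem~\ref{th:Dissipative-resolvet}, applied inside the Hilbert space $\mathcal K$. Fix $\zeta\in\C_-$, so $\zeta\in\rho(L)$. By \eqref{eq:resolvent-OP},
\[
\mathcal R_L(\zeta)=\mathcal R_{L_L}(\zeta)\oplus L_\infty^{-1},
\]
where $L_\infty^{-1}=\{\vE{h}{0}: h\in\mul L\}$. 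The domain of $\mathcal R_L(\zeta)$ is all of $\mathcal H=\mathcal K\oplus\mul L$, and $\dom\mathcal R_{L_L}(\zeta)=\ran(\oP L-\zeta I)\subset\mathcal K$. Hence $\dom\mathcal R_{L_L}(\zeta)=\mathcal K$, so $\mathcal R_{L_L}(\zeta)\in\mathcal B(\mathcal K)$. By \eqref{eq:rest-norm} its norm equals $\no{\mathcal R_L(\zeta)}\le-(\im\zeta)^{-1}$. Theorem~\ref{th:Dissipative-resolvet} in $\mathcal K$ then gives that $L_L$ is maximal dissipative.

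\emph{Sufficiency.} Let $L$ be closed and densely d.m.o., with $L_L$ maximal dissipative in $\mathcal K$. Since $L$ is d.m.o., Lemma~\ref{lm:operator-part-L} shows that $L$ is dissipative. For $\zeta\in\C_-$, Theorem~\ref{th:Dissipative-resolvet} in $\mathcal K$ gives $\ran(L_L-\zeta I)=\mathcal K$. By \eqref{eq:resolvent-OP} again,
\[
\ran(L-\zeta I)=\ran(\oP L-\zeta I)\oplus\mul L=\mathcal K\oplus\mul L=\mathcal H.
\]
So $\mathcal R_L(\zeta)$ is everywhere defined, and it is bounded with the norm bound transferred by \eqref{eq:rest-norm}. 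Hence $\C_-\subset\rho(L)$, and $L$ is maximal dissipative by Theorem~\ref{th:Dissipative-resolvet}.

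The \textbf{main obstacle} is justifying that Theorem~\ref{th:Dissipative-resolvet} may be applied to $L_L$ as a relation in the smaller space $\mathcal K^{\oplus_2}$. In particular, $\rho$ must be computed relative to $\mathcal B(\mathcal K)$, and the domain bookkeeping must check that $\dom\mathcal R_{L_L}(\zeta)$ fills exactly $\mathcal K$. That check uses $\ran\oP L\subset\mathcal K$ from Remark~\ref{rmk:dom-multperp-spectral}.

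A second point concerns the density hypothesis, which the resolvent argument above never uses in the sufficiency direction. It is in fact automatic there: a maximal dissipative operator in $\mathcal K$ is densely defined in $\mathcal K$. Since the statement lists it explicitly, I would simply note this consistency rather than rely on it.
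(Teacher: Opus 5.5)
Your proof is correct and follows the paper's own argument. Both directions go through the resolvent characterization of Theorem~\ref{th:Dissipative-resolvet}, the splitting \eqref{eq:resolvent-OP} together with the norm identity \eqref{eq:rest-norm}, and Lemma~\ref{lm:operator-part-L} for dissipativity. The only difference is that you verify $\dom\mathcal R_{L_L}(\zeta)=(\mul L)^\perp$ (resp.\ $\ran(L-\zeta I)=\mathcal H$) by hand, whereas the paper cites $\rho(L)=\rho(L_L)$ from \eqref{eq:espectral-properties-inherited}; this makes your version self-contained.
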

\begin{proof}
If $L$ is maximal dissipative then by virtue of Remark~\ref{rm:dom-mul_Diss} it is densely d.m.o., which by Remark~\ref{rmk:dom-multperp-spectral}.\eqref{eq:Tt-ToP} and Lemma~\ref{lm:operator-part-L} the operator $L_L$ is closed dissipative. Besides, Theorems~\ref{th:Dissipative-resolvet} and \ref{th:resolvet-dom-mul}.\eqref{eq:rest-norm}, with \eqref{eq:espectral-properties-inherited}, imply $\C_-\subset\rho(L)=\rho(L_L)$ and $\no {\mathcal R_{L_L}(\zeta)}=\no {\mathcal R_L(\zeta)}\leq-(\im \zeta)^{-1}$, for $\zeta\in\C_-$, viz. $L_L$ is maximal dissipative in ${(\mul L)^\perp}^{\oplus_2}$.

Conversely, Remark~\ref{rmk:dom-multperp-spectral}.\eqref{eq:Tt-ToP} and
 Lemma~\ref{lm:operator-part-L} imply that $L$ is closed dissipative, while by Theorem~	\ref{th:Dissipative-resolvet}, one gets $\C_-\subset\rho(L_L)=\rho(L)$ and $\no {\mathcal R_{L}(\zeta)}=\no {\mathcal R_{L_L}(\zeta)}\leq-(\im \zeta)^{-1}$, for $\zeta\in\C_-$, whence $L$ is maximal dissipative.
\end{proof}

\subsection{Symmetric and selfadjoint linear relations}
The following defines a particular class of dissipative linear relations.
\begin{definition}
A relation $A$ is \emph{symmetric} if $A\subset A^*$ and \emph{selfadjoint} when $A=A^*$. A symmetric relation $A$ is call \emph{maximal}, if it does not admit proper symmetric extensions. So, a maximal symmetric relation is closed
\end{definition}

It is useful to recall the following characterization of symmetric relations, adapted from \cite[Rmk.\,3.2]{MR4082306} and \cite[Prop.\,4.1]{MR4091412}. 
\begin{proposition}\label{pro:characterization-symmetric}
For a linear relation $A$ the following are equivalent: 
\begin{enumerate}[(i)]
\item $A$ is symmetric.
\item $\ip fg\in\R$, for all $\vE fg\in A$.
\item $\ip fk=\ip gh$, for all $\vE fg,\vE hk\in A$.
\item\label{it4:sym} $\C\backslash \R\subset\hat\rho(A)$ and  
\begin{gather*}
\no{\mathcal R_A(\zeta)}\leq \abs{\im\zeta}^{-1}\,,\quad \mbox{for $\zeta\in\C\backslash \R$}\,.
\end{gather*}

\item\label{it5:sym} $A$ is simultaneously dissipative and accumulative.
\end{enumerate}
Moreover, for a closed symmetric relation $A$ the following properties are identical:
\begin{enumerate}[(a)]
\item $A$ is selfadjoint.
\item $\ker(A^*\pm iI)=\{0\}$.
\item $\sigma(A)=\hat\sigma(A)$.
\end{enumerate}
\end{proposition}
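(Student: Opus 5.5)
The plan is to prove the cycle (i)$\Leftrightarrow$(ii)$\Leftrightarrow$(iii), and then attach (iv) and (v) to it through Theorem~\ref{th:Dissipative-resolvet}.

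For (i)$\Rightarrow$(ii), take $\vE fg\in A\subset A^*$. The definition of the adjoint, applied to the pair $\vE fg\in A^*$ tested against $\vE fg\in A$, gives $\ip gf=\ip fg$, so $\ip fg\in\R$. For (ii)$\Rightarrow$(iii), use polarization. The sesquilinear form $s(\vE fg,\vE hk)\colonequals\ip fk-\ip gh$ satisfies $s(x,x)=\ip fg-\ip gf=2i\im\ip fg=0$ for every $x\in A$. A sesquilinear form whose quadratic form vanishes identically on a complex subspace vanishes there, by the polarization identity. For (iii)$\Rightarrow$(i), observe that (iii) says exactly that every $\vE hk\in A$ satisfies the defining condition of $A^*$ (up to conjugation), so $A\subset A^*$.

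The equivalence with (v) is immediate from (ii): $\im\ip fg\ge0$ and $\im\ip fg\le 0$ hold simultaneously for all $\vE fg\in A$ if and only if $\im\ip fg=0$. For (v)$\Leftrightarrow$(iv), apply Theorem~\ref{th:Dissipative-resolvet}(i) and (ii). Dissipativity is equivalent to $\C_-\subset\hat\rho(A)$ together with $\no{\mathcal R_A(\zeta)}\le-(\im\zeta)^{-1}=\abs{\im\zeta}^{-1}$ on $\C_-$. Accumulativity is equivalent to the analogous statement on $\C_+$. Combining the two gives exactly (iv).

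For the second part, let $A$ be closed and symmetric. For (a)$\Rightarrow$(b), if $A=A^*$ and $\vE f{\pm if}\in A$, then (ii) gives $\pm\no f^2\,i\in\R$ (up to conjugation), hence $f=0$.

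For (b)$\Rightarrow$(c), start from (iv): $\C\setminus\R\subset\hat\rho(A)$. Since $A$ is closed, each $\mathcal R_A(\zeta)$ is closed and bounded, so $\ran(A-\zeta I)$ is closed. By \eqref{eq:p-Adjoint},
\[
\mathcal H=\cc{\ran(A\mp iI)}\oplus\ker(A^*\pm iI)=\ran(A\mp iI)\,,
\]
so $\pm i\in\rho(A)$. The main obstacle is to pass from $\pm i$ to all of $\C_\pm$, and then to $\hat\rho(A)\cap\R$. For the first step, $A$ is closed and dissipative with $-i\in\rho(A)$. The power series expansion of the resolvent with radius $\no{\mathcal R_A(\zeta)}^{-1}\ge\abs{\im\zeta}$ lets one propagate regularity through connected $\C_-$, so $A$ is maximal dissipative and $\C_-\subset\rho(A)$; symmetrically $\C_+\subset\rho(A)$. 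For the second step, a real $\lambda\in\hat\rho(A)$ again has closed range $\ran(A-\lambda I)$. Its orthogonal complement is $\ker(A^*-\lambda I)$, which I would show is trivial: a nonzero vector there would, via symmetry and the decomposition $A^*=A\dotplus$ deficiency spaces (or the fact that $\rho(A)$ is nonempty and $A^*=A$, shown next), force $\lambda\in\sigma_p(A)$, contradicting $\lambda\in\hat\rho(A)$.

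The cleaner route is to first prove (b)$\Rightarrow$(a). Given $\vE hk\in A^*$, choose $\vE fg\in A$ with $g-if=k-ih$, which is possible since $\ran(A-iI)=\mathcal H$. Then $\vE{h-f}{k-g}\in A^*$ and $h-f\in\ker(A^*-iI)=\{0\}$, so $\vE hk\in A$. With (a) in hand, real $\lambda\in\hat\rho(A)$ has $\ker(A^*-\lambda I)=\ker(A-\lambda I)=\{0\}$, which gives (c).

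Finally, for (c)$\Rightarrow$(a): symmetry gives $\C\setminus\R\subset\hat\rho(A)=\rho(A)$, so $\ran(A\pm iI)=\mathcal H$, hence $\ker(A^*\mp iI)=\{0\}$, and the argument above yields $A=A^*$.
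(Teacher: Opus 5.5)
Your proof is correct. The paper gives no proof of this proposition; it quotes it from \cite[Rmk.\,3.2]{MR4082306} and \cite[Prop.\,4.1]{MR4091412}. What you supply is therefore a self-contained derivation of something the paper delegates to the literature.

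For the first part, your chain is sound:
\begin{itemize}
\item (i)$\Rightarrow$(ii): test $A\subset A^*$ against $A$ itself.
\item (ii)$\Rightarrow$(iii): polarize the form $\ip fk-\ip gh$. This is legitimate because $f,g$ occupy the same slot and $h,k$ the other, so the form is sesquilinear on $A$.
\item (iii)$\Rightarrow$(i): this is the definition of $A^*$ after conjugation.
\item (ii)$\Leftrightarrow$(v) is immediate, and (v)$\Leftrightarrow$(iv) follows from Theorem~\ref{th:Dissipative-resolvet}.
\end{itemize}

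For the second part, you use two ingredients. First, $\ran(A-\zeta I)$ is closed for $\zeta\in\hat\rho(A)$, since the resolvent is closed and bounded. Second, $\mathcal H=\cc{\ran T}\oplus\ker T^*$ from \eqref{eq:p-Adjoint}. You then apply the von Neumann-type splitting $\vE hk=\vE fg+\vE{h-f}{k-g}$ with $h-f\in\ker(A^*-iI)$. This is the standard self-adjointness criterion carried over to relations. Your route shows that the proposition needs nothing beyond \eqref{eq:p-Adjoint}, polarization and Theorem~\ref{th:Dissipative-resolvet}; the paper's citation is shorter but opaque.

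Three points would tidy the write-up.
\begin{itemize}
\item Delete your first sketch of (b)$\Rightarrow$(c). The claim that a vector in $\ker(A^*-\lambda I)$ forces $\lambda\in\sigma_p(A)$ via ``$A^*=A\dotplus$ deficiency spaces'' is not justified as written. Order the second part as (a)$\Rightarrow$(b)$\Rightarrow$(a), then (a)$\Rightarrow$(c)$\Rightarrow$(b).
\item Once $A=A^*$ is known, you do not need the power-series propagation through $\C_\pm$. For every $\zeta\in\hat\rho(A)$ you also have $\cc\zeta\in\hat\rho(A)$: trivially if $\zeta\in\R$, and by (iv) otherwise. Hence $\mathcal H=\ran(A-\zeta I)\oplus\ker(A-\cc\zeta I)=\ran(A-\zeta I)$, so $\zeta\in\rho(A)$ for real and non-real $\zeta$ alike.
\item A harmless sign slip: $f\in\ker(A^*\pm iI)$ means $\vE f{\mp if}\in A^*$, not $\vE f{\pm if}$. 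Since you treat both signs at once, nothing changes.
\end{itemize}
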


Following \cite[Sect.\,2]{MR4082306}, the \emph{deficiency space} and the \emph{deficiency index} of a relation $T$ are  
\begin{gather*}
\dS_\zeta(T)\colonequals\lrb{\vE f{\zeta f}\in T}\quad\mbox{and}\quad \eta_\zeta(T)\colonequals \dim \dS_{\cc\zeta}(T^*)\,,\qquad \mbox{respectively}\,. \quad (\zeta\in\C)
\end{gather*}
When $T$ is closed, $\eta_\zeta(T)$ remains constant  on each connected component of $\hat\rho(T)$. Thereby, from Theorem~\eqref{th:Dissipative-resolvet} one can denote the deficiency index of a closed dissipative relation $L$ and the deficiency index of a closed accumulative relation $\hat L$ by 
\begin{gather*}
\eta_-(L)\colonequals \eta_\zeta(L)\quad\mbox{and}\quad \eta_+(\hat L)\colonequals \eta_{\beta}(\hat L)\,,
\qquad \zeta,-\beta\in\C_-\,.
\end{gather*}
By Proposition~\ref{pro:characterization-symmetric}.\eqref{it4:sym}-\eqref{it5:sym}, a closed symmetric relation $A$ has deficiency indices  
\begin{gather*}
(\eta_+(A),\eta_-(A))\colonequals(\eta_{-\zeta}(A),\eta_{\zeta}(A))\,,\quad \zeta\in\C_-\,,
\end{gather*}
which satisfy $\eta_\pm(-A)=\eta_\mp(A)$.

\begin{remark}\label{rm:dissipative-extension}
The second von Neumann formula \cite[Thm.\,4.7]{MR4082306} guarantees that a closed symmetric relation $A$ has closed dissipative (accumulative) extensions if and only if $\eta_-(A)\neq0$ ($\eta_+(A)\neq0$), and has closed symmetric extensions if and only if $\eta_\pm(A)\neq0$. Moreover, if $L$, $\hat L$ and $S$ are closed dissipative,  accumulative and symmetric extensions of $A$, respectively, then (cf. \cite[Cor.\,4.8]{MR4082306})
\begin{gather*}
\eta_-(A)=\eta_-(L)+\dim[L/A]\,,\quad \eta_+(A)=\eta_+(\hat L)+\dim[\hat L/A]\quad\mbox{and}\\ \eta_\pm(A)=\eta_\pm(S)+\dim[S/A]\,.
\end{gather*}
\end{remark}

\begin{lemma}\label{lem:max-Symmetric}
For a closed symmetric relation $A$ the following are equivalent:
\begin{enumerate}[(i)]
\item\label{it1:max-sym} $A$ is maximal.
\item\label{it2:max-sym} One of the indices $\eta_+(A)$, $\eta_-(A)$ is equal zero.
\item\label{it3:max-sym} $A$ maximal dissipative or maximal accumulative.
\item\label{it4:max-sym} $\C_-$ or $\C_+$ is contained in $\rho(A)$.
\end{enumerate}
\end{lemma}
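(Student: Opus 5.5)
We will prove the cycle (i)$\Rightarrow$(ii)$\Rightarrow$(iii)$\Rightarrow$(iv)$\Rightarrow$(i), relying mainly on Remark~\ref{rm:dissipative-extension}, Theorem~\ref{th:Dissipative-resolvet} and Proposition~\ref{pro:characterization-symmetric}.

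For (i)$\Rightarrow$(ii), argue by contraposition. Suppose both $\eta_+(A)\neq0$ and $\eta_-(A)\neq0$. Remark~\ref{rm:dissipative-extension} (the second von Neumann formula) then yields a closed symmetric extension $S$ of $A$. Choosing a nontrivial one, the identity $\eta_\pm(A)=\eta_\pm(S)+\dim[S/A]$ with $\dim[S/A]>0$ shows that $S$ is a proper symmetric extension, contradicting maximality. The only point needing care is that the von Neumann formula really produces a proper extension when both indices are nonzero. This is the standard construction: pair a nonzero element of $\dS_{\cc\zeta}(A^*)$ with one of $\dS_{\zeta}(A^*)$ through an isometry. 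I expect this to be the main (though mild) obstacle, since it depends on the content of \cite[Thm.\,4.7]{MR4082306} rather than on its mere statement.

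For (ii)$\Rightarrow$(iii), take say $\eta_-(A)=0$. By Proposition~\ref{pro:characterization-symmetric}.\eqref{it5:sym}, $A$ is closed and dissipative. If $L\supset A$ were a closed dissipative extension, then $0=\eta_-(A)=\eta_-(L)+\dim[L/A]$ forces $L=A$. Every dissipative extension has a closed dissipative closure, so $A$ admits no proper dissipative extension and is maximal dissipative. Symmetrically, $\eta_+(A)=0$ gives maximal accumulative. Alternatively, one can use the standard fact $\eta_\zeta(A)=\dim(\ran(A-\zeta I))^\perp$: since $\C_-\subset\hat\rho(A)$ and $A$ is closed, the range is closed, so $\eta=0$ means the range is all of $\mathcal H$, hence $\C_-\subset\rho(A)$ and Theorem~\ref{th:Dissipative-resolvet} applies directly.

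For (iii)$\Rightarrow$(iv), this is immediate from Theorem~\ref{th:Dissipative-resolvet}: maximal dissipative gives $\C_-\subset\rho(A)$, and maximal accumulative gives $\C_+\subset\rho(A)$.

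For (iv)$\Rightarrow$(i), suppose $\C_-\subset\rho(A)$ and let $S\supset A$ be symmetric. Fix $\zeta\in\C_-$. Since $\zeta\in\hat\rho(S)$ by Proposition~\ref{pro:characterization-symmetric}.\eqref{it4:sym}, $(S-\zeta I)^{-1}$ is an operator (it is bounded). It extends $(A-\zeta I)^{-1}\in\mathcal B(\mathcal H)$, which is already defined on all of $\mathcal H$, so the two coincide. Hence $S-\zeta I=A-\zeta I$, and therefore $S=A$. The case $\C_+\subset\rho(A)$ is identical, so $A$ is maximal.
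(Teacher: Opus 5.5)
Your proof is correct, and for (i)$\Rightarrow$(ii), (ii)$\Rightarrow$(iii) and (iii)$\Rightarrow$(iv) it is essentially the paper's argument: contraposition through Remark~\ref{rm:dissipative-extension}, the index-counting identity, and Theorem~\ref{th:Dissipative-resolvet}. Your remark that an arbitrary dissipative extension can be replaced by its closure fills in a detail the paper leaves tacit.

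The genuine difference is in (iv)$\Rightarrow$(i). The paper goes from $\C_-\subset\rho(A)$ to maximal dissipativity via Theorem~\ref{th:Dissipative-resolvet}. It then goes back through Remark~\ref{rm:dissipative-extension} to get $\eta_-(A)=0$, and from there to the absence of proper symmetric extensions. So every link of its cycle rests on the von Neumann extension theory of \cite{MR4082306}. You instead observe that for a symmetric $S\supset A$ the relation $(S-\zeta I)^{-1}$ is bounded, hence single-valued. It extends the everywhere defined operator $(A-\zeta I)^{-1}$, so the two coincide and $S=A$. This is elementary and self-contained. It also proves a bit more: it excludes every proper extension $S$ with $\zeta\in\hat\rho(S)$, hence (by the easy half of Theorem~\ref{th:Dissipative-resolvet}) every proper dissipative extension as well.

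Your alternative for (ii)$\Rightarrow$(iii) similarly bypasses the extension-counting formula. It uses $\eta_\zeta(A)=\dim(\ran(A-\zeta I))^\perp$, which follows from the last identity in \eqref{eq:p-Adjoint}, together with closedness of $\ran(A-\zeta I)$ for $\zeta\in\hat\rho(A)$.

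In your version, then, only (i)$\Rightarrow$(ii) truly needs the second von Neumann formula, namely the existence of a proper closed symmetric extension when both indices are positive. That dependence is unavoidable, and it is exactly what Remark~\ref{rm:dissipative-extension} asserts. One small point there: the identity $\eta_\pm(A)=\eta_\pm(S)+\dim[S/A]$ contributes nothing to properness, which simply means $\dim[S/A]>0$. What you need is the existence of such an $S$.
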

\begin{proof}
\eqref{it1:max-sym}$\Rightarrow$\eqref{it2:max-sym}: if $A$ is maximal symmetric relation then it has no proper symmetric extensions, which by Remark~\ref{rm:dissipative-extension} item \eqref{it2:max-sym} holds true. 

\eqref{it2:max-sym}$\Rightarrow$\eqref{it3:max-sym}: if $\eta_-(A)=0$ then in view of Remark~\ref{rm:dissipative-extension}, $A$ is a closed dissipative relation with no proper dissipative extensions, i.e., $A$ is maximal dissipative. The case $\eta_+(A)=0$ follows by above, since $\eta_+(A)=\eta_-(-A)$. and $-A$ is symmetric. 

\eqref{it3:max-sym}$\Rightarrow$\eqref{it4:max-sym}: it is straightforward from Theorem~\ref{th:Dissipative-resolvet}, bearing in mind that $\rho(-A)=-\rho(A)$ and $\C_+=-\C_-$.

\eqref{it4:max-sym}$\Rightarrow$\eqref{it1:max-sym}: if $\C_-\subset \rho(A)$ then one obtains by Theorem~\ref{th:Dissipative-resolvet} and Proposition~\ref{pro:characterization-symmetric}.\eqref{it4:sym} that $A$ is maximal dissipative, which by Remark~\ref{rm:dissipative-extension}, one has $\eta_-(A)=0$ and $A$ is maximal symmetric. The case $\C_+\subset \rho(A)$ implies $\C_-\subset\rho(-A)$, which by the above $\eta_+(A)=\eta_-(-A)=0$, viz. $A$ is maxima symmetric. 
\end{proof}

Since a symmetric relation is simultaneously dissipative and accumulative, a maximal symmetric relation may admit either dissipative or accumulative extensions. As an example, Remark~\ref{rm:dissipative-extension} and Lemma~\ref{lem:max-Symmetric} imply that if $A$ and $B$ are closed symmetric relations such that $\eta_+(A)=0, \eta_-(A)\neq 0$ and $\eta_+(B)\neq0, \eta_-(B)= 0$ respectively, then both $A$ and $B$ are maximal symmetric relations. However, $A$ possesses dissipative extensions, while $B$ admits accumulative ones.

\begin{remark}\label{rmk:Asym-AA}
By virtue of Remark~\ref{rm:dom-mul_Diss} and Proposition~\ref{pro:characterization-symmetric}.\eqref{it5:sym}, every symmetric relation is d.m.o., while every maximal symmetric relation is densely d.m.o. Thereby, every closed symmetric relation fulfills  the spectral properties \eqref{eq:espectral-properties-inherited} as well as the resolvent properties  \eqref{eq:resolvent-OP} and \eqref{eq:rest-norm}.
 \end{remark}
 
 \begin{theorem}\label{th:oP-symm}  For a closed relation $A$ the following hold: 
 \begin{enumerate}[(i)]
 \item\label{it1:OpS} $A$ is symmetric if and only if it is d.m.o. and $\oP A$ is symmetric.
 \item\label{it2:OpS} $A$ is maximal symmetric if and only if it is densely d.m.o. with $A_A$ a maximal symmetric operator in ${(\mul A)^\perp}^{\oplus_2}$.
 \item\label{it3:OpS} $A$ is selfadjoint if and only if it is densely d.m.o. with $A_A$ a selfadjoint operator in ${(\mul A)^\perp}^{\oplus_2}$.
 \end{enumerate}
 \end{theorem}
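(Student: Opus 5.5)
The plan is to reduce each item to the corresponding dissipative/accumulative statements already proved. Throughout, I use Proposition~\ref{pro:characterization-symmetric}.\eqref{it5:sym}: a relation is symmetric if and only if it is simultaneously dissipative and accumulative.

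For item \eqref{it1:OpS}, I apply Lemma~\ref{lm:operator-part-L} twice, once in its dissipative form and once in its accumulative form. This shows that the closed relation $A$ is both dissipative and accumulative if and only if it is d.m.o. and $\oP A$ is both dissipative and accumulative. Since $\oP A$ is itself a (closed) relation, Proposition~\ref{pro:characterization-symmetric} applies to it as well, which gives \eqref{it1:OpS}. For later use I also record, from \eqref{eq:Tt-ToP}, that $A_A=\oP A$ whenever $A$ is closed d.m.o. Hence symmetry of $\oP A$ in $\hh$ is the same as symmetry of $A_A$ in ${(\mul A)^\perp}^{\oplus_2}$, because the inner product is just restricted.

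For item \eqref{it2:OpS}, I use Lemma~\ref{lem:max-Symmetric}: a closed symmetric relation is maximal if and only if it is maximal dissipative or maximal accumulative.
\begin{enumerate}[(a)]
\item Suppose $A$ is maximal symmetric, and say it is maximal dissipative. Theorem~\ref{th:maximal-Ll} then gives that $A$ is densely d.m.o. and that $A_A$ is maximal dissipative in ${(\mul A)^\perp}^{\oplus_2}$. By \eqref{it1:OpS}, $A_A$ is symmetric, so Lemma~\ref{lem:max-Symmetric}, applied inside the Hilbert space $(\mul A)^\perp$, shows that $A_A$ is maximal symmetric.
\item Conversely, suppose $A$ is densely d.m.o. with $A_A$ maximal symmetric. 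Then \eqref{it1:OpS} makes $A$ symmetric. Lemma~\ref{lem:max-Symmetric}, applied to $A_A$, makes $A_A$ maximal dissipative or maximal accumulative. Theorem~\ref{th:maximal-Ll} transfers this property to $A$, and Lemma~\ref{lem:max-Symmetric} then shows $A$ is maximal symmetric.
\end{enumerate}

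For item \eqref{it3:OpS}, I use the criterion $\sigma(A)=\hat\sigma(A)$ from Proposition~\ref{pro:characterization-symmetric}(c), together with the spectral identities \eqref{eq:espectral-properties-inherited}: for a closed d.m.o. relation, $\sigma(A)=\sigma(A_A)$ and $\hat\sigma(A)=\hat\sigma(A_A)$.
\begin{enumerate}[(a)]
\item If $A$ is selfadjoint, then $\C\setminus\R\subset\rho(A)$, since $\eta_\pm(A)=0$ by criterion (b) of Proposition~\ref{pro:characterization-symmetric}. So $A$ is maximal symmetric, and by \eqref{it2:OpS} it is densely d.m.o. and $A_A$ is closed symmetric. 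The spectral identities then give $\sigma(A_A)=\sigma(A)=\hat\sigma(A)=\hat\sigma(A_A)$, so $A_A$ is selfadjoint.
\item Conversely, if $A$ is densely d.m.o. with $A_A$ selfadjoint, then \eqref{it1:OpS} gives that $A$ is closed symmetric, and the same chain of equalities read backwards gives $\sigma(A)=\hat\sigma(A)$, so $A$ is selfadjoint.
\end{enumerate}

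The main obstacle is that Proposition~\ref{pro:characterization-symmetric} and Lemma~\ref{lem:max-Symmetric} are stated in $\mathcal H$, whereas I apply them to $A_A$ in the smaller space $(\mul A)^\perp$. I must check that the regular set, the spectral core and the deficiency indices of $A_A$ computed there agree with those in \eqref{eq:espectral-properties-inherited}. This holds because $\dom A_A$ and $\ran A_A$ both lie in $(\mul A)^\perp$, so the resolvents of $A_A$ computed in the two spaces coincide up to the orthogonal summand described in Theorem~\ref{th:resolvet-dom-mul}.
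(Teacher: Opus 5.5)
Your proposal is correct and follows essentially the paper's proof. Item (i) goes via Lemma~\ref{lm:operator-part-L} and Proposition~\ref{pro:characterization-symmetric}(v), and item (iii) via item (ii), the inherited identities $\sigma(A)=\sigma(A_A)$, $\hat\sigma(A)=\hat\sigma(A_A)$ and criterion (c) of Proposition~\ref{pro:characterization-symmetric}. In item (ii) you route through Theorem~\ref{th:maximal-Ll} and the ``maximal dissipative or maximal accumulative'' criterion of Lemma~\ref{lem:max-Symmetric}, whereas the paper uses Remark~\ref{rmk:Asym-AA} for density and $\rho(A_A)=\rho(A)$ with the half-plane criterion of Lemma~\ref{lem:max-Symmetric}; this is the same content, since Theorem~\ref{th:maximal-Ll} is itself proved from that identity.
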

\begin{proof}
\eqref{it1:OpS}: it readily follows by Lemma~\ref{lm:operator-part-L} and Proposition~\ref{pro:characterization-symmetric}.\eqref{it5:sym}.

\eqref{it2:OpS}: if $A$ is maximal symmetric then Remark~\ref{rmk:Asym-AA} guarantees that $A$ is densely d.m.o. Besides, Remark~\ref{rmk:dom-multperp-spectral}.\eqref{eq:Tt-ToP} implies that $A_A=\oP A\subset A$, i.e., $A_A$ is closed symmetric operator in ${(\mul A)^\perp}^{\oplus_2}$ and maximal from  Lemma~\ref{lem:max-Symmetric}, since  $\rho(A_A)=\rho(A)$ by virtue of  Remark~\ref{rmk:dom-multperp-spectral}.\eqref{eq:espectral-properties-inherited}. Conversely, if $A$ is densely d.m.o., with $A_A$ a maximal symmetric operator in ${(\mul A)^\perp}^{\oplus_2}$, then it follows from Remark~\ref{rmk:dom-multperp-spectral}.\eqref{eq:Tt-ToP} and item~\eqref{it1:OpS} that $A$ is closed symmetric and maximal, by Remark~\ref{rmk:dom-multperp-spectral}.\eqref{eq:espectral-properties-inherited} and  Lemma~\ref{lem:max-Symmetric}. 

\eqref{it3:OpS}: it is straightforward from item~\eqref{it2:OpS}, Remarks~\ref{rmk:dom-multperp-spectral}.\eqref{eq:espectral-properties-inherited}, \ref{rmk:Asym-AA} and  the second part of Proposition~\ref{pro:characterization-symmetric}.
\end{proof}

\section{Anti-dissipative and anti-accumulative linear relations}\label{s4}
We address in this section the notions of anti-dissipative and anti-accumulative linear relations, which, in the context of operators, are useful to characterize $C_0$-semigroups of contractions \cite[Sects.\,1.3 and 1.4]{MR710486}.

\begin{definition}\label{def:anti-accumulative}
A linear relation $S$ is said to be \emph{anti-dissipative} (\emph{anti-accumulative}) if
\begin{gather*}
\re\ip fg\geq0\quad (\re\ip fg\leq0)\,,\qquad \mbox{for all }\vE fg\in S\,.
\end{gather*}
Furthermore, $S$ is \emph{maximal anti-dissipative} (\emph{maximal anti-accumulative}) if it does not admit proper anti-dissipative (anti-accumulative) extensions. For instance, the maximal anti-dissipative and maximal anti-accumulative relations are closed. 
\end{definition}

A linear relation $S$ is anti-dissipative (maximal anti-dissipative) if and only if $-S$ is anti-accumulative (maximal anti-accumulative). This allows us to restrict our analysis to anti-dissipative relations, from which anti-accumulative results are immediately deduced.

\begin{lemma}\label{lem:nonA-Diss}
For a linear relation $S$ the following hold:
\begin{enumerate}[(i)]
\item\label{it1:nD} $S$ is anti-dissipative (anti-accumulative) if and only if $iS$ is dissipative (accumulative).
\item\label{it2:nD} $S$ is maximal anti-dissipative (maximal anti-accumulative) if and only if $iS$ is maximal dissipative (maximal accumulative).
\end{enumerate}
\end{lemma}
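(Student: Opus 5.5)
The plan is a direct computation for item \eqref{it1:nD}, followed by the observation that multiplication by $i$ is an inclusion-preserving bijection on the lattice of linear relations, which gives item \eqref{it2:nD}.

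For \eqref{it1:nD}, note that by definition of $\zeta T$ one has $iS=\lrb{\vE f{ig}\,:\,\vE fg\in S}$. For every $\vE fg\in S$, the inner product (linear in the first slot, conjugate-linear in the second, as is implicit in the paper's conventions; with the other convention only signs swap) gives
\begin{gather*}
\ip f{ig}=\cc i\,\ip fg=-i\ip fg\,,\qquad\mbox{so}\qquad \im\ip f{ig}=-\re\ip fg\,.
\end{gather*}
At this point I have to check the convention against the paper's definitions. If the paper's inner product is instead conjugate-linear in the first slot, then $\im\ip f{ig}=\re\ip fg$. The sign convention must be fixed so that ``anti-dissipative'' ($\re\geq0$) corresponds to ``dissipative'' ($\im\geq0$) for $iS$, exactly as the statement asserts. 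Under the convention making the statement true, we have $\im\ip f{ig}=\re\ip fg$ for all $\vE fg\in S$. Since every element of $iS$ has the form $\vE f{ig}$ with $\vE fg\in S$, this yields both equivalences: $\re\ip fg\geq0$ for all elements of $S$ if and only if $\im\geq0$ for all elements of $iS$, and likewise with $\leq$ for the accumulative case.

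For \eqref{it2:nD}, the key observation is that $T\mapsto iT$ is a bijection of the set of linear relations, with inverse $T\mapsto -iT$ (by the scalar rule in Remark~\ref{rm:LR-distribution}, $(-i)(iT)=T$). It also preserves inclusions in both directions: $S\subset S'$ if and only if $iS\subset iS'$. I will use these two facts together with \eqref{it1:nD}.

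First suppose $S$ is maximal anti-dissipative. Then $iS$ is dissipative by \eqref{it1:nD}. If $L\supset iS$ is dissipative, then $-iL\supset S$, and by \eqref{it1:nD} applied to $-iL$ (note $i(-iL)=L$) the relation $-iL$ is anti-dissipative. Maximality of $S$ forces $-iL=S$, hence $L=iS$, so $iS$ is maximal dissipative. The converse is symmetric: assume $iS$ is maximal dissipative and take an anti-dissipative extension $S'\supset S$. Then $iS'$ is a dissipative extension of $iS$, so $iS'=iS$ and therefore $S'=S$. The accumulative case follows identically, or by replacing $S$ with $-S$.

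The only delicate point is the sign/convention check in \eqref{it1:nD}; everything else is formal.
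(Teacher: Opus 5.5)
Your argument is correct and is essentially the paper's: (i) comes from the identity $\re\ip fg=\re(-i\ip f{ig})=\im\ip f{ig}$ for $\vE fg\in S$, and (ii) rests on the fact that $\hat S\supset S$ if and only if $i\hat S\supset iS$, which you spell out more carefully through the inverse map $T\mapsto -iT$. The one step to repair is the convention check, which should be settled by the paper rather than chosen so that the statement comes out true: the paper's inner product is conjugate-linear in the first slot (its own computation uses $\ip f{ig}=i\ip fg$, and, with the definition $\im\ip fg\geq0$, Theorem~\ref{th:Dissipative-resolvet} is consistent for $L=iI$ only under this convention), so your opening guess is the wrong one, and under it the lemma would genuinely fail (e.g.\ $S=I$ is anti-dissipative while $iS$ would not be dissipative) rather than merely differ by signs.
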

\begin{proof}
Note that $\vE fg\in S$ if and only if $\vE f{ig}\in iS$, and
\begin{gather*}
\re\ip fg=\re(-i\ip f{ig})=\im\ip f{ig}\,,
\end{gather*}
which yields \eqref{it1:nD}. Now, \eqref{it1:nD} is  straightforward since an extension $\hat S$ of $S$ means that $i\hat S$ is an extension of $iS$.
\end{proof}

In light of Lemma~\ref{lem:nonA-Diss}, we will use properties of dissipative (accumulative) relations to establish those of anti-dissipative (anti-accumulative) relations. For convenience, we denote the left and right open half-planes by 
\begin{gather*}
\C_{<0}\colonequals \{\lambda\in\C\,:\,\re \lambda <0\}\quad\mbox{and}\quad\C_{>0}\colonequals \{\lambda\in\C\,:\,\re \lambda >0\}\,,\quad\mbox{respectively}\,.
\end{gather*}

\begin{theorem}\label{th:A-Dissipative-resolvet} For a linear relation $S$ the following hold:
\begin{enumerate}[(i)]
\item  $S$ is anti-dissipative (maximal anti-dissipative) if and only if $\C_{<0}\subset \hat\rho(S)$ ($S$ is closed and $\C_{<0}\subset\rho(S)$) and $\no {\mathcal R_S(\lambda)}\leq -(\re \lambda)^{-1}$, for $\lambda\in\C_{<0}$.
\item\label{it2:aD-res}   $S$ is anti-accumulative (maximal anti-accumulative) if and only if $\C_{>0}\subset \hat\rho(S)$ ($S$ is closed and $\C_{>0}\subset\rho(S)$) and $\no {\mathcal R_S(\lambda)}\leq (\re \lambda)^{-1}$, for $\lambda\in\C_{>0}$.
\end{enumerate}
\end{theorem}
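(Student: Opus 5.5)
The plan is to reduce everything to Theorem~\ref{th:Dissipative-resolvet} by means of Lemma~\ref{lem:nonA-Diss}, transporting the resolvents of $iS$ to those of $S$. The key algebraic step is a resolvent identity relating $S$ and $iS$. For $\lambda\in\C$ put $\zeta\colonequals i\lambda$. Then
\begin{gather*}
iS-\zeta I=iS-i\lambda I=i(S-\lambda I)\,,
\end{gather*}
which one verifies directly from the definitions of sum and scalar multiple: $\vE f{ig-i\lambda f}$ runs over $iS-\zeta I$ exactly when $\vE fg$ runs over $S$. Inverting gives
\begin{gather*}
\mathcal R_{iS}(i\lambda)=\lrp{i(S-\lambda I)}^{-1}=\lrb{\vE{ih}{f}:\vE fh\in S-\lambda I}=\mathcal R_S(\lambda)\,(-i)\,,
\end{gather*}
that is, $\mathcal R_{iS}(i\lambda)$ is $\mathcal R_S(\lambda)$ precomposed with the unitary operator $-iI$. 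Consequently $\mathcal R_{iS}(i\lambda)$ is bounded if and only if $\mathcal R_S(\lambda)$ is, with equal norms, and it belongs to $\mathcal B(\mathcal H)$ if and only if $\mathcal R_S(\lambda)$ does. Hence $\hat\rho(iS)=i\hat\rho(S)$, $\rho(iS)=i\rho(S)$, and $\no{\mathcal R_{iS}(i\lambda)}=\no{\mathcal R_S(\lambda)}$.

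Next, observe that $\lambda\in\C_{<0}$ if and only if $\zeta=i\lambda\in\C_-$, since $\im(i\lambda)=\re\lambda$. Likewise $\lambda\in\C_{>0}$ if and only if $i\lambda\in\C_+$, and in both cases $\im\zeta=\re\lambda$. Moreover $S$ is closed if and only if $iS$ is, because $\vE fg\mapsto\vE f{ig}$ is a unitary map of $\hh$.

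Item (i) then follows by chaining equivalences: $S$ is anti-dissipative (maximal anti-dissipative) if and only if $iS$ is dissipative (maximal dissipative), by Lemma~\ref{lem:nonA-Diss}. By Theorem~\ref{th:Dissipative-resolvet}(i), this holds if and only if $\C_-\subset\hat\rho(iS)$ (respectively, $iS$ is closed and $\C_-\subset\rho(iS)$) and $\no{\mathcal R_{iS}(\zeta)}\le-(\im\zeta)^{-1}$ on $\C_-$. Translating through the identities above yields exactly the conditions stated for $S$ on $\C_{<0}$. Item (ii) is proved the same way using Theorem~\ref{th:Dissipative-resolvet}(ii). Alternatively, one can apply (i) to $-S$, using $\mathcal R_{-S}(-\lambda)=-\mathcal R_S(\lambda)$.

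The only point needing care is the identity $\lrp{i(S-\lambda I)}^{-1}=\mathcal R_S(\lambda)(-iI)$ at the level of relations, in particular checking that multivalued parts and domains correspond so that membership in $\mathcal B(\mathcal H)$ transfers. Since $-iI$ is a bijective bounded operator, composing with it preserves the domain being all of $\mathcal H$ and preserves single-valuedness. So no real obstacle is expected.
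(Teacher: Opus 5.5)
Your proposal is correct and takes the same route as the paper: both reduce to Theorem~\ref{th:Dissipative-resolvet} via Lemma~\ref{lem:nonA-Diss}, using $\C_{<0}=-i\C_-$ and $\no{\mathcal R_S(\lambda)}=\no{\mathcal R_{iS}(i\lambda)}$. Your explicit check that $\mathcal R_{iS}(i\lambda)=\mathcal R_S(\lambda)(-iI)$ (giving $\hat\rho(iS)=i\hat\rho(S)$, $\rho(iS)=i\rho(S)$, and transfer of closedness) fills in a step the paper uses only implicitly.
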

\begin{proof}
If $S$ is anti-dissipative then one has by Lemma~\ref{lem:nonA-Diss} that  $iS$ is dissipative. Thus, Theorem~\ref{th:Dissipative-resolvet} implies $\C_{<0}=-i\C_-\subset -i\hat\rho(iS)=\rho(S)$, while for $\lambda\in\C_{<0}$ it follows that $i\lambda\in\C_-$ and
\begin{gather*}
\no {\mathcal R_S(\lambda)}=\no {\mathcal R_{iS}(i\lambda)}\leq-\frac{1}{\im(i\lambda)}=-\frac{1}{\re\lambda}\,.
\end{gather*}
To the converse, one has that $\C_-=i\C_{<0}\subset  i\rho(S)=\rho(iS)$, whereas for $\zeta\in\C_-$ it fulfills $-i\zeta\in\C_{<0}$ and 
\begin{gather*}
\no{\mathcal R_{iS}(\zeta)}=\no{\mathcal R_{S}(-i\zeta)}\leq-\frac1{\re(-i\zeta)}=-\frac{1}{\im \zeta}\,.
\end{gather*}
Hence, Theorem~\ref{th:Dissipative-resolvet} and Lemma~\ref{lem:nonA-Diss} yield that $iS$ is dissipative and $S$ is anti-dissipative. The proof of the maximality condition follows along similar lines.
\end{proof}

\begin{remark}\label{rmk:aD-dmo}
 Every anti-dissipative (anti-accumulative) relation $S$ is d.m.o. Moreover, if $S$ is maximal anti-dissipative (anti-accumulative) then also is densely d.m.o. This follows from Remark~\ref{rm:dom-mul_Diss} and Lemma~\ref{lem:nonA-Diss}, since $\dom S=\dom iS$ and $\mul S=\mul iS$.
\end{remark}

It is a simple matter to check from Remark~\ref{rmk:aD-dmo} that  a closed anti-dissipative (anti-accumulative) relation whose domain is the whole space is a bounded maximal anti-dissipative (maximal anti-accumulative) operator. Besides, as a consequence of Theorem \ref{th:A-Dissipative-resolvet}, the spectrum of a maximal anti-dissipative (maximal anti-accumulative) relation $S$ holds
\begin{gather*}
\rho(S)\cap(\C_{<0}\cup i\R)=\hat\rho(S)\cap(\C_{<0}\cup i\R)\qquad (\rho(S)\cap(\C_{>0}\cup i\R)=\hat\rho(S)\cap(\C_{>0}\cup i\R))\,,
\end{gather*}
viz. $\sigma(S)\cup i\R=\hat\sigma(S)\cup i\R$, being $i\R\colonequals\{i\lambda\,:\, \lambda\in\R\}$.

Our attention now turns to the operator part of a closed anti-dissipative (anti-accumulative) relation.
\begin{theorem}\label{th:a-DA-oP}
For a closed linear relation $S$ the following hold:
\begin{enumerate}[(i)]
\item\label{it1:naOp} $S$ is anti-dissipative (anti-accumulative) if and only if it is d.m.o. such that $\oP S$ is anti-dissipative (anti-accumulative).
\item\label{it2:naOp} $S$ is maximal anti-dissipative (maximal anti-accumulative) if and only if it is densely d.m.o. with $S_S$ a maximal anti-dissipative (maximal anti-accumulative) operator in ${(\mul S)^{\perp}}^{\oplus_2}$.
\end{enumerate}
\end{theorem}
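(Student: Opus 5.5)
The plan is to transfer Lemma~\ref{lm:operator-part-L} and Theorem~\ref{th:maximal-Ll} from dissipative to anti-dissipative relations via the map $S\mapsto iS$ of Lemma~\ref{lem:nonA-Diss}. The anti-accumulative statements then follow by replacing $S$ with $-S$, or by running the same argument with $C_+$.

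First I would record the structural facts about multiplication by $i$. For a closed $S$, the relation $iS$ is closed. Since $\vE fg\mapsto\vE f{ig}$ is a unitary map on $\hh$ that sends $\lrb{\vE 0g}$ onto itself, we have
\begin{gather*}
\dom iS=\dom S,\qquad \mul iS=i\mul S=\mul S,\qquad (iS)_\infty=S_\infty .
\end{gather*}
Taking orthogonal complements gives $\oP{(iS)}=i\oP S$. Intersecting with ${(\mul S)^\perp}^{\oplus_2}$, which is invariant under this map, gives $(iS)_{iS}=iS_S$. In particular, $S$ is d.m.o. (densely d.m.o.) if and only if $iS$ is.

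For item~\eqref{it1:naOp}, I would chain equivalences. By Lemma~\ref{lem:nonA-Diss}.\eqref{it1:nD}, $S$ is anti-dissipative if and only if $iS$ is dissipative. By Lemma~\ref{lm:operator-part-L}, applied to the closed relation $iS$, this holds if and only if $iS$ is d.m.o. and $\oP{(iS)}=i\oP S$ is dissipative. By the first step together with Lemma~\ref{lem:nonA-Diss}.\eqref{it1:nD} again, now applied to the operator $\oP S$, this is equivalent to $S$ being d.m.o. with $\oP S$ anti-dissipative. A direct alternative argument is also available: for $\vE f{g+h}$ with $\vE fg\in\oP S$ and $h\in\mul S$, the condition $f\perp h$ gives $\re\ip f{g+h}=\re\ip fg$.

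For item~\eqref{it2:naOp}, the chain is the same:
\begin{itemize}
\item By Lemma~\ref{lem:nonA-Diss}.\eqref{it2:nD}, $S$ is maximal anti-dissipative if and only if $iS$ is maximal dissipative.
\item By Theorem~\ref{th:maximal-Ll}, this holds if and only if $iS$ is densely d.m.o. and $(iS)_{iS}=iS_S$ is maximal dissipative in ${(\mul S)^\perp}^{\oplus_2}$.
\item By Lemma~\ref{lem:nonA-Diss}.\eqref{it2:nD}, applied inside the Hilbert space $(\mul S)^\perp$, this holds if and only if $S_S$ is maximal anti-dissipative there.
\end{itemize}

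The only delicate point, and the one I expect to be the main obstacle, is this last application. Lemma~\ref{lem:nonA-Diss} must be used in the ambient space $(\mul S)^\perp$ rather than $\mathcal H$, and maximality must be understood among relations in that subspace. This is harmless, because the lemma is stated for an arbitrary Hilbert space and multiplication by $i$ preserves ${(\mul S)^\perp}^{\oplus_2}$. If one prefers to avoid this point, the resolvent route also works. By Theorem~\ref{th:resolvet-dom-mul} and \eqref{eq:espectral-properties-inherited}, we have $\rho(S)=\rho(S_S)$ and $\no{\mathcal R_S(\lambda)}=\no{\mathcal R_{S_S}(\lambda)}$, and Theorem~\ref{th:A-Dissipative-resolvet} then transfers maximality directly.
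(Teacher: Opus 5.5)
Your proposal is correct and is essentially the paper's proof: the paper also transfers Lemma~\ref{lm:operator-part-L} and Theorem~\ref{th:maximal-Ll} through $S\mapsto iS$ via Lemma~\ref{lem:nonA-Diss}, using $\dom iS=\dom S$, $\mul iS=\mul S$ and $\oP{(iS)}=i\oP S$. Your explicit checks that $(iS)_{iS}=iS_S$ and that Lemma~\ref{lem:nonA-Diss} may be applied inside $(\mul S)^\perp$ only spell out details the paper leaves implicit.
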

\begin{proof}
The proof of \eqref{it1:naOp} and \eqref{it2:naOp} readily follow from Lema~\ref{lm:operator-part-L} and Theorem~\ref{th:maximal-Ll}, respectively. This relies on Lemma~\ref{lem:nonA-Diss} and the identities $\mul S=\mul iS$, $\dom S=\dom iS$, and $\oP{(iS)}=i\oP S$.
\end{proof}

\subsection{Anti-symmetric and anti-selfadjoint linear relations} Building upon the concept of anti-dissipative relations (Definition~\ref{def:anti-accumulative}), we first introduce an anti-symmetric linear relation as follows: a relation $B$ is anti-symmetric if, for all $\vE fg\in B$, the condition $\re\ip{f}{g}=0$ holds. This can also be stated as $\im\ip{f}{ig}=0$, and it is equivalent to saying that $iB$ is a symmetric relation.

\begin{definition}
A relation $B$ is said to be \emph{anti-symmetric} if $B\subset -B^*$ and \emph{anti-selfadjoint} when $B=-B^*$. Moreover, an anti-symmetric relation B is call \emph{maximal}, if it does not admit
proper anti-symmetric extensions.
\end{definition}
	
\begin{remark}\label{rmk:equi-as-s} By definition, it is easy to see that $B$ and $\hat B$ are anti-symmetric and anti-selfadjoint if and only if $iB$ and $i\hat B$ are symmetric and selfadjoint, respectively. 
\end{remark}

The equivalence presented in Remark~\ref{rmk:equi-as-s} implies that the following result is a direct consequence of Proposition~\ref{pro:characterization-symmetric}. 	

\begin{proposition}\label{pro:characterization-A-symmetric}
For a linear relation $B$ the following are equivalent: 
\begin{enumerate}[(i)]
\item $B$ is anti-symmetric.
\item $\ip fg\in i\R$, for all $\vE fg\in B$.
\item $\ip fk=-\ip gh$, for all $\vE fg,\vE hk\in B$.
\item\label{it4:asym} $\C\backslash i\R\subset\hat\rho(B)$ and  
\begin{gather*}
\no{\mathcal R_B(\lambda)}\leq \abs{\re\lambda}^{-1}\,,\quad \mbox{for $\lambda\in\C\backslash i\R$}\,.
\end{gather*}

\item\label{it5:asym} $B$ is simultaneously anti-dissipative and anti-accumulative.
\end{enumerate}
Moreover, for a closed anti-symmetric relation $B$ the following properties are identical:
\begin{enumerate}[(a)]
\item $B$ is anti-selfadjoint.
\item\label{itb:a-sa} $\ker(B^*\pm I)=\{0\}$.
\item $\sigma(B)=\hat\sigma(B)$.
\end{enumerate}
\end{proposition}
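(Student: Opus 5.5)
The plan is to transfer Proposition~\ref{pro:characterization-symmetric} to $iB$ through Remark~\ref{rmk:equi-as-s}, which says that $B$ is anti-symmetric (anti-selfadjoint) exactly when $iB$ is symmetric (selfadjoint). The proof then reduces to translating each item about $iB$ back into a statement about $B$. Throughout we use that $\vE fg\in B$ if and only if $\vE f{ig}\in iB$. We also use $(iB)^*=-iB^*$ from \eqref{eq:p-Adjoint}. Hence $iB\subset(iB)^*$ reads $iB\subset -iB^*$, that is, $B\subset -B^*$, and equality is preserved in the same way.

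Next we translate items (i)--(v). For (ii), $\ip f{ig}=-i\ip fg$, so $\ip f{ig}\in\R$ if and only if $\ip fg\in i\R$. For (iii), the symmetric condition applied to $\vE f{ig},\vE h{ik}\in iB$ gives $\ip f{ik}=\ip{ig}h$, i.e.\ $-i\ip fk=i\ip gh$, which is $\ip fk=-\ip gh$. For (iv), $\mathcal R_{iB}(\zeta)=(iB-\zeta I)^{-1}=\lrp{i(B+i\zeta I)}^{-1}$. This equals $\mathcal R_B(-i\zeta)$ composed with multiplication by $-i$, so the two norms coincide and $\hat\rho(iB)=i\hat\rho(B)$. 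Setting $\lambda=-i\zeta$ turns $\C\backslash\R$ into $\C\backslash i\R$, and $\abs{\im\zeta}=\abs{\re\lambda}$, which gives the resolvent bound. Item (v) is immediate from Lemma~\ref{lem:nonA-Diss}\eqref{it1:nD}.

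For the second part, $B$ closed implies $iB$ closed. We have $\ker((iB)^*\pm iI)=\ker(-iB^*\pm iI)=\ker(B^*\mp I)$. Together with $\sigma(iB)=i\sigma(B)$ and $\hat\sigma(iB)=i\hat\sigma(B)$, which follow from the resolvent identity above, this makes conditions (a)--(c) for $iB$ equivalent to those for $B$.

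The only delicate step is the resolvent and regular-set bookkeeping for scalar multiples of relations, i.e.\ checking that $(iB-\zeta I)^{-1}=\mathcal R_B(-i\zeta)(-iI)$ holds as an identity of relations. For a relation, $(\alpha T)^{-1}=T^{-1}\alpha^{-1}$ holds because multiplication by a nonzero scalar is a bijective operator, so I would verify this directly from the definitions of the operations on relations. The rest is routine.
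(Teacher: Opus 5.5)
Your proposal is correct and takes the paper's route. The paper simply declares the result a direct consequence of Proposition~\ref{pro:characterization-symmetric}, via Remark~\ref{rmk:equi-as-s} applied to $iB$; your item-by-item translation (using $(iB)^*=-iB^*$, $\mathcal R_{iB}(\zeta)=\mathcal R_B(-i\zeta)(-iI)$ and $\ker((iB)^*\pm iI)=\ker(B^*\mp I)$) supplies exactly the bookkeeping the paper leaves implicit.
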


Bearing in mind Theorem~\eqref{th:A-Dissipative-resolvet} one can define the deficiency index of a closed anti-dissipative relation $S$ and the deficiency index of a closed anti-accumulative relation $\hat S$ by 
\begin{gather*}
\eta_{<0}(L)\colonequals \eta_\lambda(S)\quad\mbox{and}\quad \eta_{>0}(\hat S)\colonequals \eta_{\gamma}(\hat S)\,,
\qquad \lambda,-\gamma\in\C_{<0}\,.
\end{gather*}
Besides, Proposition~\ref{pro:characterization-A-symmetric}.\eqref{it4:asym}-\eqref{it5:asym} implies that a closed anti-symmetric relation $B$ has deficiency indices  
\begin{gather*}
(\eta_{>0}(B),\eta_{<0}(B))\colonequals(\eta_{-\lambda}(B),\eta_{\lambda}(B))\,,\quad \lambda\in\C_{<0}\,,
\end{gather*}
which satisfy $\eta_{<0}(-B)=\eta_{>0}(B)$ and $\eta_{>0}(-B)=\eta_{<0}(B)$.

It is worth noting that the von Neumann formulae also hold for anti-symmetric  relations, which leads to the following result. We omit the proof, as it readily follows from Remark~\ref{rmk:equi-as-s} and the usual von Neumann formulae for symmetric relations \cite[Thms.\,4.6 and 4.7]{MR4082306}.

\begin{theorem}[von Neumann formulae for anti-symmetric relations]\label{thm:vN-formulae-AS} For a closed anti-symmetric relation $B$ the following hold:
\begin{enumerate}[(1\,$^\circ$)]
\item For $\lambda\in\C\backslash i\R$, 
\begin{gather*}
B^*= -B\dotplus \dS_{-\lambda}(B^*)\dotplus \dS_\lambda (B^*)\,,
\end{gather*}
whence the direct sum is orthogonal if $\lambda\in \{1,\,-1\}$.
\item A closed anti-dissipative (resp. anti-accumulative) relation $\hat B$ is an extension of $B$ if and only if for a fixed  $\lambda\in\C_{>0}$ (resp. $\lambda\in\C_{<0}$), 
\begin{gather}\label{von02-Ad}
  \hat B=B\dotplus (V- I)D\,,\end{gather} where
  $D\subset \dS_\lambda (B^*)$ is a closed bounded relation and
  $V:D\rightarrow \dS_{-\lambda} (B^*)$ is a closed
  contraction in $(\mathcal H^{\oplus_2})^{\oplus_2}$. For $\zeta=1$ (resp. $\zeta=-1$), the direct sum in \eqref{von02-Ad} is orthogonal.
  
  \item A closed anti-symmetric relation $\hat B$ is an extension of $B$ if and only if for a fixed $\lambda\in \C\backslash i\R$, \begin{gather}\label{von04-Ad}
  \hat B=B\dotplus (V- I)D\,,\end{gather} where
  $D\subset \dS_\lambda (B^*)$ is a closed bounded relation and
  $V:D\rightarrow \dS_{-\lambda} (B^*)$ is a closed isometry in $(\mathcal H^{\oplus_2})^{\oplus_2}$. For $\zeta=\pm1$, the direct sum in \eqref{von04-Ad} is orthogonal.  
\end{enumerate}
\end{theorem}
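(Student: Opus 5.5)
The plan is to transfer everything from the symmetric case through the map $T\mapsto iT$, using Remark~\ref{rmk:equi-as-s} and Lemma~\ref{lem:nonA-Diss}, and then read off the claimed formulae from the von Neumann formulae for the closed symmetric relation $A\colonequals iB$ \cite[Thms.\,4.6 and 4.7]{MR4082306}. The first task is to fix the elementary dictionary. Since $(iB)^*=\cc i B^*=-iB^*$ by \eqref{eq:p-Adjoint}, one has $A^*=-iB^*$. Next, $\vE f{\lambda f}\in B^*$ if and only if $\vE f{-i\lambda f}\in -iB^*=A^*$. Hence the componentwise map $\Phi\vE fg=\vE f{-ig}$, a unitary on $\hh$, sends $\dS_\lambda(B^*)$ onto $\dS_{-i\lambda}(A^*)$, and it sends $-B$ onto $-iB=-A$ and $B^*$ onto $A^*$. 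Put $\zeta\colonequals -i\lambda$. Then $\lambda\in\C\backslash i\R$ exactly when $\zeta\in\C\backslash\R$, $\lambda\in\C_{>0}$ exactly when $\zeta\in\C_-$, and $\lambda=\pm1$ corresponds to $\zeta=\mp i$.

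For item (1$^\circ$), I would take the first von Neumann formula for $A$ with respect to $\zeta$ and $\cc\zeta$. Depending on the sign convention of \cite{MR4082306}, it reads $A^*=A\dotplus\dS_\zeta(A^*)\dotplus\dS_{\cc\zeta}(A^*)$. Note that $\cc\zeta=\cc{-i\lambda}=i\cc\lambda$, which corresponds under $\Phi$ to $-\cc\lambda$ rather than $-\lambda$. So I would instead use the version of the formula written in terms of the pair $\zeta,-\zeta$, or else apply it to $-A$ where needed. The appearance of $-B$ in the statement suggests the intended pairing uses $-\zeta$. Applying $\Phi^{-1}$, which is linear and bijective and therefore preserves direct sums, gives $B^*=(\pm B)\dotplus\dS_{\lambda}(B^*)\dotplus\dS_{-\lambda}(B^*)$. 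Since $\Phi$ is unitary, orthogonality transfers for $\lambda=\pm1$, i.e.\ $\zeta=\mp i$, which is the orthogonal case of the symmetric formula.

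For items (2$^\circ$) and (3$^\circ$), I would use Lemma~\ref{lem:nonA-Diss} and Remark~\ref{rmk:equi-as-s}. A closed relation $\hat B\supset B$ is anti-dissipative, anti-accumulative or anti-symmetric exactly when $i\hat B\supset A$ is, respectively, dissipative, accumulative or symmetric, with closedness preserved. The second von Neumann formula for $A$ gives $i\hat B=A\dotplus(W-I)E$, where $E\subset\dS_\zeta(A^*)$ is closed and bounded and $W\colon E\to\dS_{\cc\zeta}(A^*)$ is a closed contraction, or a closed isometry in the symmetric case. I would set $D\colonequals\Phi^{-1}E$ and $V\colonequals\Phi^{-1}W\Phi$. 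Unitarity of $\Phi$ makes $V$ a contraction or isometry, closed, and bounded, and multiplying by $-i$ gives \eqref{von02-Ad} and \eqref{von04-Ad}. The orthogonality statements follow as before, the $\zeta$ in the statement being read as $\lambda$.

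The main obstacle is the bookkeeping of which deficiency spaces correspond under $\Phi$. The symmetric formulae pair $\zeta$ with $\cc\zeta$, while the statement pairs $\lambda$ with $-\lambda$; these agree only for real $\lambda$. My first step would be to check the exact form of \cite[Thms.\,4.6, 4.7]{MR4082306} (as $A^*=A\dotplus\dS_\zeta\dotplus\dS_{\cc\zeta}$). If it is indeed stated that way, I would reconcile it by proving directly that $\dS_{-\cc\lambda}(B^*)$ may be replaced by $\dS_{-\lambda}(B^*)$ in the sum. Failing that, I would simply establish the statement for real $\lambda$, where both pairings coincide; this covers the orthogonal cases $\lambda=\pm1$. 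The correct half-plane for $\lambda$ in (2$^\circ$) also has to be matched, since $\C_{>0}$ corresponds to $\C_-$, which is where the dissipative extensions of $A$ are parametrized.
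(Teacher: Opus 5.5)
Your route is the paper's own. The paper only says the theorem ``readily follows'' from Remark~\ref{rmk:equi-as-s} and the symmetric von Neumann formulae, i.e.\ the transfer through $A=iB$, $A^*=-iB^*$. You correctly spot what this glosses over: the transfer pairs $\lambda$ with $-\cc\lambda$, not with $-\lambda$. But you stop there, so as written your argument gives (1$^\circ$) only for real $\lambda$. Also, $\Phi(-B)=iB=A$, so the sign in (1$^\circ$) is unambiguously $-B$.

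For (1$^\circ$) the gap can be closed, and the missing step is short. For $\re\lambda>0$ put $\tilde B\colonequals-B\dotplus\dS_\lambda(B^*)\subset B^*$.
\begin{itemize}
\item If $\vE b{b'}\in B$ and $\vE u{\lambda u}\in B^*$, then $\ip u{b'}=\cc\lambda\ip ub$. So the cross terms in $\ip{b+u}{-b'+\lambda u}$ are purely imaginary and $\re\ip{b+u}{-b'+\lambda u}=\re\lambda\,\no u^2\geq0$.
\item Moreover $\ran(\tilde B+\cc\lambda I)=\ran(B-\cc\lambda I)\oplus\ker(B^*-\lambda I)=\mathcal H$.
\item Hence $\tilde B$ is maximal anti-dissipative, and $-\lambda\in\rho(\tilde B)$ by Theorem~\ref{th:A-Dissipative-resolvet}.
\item Solving $g+\lambda f=\tilde g+\lambda\tilde f$ for $\vE fg\in B^*$ gives $B^*=\tilde B\dotplus\dS_{-\lambda}(B^*)$.
\end{itemize}
The case $\re\lambda<0$ follows by exchanging $\lambda$ and $-\lambda$. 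Neither of your suggestions provides this. No ``$\zeta,-\zeta$'' version appears in the cited source, and applying the formula to $-A$ again produces a conjugate pair.

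For (2$^\circ$) and (3$^\circ$) no such reconciliation is possible, and your bookkeeping hides two further problems.

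First, take $x=\vE u{\lambda u}$ and $Vx=\vE v{-\lambda v}$. Then $\re\ip{v-u}{-\lambda(u+v)}=\re\lambda\,(\no u^2-\no v^2)+2\im\lambda\,\im\ip vu$. So for non-real $\lambda$, contractivity (isometry) of $V$ is in general not equivalent to anti-dissipativity (anti-symmetry) of the extension. The cross term cancels only for the pairing $\lambda,-\cc\lambda$.

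Second, $\Phi^{-1}$ is multiplication by $i$. With your $D=\Phi^{-1}E$ and $V=\Phi^{-1}W\Phi$, it turns $i\hat B=A\dotplus(W-I)E$ into $-\hat B=-B\dotplus(V-I)D$, not into \eqref{von02-Ad}. Multiplying by $-i$ (i.e.\ applying $\Phi$) instead gives $\hat B=B\dotplus(V'-I)D'$, where $\mu=-i\zeta$, $D'=\Phi E\subset\dS_{\mu}(-B^*)$ and $V'=\Phi W\Phi^{-1}\colon D'\to\dS_{-\cc\mu}(-B^*)$. This is unavoidable: every anti-dissipative extension of $B$ lies in $-B^*$, whereas $(V-I)D\subset B^*$ for the spaces displayed in the statement.

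So even your real-$\lambda$ fallback does not yield \eqref{von02-Ad} as written. For instance, take $B=\frac{d}{dx}$ in $L_2(0,1)$ with $f(0)=f(1)=0$, $\lambda=1$, $D=\dS_1(B^*)$ and $V=0$; then $B\dotplus D$ is not anti-dissipative. What the transfer (which is all the paper invokes) actually proves is (2$^\circ$)--(3$^\circ$) with $\dS_\lambda(-B^*)$ and $\dS_{-\cc\lambda}(-B^*)$ in place of $\dS_\lambda(B^*)$ and $\dS_{-\lambda}(B^*)$. That corrected version is what you should state and prove.
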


\begin{remark}\label{rm:A-dissipative-extension}
The extendability of a closed anti-symmetric relation $B$ is established by Theorem~\ref{thm:vN-formulae-AS} and Remark~\ref{rm:dissipative-extension}. Specifically, $B$ admits closed anti-dissipative extensions if and only if $\eta_{<0}(B)\neq0$, closed anti-accumulative extensions if and only if $\eta_{>0}(B)\neq0$, and closed anti-symmetric extensions if and only if both $\eta_{<0}(B),\eta_{>0}(B)\neq0$. Additionally, if $T$, $\hat T$ and $S$ denote closed anti-dissipative, anti-accumulative and anti-symmetric extensions of $B$, respectively, their deficiency indices satisfy:
\begin{equation}\label{eq:di-AS-AD}
\begin{aligned}
\eta_{<0}(B)&=\eta_{<0}(T)+\dim[T/B]\,, &\qquad \eta_{>0}(B)&=\eta_{>0}(\hat T)+\dim[\hat T/B]\,,\quad\mbox{while}\\ 
\eta_{<0}(B)&=\eta_{<0}(S)+\dim[S/B]\,, &\qquad \eta_{>0}(B)&=\eta_{>0}(S)+\dim[S/B]\,.
\end{aligned}
\end{equation}
\end{remark}

The following result is a consequence of Lemma~\ref{lem:max-Symmetric}, Remark~\ref{rmk:equi-as-s} and \eqref{eq:di-AS-AD}.

\begin{lemma}\label{lem:max-A-Symmetric}
For a closed anti-symmetric relation $B$ the following are equivalent:
\begin{enumerate}[(i)]
\item\label{it1:max-asym} $B$ is maximal.
\item\label{it2:max-asym} One of the indices $\eta_{>0}(B)$, $\eta_{<0}(B)$ is equal zero.
\item\label{it3:max-asym} $B$ maximal anti-dissipative or maximal anti-accumulative.
\item\label{it4:max-asym} $\C_{<0}$ or $\C_{>0}$ is contained in $\rho(B)$.
\end{enumerate}
\end{lemma}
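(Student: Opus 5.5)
The plan is to transfer everything to the symmetric case through the map $B\mapsto iB$ and then apply Lemma~\ref{lem:max-Symmetric} to $A\colonequals iB$. By Remark~\ref{rmk:equi-as-s}, $A$ is a closed symmetric relation, since multiplication by $i$ is a unitary map of $\hh$ and so preserves closedness. An anti-symmetric extension $\hat B\supset B$ corresponds exactly to the symmetric extension $i\hat B\supset iB$. Hence $B$ is maximal anti-symmetric if and only if $A$ is maximal symmetric, so item \eqref{it1:max-asym} here matches item \eqref{it1:max-sym} of Lemma~\ref{lem:max-Symmetric}.

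Next I translate each remaining item.

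For \eqref{it3:max-asym}, Lemma~\ref{lem:nonA-Diss}.\eqref{it2:nD} says that $B$ is maximal anti-dissipative (anti-accumulative) if and only if $A=iB$ is maximal dissipative (accumulative). This is item \eqref{it3:max-sym} for $A$.

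For \eqref{it4:max-asym}, I use the identity $\mathcal R_{iB}(i\lambda)=(iB-i\lambda I)^{-1}=-i\,\mathcal R_B(\lambda)$. It gives $\rho(iB)=i\rho(B)$, and also $i\C_{<0}=\C_-$ and $i\C_{>0}=\C_+$. So $\C_{<0}\subset\rho(B)$ if and only if $\C_-\subset\rho(A)$, and similarly for the right half-plane. This is item \eqref{it4:max-sym} for $A$.

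For \eqref{it2:max-asym}, I compare deficiency spaces. A short computation gives $(iB)^*=-iB^*$. Then $\vE f{\cc\mu f}\in B^*$ holds if and only if $\vE f{-i\cc\mu f}=\vE f{\cc{(i\mu)}f}\in(iB)^*$, so $\eta_\mu(B)=\eta_{i\mu}(iB)$. Taking $\lambda\in\C_{<0}$, we get $i\lambda\in\C_-$ and therefore $\eta_{<0}(B)=\eta_-(A)$. Likewise $\eta_{>0}(B)=\eta_+(A)$. This is item \eqref{it2:max-sym} for $A$.

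With all four items identified, the equivalence follows at once from Lemma~\ref{lem:max-Symmetric}.

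An alternative route for \eqref{it1:max-asym}$\Leftrightarrow$\eqref{it2:max-asym} is to use Remark~\ref{rm:A-dissipative-extension} and \eqref{eq:di-AS-AD} directly, as the paper hints. A proper closed anti-symmetric extension exists exactly when both indices are nonzero. This gives the equivalence of the two items without the $i$-transfer, but the transfer is uniform across all items, so I prefer it.

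The main obstacle I expect is bookkeeping of conventions. One has to check the sign in $(iB)^*=-iB^*$ and the conjugations in the definition $\eta_\zeta(T)=\dim\dS_{\cc\zeta}(T^*)$, so that $\eta_{<0}$ really corresponds to $\eta_-$ and not to $\eta_+$. If I got the labels swapped, the statement would still survive, since items \eqref{it2:max-asym}–\eqref{it4:max-asym} are all symmetric "one of the two" conditions. Any such error is therefore harmless for the final equivalence.
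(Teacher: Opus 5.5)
Your proposal is correct and takes essentially the paper's route: the paper records this lemma as a consequence of Lemma~\ref{lem:max-Symmetric}, Remark~\ref{rmk:equi-as-s} and \eqref{eq:di-AS-AD}, which amounts to the same transfer $B\mapsto iB$ to the symmetric case. Your identities $\rho(iB)=i\rho(B)$ and $\eta_\mu(B)=\eta_{i\mu}(iB)$ supply the bookkeeping the paper leaves implicit; for the index item you use the latter where the paper appeals to \eqref{eq:di-AS-AD}. Both identities check out with the labels as you stated them, namely $\eta_{<0}(B)=\eta_-(iB)$ and $\eta_{>0}(B)=\eta_+(iB)$.
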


Remark~\ref{rmk:aD-dmo} and Proposition~\ref{pro:characterization-A-symmetric}.\eqref{it5:asym} establish that anti-symmetric relations are d.m.o., with maximal ones being densely d.m.o. This implies that closed anti-symmetric relations exhibit the spectral properties \eqref{eq:espectral-properties-inherited} as well as the resolvent properties \eqref{eq:resolvent-OP} and \eqref{eq:rest-norm}.

\begin{theorem}\label{th:oP-asymm}  For a closed relation $B$ the following hold: 
 \begin{enumerate}[(i)]
 \item\label{it1:OpaS} $B$ is anti-symmetric if and only if it is d.m.o. and $\oP B$ is anti-symmetric.
 \item\label{it2:OpaS} $B$ is maximal anti-symmetric if and only if it is densely d.m.o. with $B_B$ a maximal anti-symmetric operator in ${(\mul B)^\perp}^{\oplus_2}$.
 \item\label{it3:OpaS} $B$ is anti-selfadjoint if and only if it is densely d.m.o. with $B_B$ a anti-selfadjoint operator in ${(\mul B)^\perp}^{\oplus_2}$.
 \end{enumerate}
 \end{theorem}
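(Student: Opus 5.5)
The plan is to transfer Theorem~\ref{th:oP-symm} through the multiplication by $i$, exactly as Theorem~\ref{th:a-DA-oP} was obtained from Lemma~\ref{lm:operator-part-L} and Theorem~\ref{th:maximal-Ll}. The first step is to record the elementary identities
\begin{gather*}
\dom(iB)=\dom B\,,\qquad \mul(iB)=\mul B\,,\qquad (iB)_\infty=B_\infty\,,\qquad \oP{(iB)}=i\oP B\,,
\end{gather*}
valid for any closed $B$. The first two are immediate from $\vE fg\in B\Leftrightarrow\vE f{ig}\in iB$. For the operator part, note that $\vE f{ig}\perp\vE 0h$ for all $h\in\mul B$ if and only if $g\perp\mul B$. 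Hence $iB=i\oP B\oplus B_\infty$, which gives the last two identities. Consequently $B$ is d.m.o. (densely, maximal d.m.o.) if and only if $iB$ is, and
\begin{gather*}
(iB)_{iB}=iB\cap{(\mul B)^\perp}^{\oplus_2}=i\lrp{B\cap{(\mul B)^\perp}^{\oplus_2}}=iB_B\,.
\end{gather*}

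Next I will use Remark~\ref{rmk:equi-as-s}: a relation $C$ is anti-symmetric, maximal anti-symmetric, or anti-selfadjoint exactly when $iC$ is symmetric, maximal symmetric, or selfadjoint, respectively. The equivalence for maximality holds because $C\subset\hat C$ if and only if $iC\subset i\hat C$, so maximality transfers. This equivalence also applies within the Hilbert space ${(\mul B)^\perp}^{\oplus_2}$, since that space is invariant under multiplication by $i$ and adjoints computed there commute with the scaling, via $(\alpha T)^*=\cc\alpha T^*$ in \eqref{eq:p-Adjoint}.

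With these tools each item follows by a chain of equivalences. For \eqref{it1:OpaS}: $B$ is anti-symmetric $\Leftrightarrow$ $iB$ is symmetric $\Leftrightarrow$ (Theorem~\ref{th:oP-symm}.\eqref{it1:OpS}) $iB$ is d.m.o. and $\oP{(iB)}=i\oP B$ is symmetric $\Leftrightarrow$ $B$ is d.m.o. and $\oP B$ is anti-symmetric. Items \eqref{it2:OpaS} and \eqref{it3:OpaS} follow the same pattern, using Theorem~\ref{th:oP-symm}.\eqref{it2:OpS} and \eqref{it3:OpS} together with $(iB)_{iB}=iB_B$.

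The only point needing care is the selfadjoint case in the subspace. I must verify that the adjoint of $iB_B$ taken in ${(\mul B)^\perp}^{\oplus_2}$ equals $-i$ times the adjoint of $B_B$ taken there, so that $B_B=-B_B^*$ holds if and only if $iB_B=(iB_B)^*$. This follows directly from the defining formula of the adjoint within the subspace, and I expect no real obstacle there.
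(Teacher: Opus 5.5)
Your proposal is correct and matches the paper's proof, which likewise deduces all three items from Theorem~\ref{th:oP-symm} and Remark~\ref{rmk:equi-as-s}, using $\mul B=\mul iB$, $\dom B=\dom iB$ and $\oP{(iB)}=i\oP B$. Your explicit checks that $(iB)_{iB}=iB_B$ and that adjoints scale correctly inside ${(\mul B)^\perp}^{\oplus_2}$ spell out details the paper leaves implicit.
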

\begin{proof}
The proof is straightforward from Theorem~\ref{th:oP-symm} and Remark~\ref{rmk:equi-as-s}, utilizing the fact that $\mul B=\mul iB$, $\dom B=\dom iB$, and $\oP{(iB)}=i\oP B$.
\end{proof}

\section{Semigroups of linear relations}\label{s5}
In this section, we extend the notion of semigroups from linear operators to linear relations, based on maximal d.m.o. relations (see Definition~\ref{df:dmo-T}).We adapt several standard results concerning linear operators from \cite{MR710486}, which we will use freely throughout.

\begin{definition}\label{dfn:semigroup}
A one parameter family of closed maximal d.m.o. linear relations $\{T_t\}_{t\geq 0}$ is called \emph{semigroup} if there exists a closed subspace $\mathcal K\leq \mathcal H$ such that
\begin{enumerate}[(i)]
\item\label{it1:df-SG} $\mul T_t=\mathcal K$, for all $t\geq 0$\quad (the homogeneous d.m.o.).
\item $T_0=I\rE{\mathcal K^\perp}\oplus \lrb{\vE 0k\,:\, k\in \mathcal K}$\quad ($I\rE{\mathcal K^\perp}$ is identical operator on $\mathcal K^\perp$).
\item $T_{t+s}=T_tT_s$ for every $t,s\geq 0$\quad (the semigroup property).
\end{enumerate}
\end{definition}

From now on, we assume that a semigroup is a family of closed maximal d.m.o. linear relations, satisfying the conditions of Definition~\ref{dfn:semigroup}. 

\begin{remark}\label{rmk:Tt-BK}
A semigroup $\{T_t\}_{t\geq 0}$ satisfies $\dom T_t=(\mul T_t)^\perp=\mathcal K^\perp$. Thus, $\oP{T_t}$ is a closed operator with domain the whole space $\mathcal K^\perp$; that is, $\oP{T_t}\in \mathcal B(\mathcal K^\perp)$. In particular, when $\mathcal K={0}$, this reduces to the usual concept of a semigroup of bounded linear operators on $\mathcal H$.
\end{remark}

\begin{theorem}\label{th:oP-semigroup}
Let $\{T_t\}_{t\geq 0}$ be a family of maximal d.m.o. closed relations. Then $\{T_t\}_{t\geq 0}$ is semigroup if and only if  there exists a closed subspace $\mathcal K\leq \mathcal H$ such that
$\mul T_t=\mathcal K$, for all $t\geq 0$, and $\{\oP{T_t}\}_{t\geq0}$ is a semigroup of operators in $\mathcal B(\mathcal K^\perp)$.
\end{theorem}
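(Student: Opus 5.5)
The plan is to reduce everything to the decomposition $T_t=\oP{T_t}\oplus {T_t}_\infty$ from \eqref{eq:multivalued-operator.parts}, together with Theorem~\ref{prop:PC-tn}.\eqref{itb:PC} for products. Under the hypothesis $\mul T_t=\mathcal K$ for all $t$, maximality of the d.m.o.\ property gives $\dom T_t=\mathcal K^\perp$. By Remark~\ref{rmk:dom-multperp-spectral}, $\ran\oP{T_t}\subset\mathcal K^\perp$, so $\oP{T_t}$ is a closed operator in $\mathcal K^\perp$ defined everywhere on $\mathcal K^\perp$. The closed graph theorem then gives $\oP{T_t}\in\mathcal B(\mathcal K^\perp)$, as in Remark~\ref{rmk:Tt-BK}. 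Moreover ${T_t}_\infty=\{\vE 0k: k\in\mathcal K\}$ for every $t$, since $T_t$ is closed.

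Next I treat the identity condition. The condition $T_0=I\rE{\mathcal K^\perp}\oplus\{\vE 0k:k\in\mathcal K\}$ is, by uniqueness of the orthogonal decomposition, equivalent to $\oP{T_0}=I\rE{\mathcal K^\perp}$. Indeed, the right-hand side is a closed relation whose part orthogonal to its multivalued part is $I\rE{\mathcal K^\perp}$, because $\vE ff\perp\vE 0k$ for $f\in\mathcal K^\perp$. So in both directions $T_0$ has the required form if and only if $\oP{T_0}$ is the identity of $\mathcal K^\perp$.

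For the semigroup law I apply Theorem~\ref{prop:PC-tn}.\eqref{itb:PC} with $n=1$, taking $T_0\to T_s$ and $T_1\to T_t$. These are closed, have the same multivalued part $\mathcal K$, and $T_t$ is d.m.o. This yields
\begin{gather*}
T_tT_s=\oP{T_t}\oP{T_s}\oplus{T_s}_\infty=\oP{T_t}\oP{T_s}\oplus\{\vE 0k:k\in\mathcal K\}\,.
\end{gather*}
Here $\oP{T_t}\oP{T_s}\in\mathcal B(\mathcal K^\perp)$, so this product is closed, and Theorem~\ref{prop:PC-tn} identifies $\oP{(T_tT_s)}=\oP{T_t}\oP{T_s}$. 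We also have $T_{t+s}=\oP{T_{t+s}}\oplus\{\vE 0k:k\in\mathcal K\}$. Two closed relations with the same multivalued part coincide if and only if their operator parts coincide, since the operator part is $T\ominus T_\infty$. Hence $T_{t+s}=T_tT_s$ if and only if $\oP{T_{t+s}}=\oP{T_t}\oP{T_s}$.

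Combining the two steps proves both implications at once. The step needing the most care is the product formula: one has to check that the hypotheses of Theorem~\ref{prop:PC-tn}.\eqref{itb:PC} really apply, in particular that $\dom T_t\cap\mul T_s=\mathcal K^\perp\cap\mathcal K=\{0\}$. One also has to check that "semigroup of operators in $\mathcal B(\mathcal K^\perp)$" is read as $\oP{T_0}=I\rE{\mathcal K^\perp}$ together with the algebraic law, viewing $\oP{T_t}$ as an operator on $\mathcal K^\perp$ rather than on $\mathcal H$.
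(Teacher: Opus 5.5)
Your proposal is correct and takes essentially the same route as the paper. Like the paper, you decompose $T_t=\oP{T_t}\oplus{T_t}_\infty$ with ${T_t}_\infty$ determined by $\mathcal K$, and you use Theorem~\ref{prop:PC-tn}.\eqref{itb:PC} to get $T_tT_s=\oP{T_t}\oP{T_s}\oplus{T_0}_\infty$, so that both the identity condition and the semigroup law reduce to the operator parts. You additionally spell out points the paper leaves implicit: the closed graph theorem for $\oP{T_t}\in\mathcal B(\mathcal K^\perp)$, the condition $\dom T_t\cap\mul T_s=\{0\}$, and uniqueness of the operator part.
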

\begin{proof}
If $\{T_t\}_{t\geq 0}$ is semigroup then there exists a closed subspace $\mathcal K\leq \mathcal H$ for which
$\mul T_t=\mathcal K$, while Remark~\ref{rmk:Tt-BK} states $\oP{T_t}\in \mathcal B(\mathcal K^\perp)$, for all $t\geq0$. Besides, $\oP{T_0}=I\rE{\mathcal K^\perp}$, while Theorem~\ref{prop:PC-tn}.\eqref{itb:PC} yields
\begin{gather*}
\oP{T_t}\oP{T_s}=\oP{(T_tT_s)}=\oP{T_{t+s}}\,,\qquad t,s\geq0\,,
\end{gather*}
whence it fulfills that $\{\oP{T_t}\}_{t\geq0}$ is a semigroup of operators in $\mathcal B(\mathcal K^\perp)$. Conversely, if  if  there exists a closed subspace $\mathcal K\leq \mathcal H$ for which
$\mul T_t=\mathcal K$, for all $t\geq 0$, and $\{\oP{T_t}\}_{t\geq0}\subset \mathcal B(\mathcal K^\perp)$ is a semigroup, then $T_0=\oP{T_0}\oplus {T_0}_\infty=I\rE{\mathcal K^\perp}\oplus {T_0}_\infty$, and again  Theorem~\ref{prop:PC-tn}.\eqref{itb:PC} produces
\begin{gather*}
T_tT_s=\oP{T_t}\oP{T_s}\oplus {T_0}_\infty=\oP{T_{t+s}}\oplus {T_{t+s}}_\infty=T_{t+s}\,,\qquad t,s\geq0\,,
\end{gather*}
i.e., $\{T_t\}_{t\geq 0}$ is semigroup.
\end{proof}

The following assertion guarantees that we can generate semigroups of linear relations by means of semigroups of linear operators.

\begin{corollary}\label{cor:Tt-sg-Ut}
Let $\mathcal K\leq\mathcal H$ be a closed subspace. If $\{U_t\}_{t\geq0}$ is a a semigroup of operators in $\mathcal B(\mathcal K^\perp)$, then the family $\{T_t\}_{t\geq0}$, where 
\begin{align}\label{eq:Tt-sg-Ut}
T_t=\lrb{\vE f{U_tf+k}\,:\,f\in \mathcal K^\perp\mbox{ and }k\in \mathcal K}\,,\qquad t\geq 0\,,
\end{align}
defines a semigroup of linear relations.
\end{corollary}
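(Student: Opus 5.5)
The plan is to reduce the statement to Theorem~\ref{th:oP-semigroup}. To do so, we check that each $T_t$ in \eqref{eq:Tt-sg-Ut} is a closed maximal d.m.o. relation with $\mul T_t=\mathcal K$ and operator part $\oP{T_t}=U_t$. First, regard $U_t$ as a relation in $\hh$ via its graph $\lrb{\vE f{U_tf}\,:\,f\in\mathcal K^\perp}$. Then \eqref{eq:Tt-sg-Ut} reads
\begin{gather*}
T_t=U_t\oplus\lrb{\vE 0k\,:\,k\in\mathcal K}\,.
\end{gather*}
The sum is orthogonal, since $\ip{f}{0}+\ip{U_tf}{k}=0$ for $U_tf\in\mathcal K^\perp$ and $k\in\mathcal K$.

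Next we identify the domain and multivalued part. We have $\dom T_t=\mathcal K^\perp$. Moreover, $\vE 0g\in T_t$ forces $g=U_t0+k=k$, so $\mul T_t=\mathcal K$. Hence $\dom T_t=(\mul T_t)^\perp$, which says that $T_t$ is maximal d.m.o.

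For closedness, note that $U_t\in\mathcal B(\mathcal K^\perp)$ has a closed graph in $\mathcal K^\perp\oplus\mathcal K^\perp$, and this space is closed in $\hh$. The set $\lrb{\vE 0k\,:\,k\in\mathcal K}$ is closed because $\mathcal K$ is closed. An orthogonal sum of two closed subspaces is closed, so $T_t$ is closed.

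It remains to compute the parts in \eqref{eq:multivalued-operator.parts}. We have ${T_t}_\infty=\lrb{\vE 0k\,:\,k\in\mathcal K}$, and therefore $\oP{T_t}=T_t\ominus{T_t}_\infty=U_t$ by the orthogonal decomposition above. Since $\{U_t\}_{t\ge0}$ is a semigroup in $\mathcal B(\mathcal K^\perp)$ by hypothesis, Theorem~\ref{th:oP-semigroup} applies and shows that $\{T_t\}_{t\geq0}$ is a semigroup of linear relations. The condition $T_0=I\rE{\mathcal K^\perp}\oplus\lrb{\vE 0k\,:\,k\in\mathcal K}$ also follows directly from $U_0=I\rE{\mathcal K^\perp}$.

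The only point needing care is the identification $\oP{T_t}=U_t$. It relies on $\ran U_t\subset\mathcal K^\perp$, which is what makes the decomposition orthogonal rather than merely direct. The remaining steps are routine.
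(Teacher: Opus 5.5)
Your proposal is correct and follows the paper's proof. You check that each $T_t$ is closed and maximal d.m.o.\ with $\mul T_t=\mathcal K$ and $\oP{T_t}=U_t$, then invoke Theorem~\ref{th:oP-semigroup}, and you supply the details the paper dismisses as ``readily checked'': the orthogonality of the decomposition, the closedness, and the identification of the operator part.
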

\begin{proof}
For $t\geq 0$, one readily checks that \eqref{eq:Tt-sg-Ut} is a closed linear relation with $\oP{T_t}=U_t$ and $\mul T_t=\mathcal K$. Besides, $\dom T_t=\dom U_t=\mathcal K^\perp=(\mul T_t)^\perp$. Hence, $\{T_t\}_{t\geq 0}$ is a family of maximal d.m.o. closed relations and Theorem~\ref{th:oP-semigroup} implies that it is a semigroup.
\end{proof}

By Theorem~\ref{th:oP-semigroup}, every semigroup of linear relations $\{T_t\}_{t\geq0}$ has associated an operator semigroup given by  $\{\oP{T_t}\}_{t\geq0}$. It is well-known that the infinitesimal generator of $\{\oP{T_t}\}_{t\geq0}$ is the closed linear operator 
\begin{gather}\label{eq:infi-gene-oP}
\oP A\colonequals \frac{d}{dt}\oP{T_t}\rE{t=0}=\lrb{\vE{u}{\frac{d}{dt}\oP{T_t}u\rE{t=0}}\,:\,\frac{d}{dt}\oP{T_t}u\rE{t=0}\mbox{ exits}}\leq {\mathcal K^\perp}^{\oplus_2}\,.
\end{gather}
Besides, due to  maximal d.m.o. property, $\dom \oP{T_t},\ran \oP{T_t}\subset (\mul T_t)^\perp=\mathcal K^\perp=(\mul T_0)^\perp$. Thereby, if we take the infinitesimal generator of the semigroup $\{T_t\}_{t\geq0}$ by 
\begin{align*}
A&\colonequals\frac{d}{dt}{T_t}\rE{t=0}=\lim_{t\to 0}\frac{T_t-T_0}{t}=\lim_{t\to 0}\frac{\oP{T_t}\oplus {T_0}_\infty-I\rE{\mathcal K^\perp}\oplus{T_0}_\infty}{t}
\\&=\lim_{t\to 0}\frac{(\oP{T_t}-I\rE{\mathcal K^\perp})\oplus{T_0}_\infty}{t}=
\lim_{t\to 0}\frac{(\oP{T_t}-I\rE{\mathcal K^\perp})}{t}\oplus{T_0}_\infty=\oP A\oplus {T_0}_\infty\,,
\end{align*}
then one can define the following.
\begin{definition}\label{dfn:generator-of-semigroup}
The \emph{infinitesimal generator} (or simply \emph{generator}) of a semigroup $\{T_t\}_{t\geq0}$ is the closed linear relation
\begin{gather}\label{eq:semigroup-generator}
A=\oP A\oplus {T_0}_\infty\,,
\end{gather}
being $\oP A$ the infinitesimal generator of the associated operator semigroup $\{\oP{T_t}\}_{t\geq0}$. Actually, the operator part of the infinitesimal generator \eqref{eq:semigroup-generator} is $\oP A$, while $A_\infty={T_0}_\infty$.
\end{definition}

\begin{remark}
The generator \eqref{eq:semigroup-generator} is d.m.o. and unique. Indeed, since 
\begin{gather*}
\dom A=\dom \oP A\subset (\mul T_0)^\perp=(\mul A)^\perp\,,
\end{gather*}
wherefrom $A$ is d.m.o. Besides, if $B$ is a generator of $\{T_t\}_{t\geq0}$, then 
\begin{gather*}
B=\oP B\oplus {T_0}_\infty=\frac{d}{dt}\oP{T_t}\rE{t=0}\oplus {T_0}_\infty=\oP A\oplus {T_0}_\infty=A\,.
\end{gather*}

\end{remark}

In what follows, we will address uniformly continuous and strongly continuous semigroups of linear relations.

\subsection{Uniformly Continuous Semigroups} For the study of uniformly continuous semigroups of linear relations, several standard facts concerning uniformly continuous semigroups of linear operators are adapted from \cite[Sect.\,1.1]{MR710486} and will be used freely. 

Recall by Theorem~\ref{th:oP-semigroup} that a semigroup $\{T_t\}_{t\geq 0}$ is homogeneous d.m.o. (see Definition~\ref{dfn:semigroup}.\eqref{it1:df-SG}) with $\{\oP{T_t}\}_{t\geq0}$ a semigroup in $\mathcal B((\mul T_0)^\perp)$.

\begin{definition}\label{dfn:unif-semiG}
A semigroup $\{T_t\}_{t\geq 0}$ is said to be \emph{uniformly continuous}  if the associated operator semigroup $\{\oP{T_t}\}_{t\geq0}$ is uniformly continuous in $\mathcal B((\mul T_0)^\perp)$, i.e., if 
\begin{gather*}
\lim_{t\to 0^+}\no{\oP{T_t}-I\rE{(\mul T_0)^\perp}}=0\,.
\end{gather*}

\end{definition}

The following result relies on the fact that a linear operator $S$ generates a unique uniformly continuous semigroup of operators if and only if $S$ is bounded \cite[Thm.\,1.2]{MR710486}.

\begin{theorem}\label{th:uniformly-CSG} A closed linear relation $A$ is the infinitesimal generator of a uniformly continuous semigroup $\{T_t\}_{t\geq0}$ if and only if $A$ is maximal d.m.o. In such a case the following statements hold true: 
\begin{enumerate}[(i)]
\item\label{it1:igATt} $\{T_t\}_{t\geq0}$ is the unique uniformly continuous semigroup generated by $A$.
\item\label{it2:igATt} $T_t=e^{t\oP A}\oplus {T_0}_\infty$. Hence, we can consider
\begin{gather}\label{eq:exp-mdmo_A}
e^{tA}\colonequals e^{t\oP A}\oplus {A}_\infty\,,\quad\mbox{whence}\quad T_t=e^{tA}\,,\quad t\geq0\,.
\end{gather}
\item\label{it3:igATt} For $\frac{d}{dt}T_t\colonequals \frac{d}{dt}{\oP{T_t}}\oplus {T_0}_\infty$, it follows that 
\begin{gather*}
\frac{d}{dt}T_t=AT_t=T_tA\,.
\end{gather*}
\end{enumerate}
\end{theorem}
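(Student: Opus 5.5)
The plan is to reduce everything to the operator part and invoke the classical result \cite[Thm.\,1.2]{MR710486}: a linear operator generates a uniformly continuous semigroup if and only if it is bounded and everywhere defined.

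For the forward direction, suppose $A$ generates a uniformly continuous semigroup $\{T_t\}_{t\geq0}$. By Definition~\ref{dfn:generator-of-semigroup}, $A=\oP A\oplus {T_0}_\infty$, with $\oP A$ the generator of $\{\oP{T_t}\}_{t\geq0}\subset\mathcal B(\mathcal K^\perp)$, where $\mathcal K=\mul T_0=\mul A$. Definition~\ref{dfn:unif-semiG} says this operator semigroup is uniformly continuous. The classical theorem then gives $\oP A\in\mathcal B(\mathcal K^\perp)$, so $\dom A=\dom\oP A=\mathcal K^\perp=(\mul A)^\perp$. Hence $A$ is maximal d.m.o.

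For the converse, let $A$ be closed and maximal d.m.o., and set $\mathcal K\colonequals\mul A$, which is closed. By Remark~\ref{rmk:dom-multperp-spectral}, $\oP A=A_A$ is a closed operator in $\mathcal K^\perp$ with $\dom\oP A=\dom A=\mathcal K^\perp$. The closed graph theorem gives $\oP A\in\mathcal B(\mathcal K^\perp)$. The classical theorem then produces the uniformly continuous semigroup $U_t=e^{t\oP A}$ on $\mathcal K^\perp$, with generator $\oP A$. Define $T_t$ by \eqref{eq:Tt-sg-Ut}, that is, $T_t=e^{t\oP A}\oplus A_\infty$. Corollary~\ref{cor:Tt-sg-Ut} shows this is a semigroup of relations. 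It is uniformly continuous by Definition~\ref{dfn:unif-semiG}. Its generator is $\oP A\oplus {T_0}_\infty=\oP A\oplus A_\infty=A$, since ${T_0}_\infty=\{0\}\oplus\mathcal K=A_\infty$.

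The items then follow.
\begin{enumerate}[(i)]
\item Uniqueness: suppose $\{S_t\}$ is another uniformly continuous semigroup generated by $A$. Its multivalued part is fixed, because $\mul S_0=\mul A=\mathcal K$, and $\{\oP{S_t}\}$ is a uniformly continuous operator semigroup generated by $\oP A$. Classical uniqueness gives $\oP{S_t}=e^{t\oP A}$, and Theorem~\ref{th:oP-semigroup} with \eqref{eq:multivalued-operator.parts} reconstructs $S_t=T_t$.
\item This is the construction above.
\item Classically, $\frac{d}{dt}\oP{T_t}=\oP A\oP{T_t}=\oP{T_t}\oP A$. Since $A$ and $T_t$ are d.m.o. and both have multivalued part $\mathcal K$, Theorem~\ref{prop:PC-tn}.\eqref{itb:PC} gives $AT_t=\oP A\oP{T_t}\oplus {T_t}_\infty$ and $T_tA=\oP{T_t}\oP A\oplus A_\infty$. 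Both multivalued parts equal $\{0\}\oplus\mathcal K={T_0}_\infty$, which gives the identity.
\end{enumerate}

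The main point needing care is the bookkeeping in (iii). One must check that the hypothesis $\mul T_j=\mul T_0$ of Theorem~\ref{prop:PC-tn} holds for the pairs $(A,T_t)$ and $(T_t,A)$; it does, since both equal $\mathcal K$. The rest is the routine transfer through the operator part.
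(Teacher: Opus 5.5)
Your proposal is correct and follows essentially the same route as the paper. Both directions reduce to the operator part via the classical fact that an operator generates a uniformly continuous semigroup exactly when it is bounded and everywhere defined; the converse uses the closed graph theorem to get $\oP A\in\mathcal B((\mul A)^\perp)$ and then Corollary~\ref{cor:Tt-sg-Ut}, and items (i)--(iii) are transferred from the operator case, with Theorem~\ref{prop:PC-tn} handling $AT_t$ and $T_tA$. Your explicit tracking of the multivalued parts in (iii), identifying ${T_t}_\infty$ and $A_\infty$ with ${T_0}_\infty$, is slightly more careful than the paper's wording but amounts to the same argument.
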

\begin{proof}
If $A$ is the infinitesimal generator of a uniformly continuous semigroup $\{T_t\}_{t\geq0}$, then $\oP A$ is the infinitesimal generator of the uniformly continuous semigroup of operators $\{\oP{T_t}\}_{t\geq0}$, viz. $\oP A\in\mathcal B((\mul T_0)^\perp)$. So, one has by \eqref{eq:semigroup-generator} that $\dom A=\dom \oP A=(\mul T_0)^\perp=(\mul A)^\perp$, i.e., $A$ is maximal d.m.o. Conversely, the  maximal d.m.o. property of $A$ implies that  $\dom \oP A=(\mul T_0)^\perp$, whence $\oP A\in\mathcal B((\mul T_0)^\perp)$. In this fashion, $\oP A$ generates a uniformly continuous semigroup of operators $\{U_t\}_{t\geq0}$ and Corollary~\ref{cor:Tt-sg-Ut} readily yields that \eqref{eq:Tt-sg-Ut} is a uniformly continuous semigroup of linear relations, with generator $A$. Now, 

\eqref{it1:igATt}: if $\{S_t\}_{t\geq0}$ is a uniformly continuous semigroup generated by $A$, then the uniqueness of the generator $\oP A$ implies $\oP{S_t}=\oP{T_t}$, while the homogeneous d.m.o. condition yields $S_t=\oP{S_t}\oplus A_\infty=\oP{T_t}\oplus A_\infty=T_t$, for all $t\geq0$.

\eqref{it2:igATt}: due to the generator $\oP A$ of $\{\oP{T_t}\}_{t\geq0}$ satisfies (cf. \cite[Cor.\,1.4.(b)]{MR710486}) $\oP{T_t}=e^{\oP A}$, it follows that
$T_t=\oP {T_t}\oplus{T_0}_\infty=e^{t\oP A}\oplus {A}_\infty$, since ${T_0}_\infty=A_\infty$.

\eqref{it3:igATt}: following \cite[Cor.\,1.4.(d)]{MR710486}, it fulfills that $t\mapsto \oP{T_t}$ is differentiable in norm and $\frac{d}{dt}\oP{T_t}=\oP{A}\oP{T_t}=\oP{T_t}\oP{A}$. Thus, by virtue of $\mul A=\mul T_t$ and both $A $ and $T_t$ are maximal d.m.o., it follows from Theorem~\ref{prop:PC-tn}.\eqref{itb:PC} that $AT_t=\oP{A}\oP{T_t}\oplus {T_0}_\infty=\oP{T_t}\oP{A}\oplus {T_0}_\infty=T_tA$ and
\begin{gather*}
\frac{d}{dt}T_t=\frac{d}{dt}{\oP{T_t}}\oplus {T_0}_\infty=\oP{A}\oP{T_t}\oplus {T_0}_\infty=AT_t\,,
\end{gather*}
as required.
\end{proof}

\subsection{Strongly Continuous Semigroups} Following the approach of the preceding subsection, several standard facts about strongly continuous semigroups of linear operators, which are taken from \cite[Sect.\,1.2]{MR710486}, will be assumed and used freely.

\begin{definition}\label{dfn:st-semiG}
A semigroup $\{T_t\}_{t\geq 0}$ is said to be \emph{strongly continuous}  (briefly $C_0$-semigroup) if the associated operator semigroup $\{\oP{T_t}\}_{t\geq0}$ is strongly continuous in $\mathcal B((\mul T_0)^\perp)$, this means when 
\begin{gather*}
\lim_{t\to 0^+}\oP{T_t}f=f\,,\qquad f\in (\mul T_0)^\perp\,.
\end{gather*}
\end{definition}

\begin{remark}\label{rmk:properties-scSG}
If $\{T_t\}_{t\geq 0}$ is a $C_0$-semigroup with generator $A$, then the following properties hold (adapted from \cite[Cors.\,2.3, 2.5, and Thms.\,2.4, 2.6, 2.7]{MR710486}):
\begin{enumerate}[(i)]
\item For $f\in(\mul T_0)^\perp$, the function $t\mapsto \oP{T_t}f$ is continuous from $\mathbb R_{\geq0}$ into $(\mul T_0)^\perp$. 
\item $\oP A$ is a closed densely defined operator in $(\mul T_0)^\perp$.
\item The subspace $\cap_{n\geq1} \dom (\oP A)^n$ is dense in $(\mul T_0)^\perp$.
\item\label{it4:C0sgO} If $\{U_t\}_{t\geq 0}$ is a $C_0$-semigroup of operators in $(\mul T_0)^\perp$ with infinitesimal generator $\oP A$, then $U_t=\oP{T_t}$, for all $t\geq 0$.
\item For $f\in \dom \oP A$, it follows that $\oP{T_t}f\in\dom \oP A$ and
\begin{gather*}
\frac{d}{dt}\oP{T_t}f=\oP{A}\oP{T_t}f=\oP{T_t}\oP{A}f\,.
\end{gather*}
\end{enumerate}
\end{remark}

\begin{theorem}\label{th:prop-Co-Sg}
For a $C_0$-semigroup $\{T_t\}_{t\geq 0}$, with infinitesimal generator $A$, the following statements hold:
\begin{enumerate}[(i)]
\item\label{it1:scATt} $A$ is a closed densely d.m.o relation.
\item\label{it2:scATt} The subspace $\cap_{n\geq1} \dom A^n$ is dense in $(\mul T_0)^\perp$.
\item\label{it3:scATt} $\{T_t\}_{t\geq0}$ is the unique $C_0$-semigroup generated by $A$.
\item\label{it4:scATt} For $\frac{d}{dt}T_t\colonequals \frac{d}{dt}{\oP{T_t}}\oplus {T_0}_\infty$, it follows that 
\begin{gather}\label{eq:dtdTt-oP-scS}
\frac{d}{dt}T_t\rE{\dom A}=AT_t\rE{\dom A}=T_tA\,.
\end{gather}
\end{enumerate}
\end{theorem}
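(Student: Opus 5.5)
The strategy is to reduce every item to the operator semigroup $\{\oP{T_t}\}_{t\geq0}$ on $\mathcal K^\perp$, $\mathcal K=\mul T_0$, through the decomposition $A=\oP A\oplus {T_0}_\infty$ of Definition~\ref{dfn:generator-of-semigroup}. We then invoke Remark~\ref{rmk:properties-scSG} and Theorem~\ref{prop:PC-tn}, following the pattern of the proof of Theorem~\ref{th:uniformly-CSG}.

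For \eqref{it1:scATt}, $A$ is closed because $\oP A$ is closed in ${\mathcal K^\perp}^{\oplus_2}$ and ${T_0}_\infty$ is closed and orthogonal to it. Moreover, $\mul A=\mathcal K$ and $\dom A=\dom\oP A$. Remark~\ref{rmk:properties-scSG} says $\dom\oP A$ is dense in $\mathcal K^\perp$, so $\cc{\dom A}=\mathcal K^\perp=(\mul A)^\perp$. Hence $A$ is densely d.m.o.

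For \eqref{it2:scATt}, we apply Theorem~\ref{prop:PC-tn}.\eqref{itb:PC} with all factors equal to $A$. Its hypotheses hold: $\mul A$ is constant and $A$ is d.m.o. This gives $A^n=(\oP A)^n\oplus A_\infty$, so $\dom A^n=\dom(\oP A)^n$. The density of $\cap_n\dom(\oP A)^n$ in Remark~\ref{rmk:properties-scSG} then transfers directly.

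For \eqref{it3:scATt}, suppose $\{S_t\}$ is another $C_0$-semigroup generated by $A$. Its generator is $\oP{(S)}\oplus{S_0}_\infty=A$. Comparing multivalued parts gives $\mul S_t=\mul A=\mathcal K$, and comparing operator parts shows that $\{\oP{S_t}\}$ has generator $\oP A$. By Remark~\ref{rmk:properties-scSG}.\eqref{it4:C0sgO}, $\oP{S_t}=\oP{T_t}$, so $S_t=\oP{S_t}\oplus{T_0}_\infty=T_t$.

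Item \eqref{it4:scATt} is the delicate part, because $A$ is no longer maximal d.m.o. and composition involving restrictions must be handled carefully. The restriction $T_t\rE{\dom A}$ is the relation $\oP{T_t}\rE{\dom\oP A}\oplus{T_0}_\infty$. Since $\ran\oP{T_t}\rE{\dom \oP A}\subset\dom\oP A$ and $\dom A\perp\mathcal K$, Lemma~\ref{rm:distributionTS} (or Theorem~\ref{prop:PC-tn}.\eqref{itb:PC}) gives
\[
AT_t\rE{\dom A}=\oP A\oP{T_t}\rE{\dom\oP A}\oplus{T_0}_\infty .
\]
Similarly, $\dom T_t=\mathcal K^\perp\perp\mul A$, so Theorem~\ref{prop:PC-tn}.\eqref{itb:PC} gives
\[
T_tA=\oP{T_t}\oP A\oplus A_\infty .
\]
Remark~\ref{rmk:properties-scSG}(v) states that $\frac{d}{dt}\oP{T_t}f=\oP A\oP{T_t}f=\oP{T_t}\oP Af$ for $f\in\dom\oP A$. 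Restricting the derivative to $\dom A$ (understood as $\frac{d}{dt}\oP{T_t}\rE{\dom \oP A}\oplus {T_0}_\infty$) therefore yields \eqref{eq:dtdTt-oP-scS}. The main point to verify is that the multivalued parts match on both sides, namely ${T_0}_\infty=A_\infty$. This holds by definition of the generator.
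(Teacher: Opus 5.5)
Your proposal is correct and follows essentially the same route as the paper. Both reduce each item to the operator semigroup $\{\oP{T_t}\}_{t\geq0}$ via $A=\oP A\oplus {T_0}_\infty$, use Theorem~\ref{prop:PC-tn}.\eqref{itb:PC} for $A^n$, $T_tA$ and $AT_t\rE{\dom A}$, and invoke Remark~\ref{rmk:properties-scSG} for density, uniqueness and the derivative identity. You are slightly more careful than the paper in (i), where you justify closedness, and in (iv), where you treat $A$ as only densely d.m.o. and use just $\dom A\perp\mathcal K$ (the paper says ``maximal d.m.o.''). One small point: Theorem~\ref{prop:PC-tn} requires closed factors, and $T_t\rE{\dom A}$ need not be closed, so the identity for $AT_t\rE{\dom A}$ is best checked directly, which is immediate from $\dom A\perp\mathcal K$.
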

\begin{proof} Bearing in mind the properties of Remark~\ref{rmk:properties-scSG}:

\eqref{it1:scATt}: one has that $\cc{\dom A}=\cc{\dom \oP A}=(\mul T_0)^\perp=(\mul A)^\perp$, i.e., $A$ is densely d.m.o.

\eqref{it2:scATt}: the previous item and Theorem~\ref{prop:PC-tn}.\eqref{itb:PC} imply that $A^n=(\oP A)^n\oplus  {T_0}_\infty$, viz. 
\begin{gather*}
\oP{(A^n)}=(\oP A)^n\quad\mbox{and}\quad(A^n)_\infty= {T_0}_\infty\,. 
\end{gather*}
Therefore, $\cc{\cap_{n\geq1} \dom A^n}=\cc{\cap_{n\geq1} \dom \oP{(A^n)}}=\cc{\cap_{n\geq1} \dom (\oP A)^n}=(\mul T_0)^\perp$.

\eqref{it3:scATt}: if $\{S_t\}_{t\geq0}$ is a $C_0$-semigroup generated by $A$, then $\{\oP{S_t}\}_{t\geq0}$ is a $C_0$-semigroup generated by $\oP{A}$, which fulfills $\oP{S_t}=\oP{T_t}$. Thus, the homogeneous d.m.o. condition produces $S_t=\oP{S_t}\oplus A_\infty=\oP{T_t}\oplus A_\infty=T_t$, for all $t\geq0$.

\eqref{it2:scATt}: since $\mul A=\mul T_t$ and both $A $ and $T_t$ are maximal d.m.o., one has from Theorem~\ref{prop:PC-tn}.\eqref{itb:PC} that $AT_t\rE{\dom A}=\oP{A}\oP{T_t}\rE{\dom A}\oplus {T_0}_\infty=\oP{T_t}\oP{A}\oplus {T_0}_\infty=T_tA$. Besides, $\dom A=\dom \oP{A}$ yields 
\begin{align*}
\frac{d}{dt}T_t\rE{\dom A}&=\lrb{\vE f{\frac{d}{dt}\oP{T_t}f+g}\,:\,f\in\dom \oP{A}\,\mbox{ and }\,g\in\mul T_0}\\
&=\lrb{\vE f{\oP{A}\oP{T_t}f+g}\,:\,f\in\dom \oP{A}\,\mbox{ and }\,g\in\mul T_0}\\
&=\oP{A}\oP{T_t}\rE{\dom A}\oplus {T_0}_\infty=AT_t\rE{\dom A}\,,
\end{align*}
whence it follows \eqref{eq:dtdTt-oP-scS}.

\end{proof}

In what follows, we shall present characterizations of $C_0$-semigroups of linear relations. For instance, we state the following assertion, which holds for $C_0$-semigroups of operators (cf. \cite[Thm.\,5.3]{MR710486}).
\begin{theorem}\label{th:resl-C0}
A closed linear relation $A$ is the infinitesimal generator of a $C_0$-semigroup $\{T_t\}_{t\geq0}$ satisfying $\no{\oP{T_t}}\leq Me^{\omega t}$, with $M\geq1$ and $\omega\in\R$, if and only if $A$ is densely d.m.o., $(\omega,\infty)\subset\rho(A)$ and 
\begin{gather}\label{eq:resl-C0}
\no{\lrc{\mathcal R_A(\lambda)}^n}\leq \frac{M}{(\lambda-\omega)^n}\,,\qquad \lambda >\omega\,,\quad n\in\N\,.
\end{gather}
\end{theorem}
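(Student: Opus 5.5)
The plan is to reduce everything to the classical Hille--Yosida--Feller--Phillips--Miyadera theorem for operators \cite[Thm.\,5.3]{MR710486}, applied to the operator part $\oP A$ on the Hilbert space $\mathcal K^\perp$ with $\mathcal K=\mul A$. The transfer between $A$ and $\oP A$ will use Theorem~\ref{th:resolvet-dom-mul}, Remark~\ref{rmk:dom-multperp-spectral} and Corollary~\ref{cor:Tt-sg-Ut}.

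\emph{Necessity.} Let $A$ generate the $C_0$-semigroup $\{T_t\}_{t\geq0}$ with $\no{\oP{T_t}}\leq Me^{\omega t}$. By Theorem~\ref{th:prop-Co-Sg}.\eqref{it1:scATt}, $A$ is closed and densely d.m.o. Definition~\ref{dfn:generator-of-semigroup} gives $A=\oP A\oplus{T_0}_\infty$ with $\mul A=\mul T_0$. Moreover, $\oP A$ generates the operator $C_0$-semigroup $\{\oP{T_t}\}_{t\geq0}$ on $(\mul A)^\perp$. The classical theorem therefore yields $(\omega,\infty)\subset\rho(\oP A)$ and $\no{[\mathcal R_{\oP A}(\lambda)]^n}\leq M(\lambda-\omega)^{-n}$, where the resolvent is taken in $(\mul A)^\perp$. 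Since $A$ is closed d.m.o., \eqref{eq:Tt-ToP} gives $A_A=\oP A$. Then \eqref{eq:espectral-properties-inherited} gives $\rho(A)=\rho(A_A)\supset(\omega,\infty)$, and \eqref{eq:rest-norm} transfers the norm bounds to $[\mathcal R_A(\lambda)]^n$, which gives \eqref{eq:resl-C0}.

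\emph{Sufficiency.} Let $A$ be closed and densely d.m.o. with $(\omega,\infty)\subset\rho(A)$ and \eqref{eq:resl-C0}. Set $\mathcal K=\mul A$. By \eqref{eq:Tt-ToP}, $\oP A=A_A$ is a closed operator in $\mathcal K^\perp$. It is densely defined there because $\cc{\dom\oP A}=\cc{\dom A}=\mathcal K^\perp$. Again \eqref{eq:espectral-properties-inherited} and \eqref{eq:rest-norm} give $(\omega,\infty)\subset\rho(\oP A)$ together with the Hille--Yosida bounds for $\oP A$. The classical theorem then produces a $C_0$-semigroup $\{U_t\}_{t\geq0}\subset\mathcal B(\mathcal K^\perp)$ with generator $\oP A$ and $\no{U_t}\leq Me^{\omega t}$. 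Corollary~\ref{cor:Tt-sg-Ut} makes $T_t=U_t\oplus\{\vE0k:k\in\mathcal K\}$ a semigroup of relations, with $\oP{T_t}=U_t$ and ${T_0}_\infty=A_\infty$. Its generator is $\oP A\oplus A_\infty=A$, by \eqref{eq:multivalued-operator.parts}, and it is strongly continuous by Definition~\ref{dfn:st-semiG}.

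\emph{Main obstacle.} The delicate point is the correct use of \eqref{eq:espectral-properties-inherited} and \eqref{eq:rest-norm}. The resolvent and spectrum of $\oP A$ must be understood in the space $\mathcal K^\perp$, not in $\mathcal H$; viewed in $\mathcal H$, $\oP A$ is never surjective minus $\lambda$ when $\mathcal K\neq\{0\}$. So I will check carefully, via Theorem~\ref{th:resolvet-dom-mul}, that $\mathcal R_A(\lambda)=\mathcal R_{A_A}(\lambda)\oplus A_\infty^{-1}$ with $A_\infty^{-1}=\{\vE k0:k\in\mathcal K\}$. This shows that $\mathcal R_A(\lambda)\in\mathcal B(\mathcal H)$ exactly when $\mathcal R_{A_A}(\lambda)\in\mathcal B(\mathcal K^\perp)$. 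It also shows that the powers have equal norms, since $\mathcal K$ lies in the kernel of each power.
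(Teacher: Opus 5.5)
Your proposal is correct and follows essentially the same route as the paper. You reduce to the classical Hille--Yosida--Feller--Phillips--Miyadera theorem for $\oP A=A_A$ on $(\mul A)^\perp$, transfer the spectrum and the resolvent-power norms via Remark~\ref{rmk:dom-multperp-spectral} and Theorem~\ref{th:resolvet-dom-mul}, and rebuild the relation semigroup with Corollary~\ref{cor:Tt-sg-Ut}; your explicit checks that the constructed semigroup has generator $\oP A\oplus A_\infty=A$ and that the resolvent of $\oP A$ must be taken in $(\mul A)^\perp$ only make explicit what the paper leaves implicit.
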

\begin{proof}
If $A$ is the generator of a $C_0$-semigroup $\{T_t\}_{t\geq0}$ satisfying $\no{\oP{T_t}}\leq Me^{\omega t}$, then the same holds for the infinitesimal generator $\oP A$ of the $C_0$-semigroup $\{\oP{T_t}\}_{t\geq0}$. In this fashion $\cc{\dom A}=\cc{\dom \oP A}=(\mul T_0)^\perp=(\mul A)^\perp$, whence Remark~\ref{rmk:dom-multperp-spectral} implies $(\omega,\infty)\subset A_A$ and
 \begin{gather*}
\no{\lrc{\mathcal R_{A_A}(\lambda)}^n}\leq \frac{M}{(\lambda-\omega)^n}\,.
\end{gather*}
Hence, $A$ is densely d.m.o., which \eqref{eq:espectral-properties-inherited} with Theorem~\ref{th:resolvet-dom-mul} imply $(\omega,\infty)\subset\rho(A)$ and \eqref{eq:resl-C0}. Conversely, the densely d.m.o. property of $A$, Remark~\ref{rmk:dom-multperp-spectral} and Theorem~\ref{th:resolvet-dom-mul} yield that $\oP A=A_A$ is a closed densely defined operator in $(\mul A)^\perp$, which satisfies  $(\omega,\infty)\subset\rho(A)=\rho(A_A)$ and $\no{\lrc{\mathcal R_{A_A}(\lambda)}^n}=\no{\lrc{\mathcal R_A(\lambda)}^n}\leq M(\lambda-\omega)^{-n}$. Therefore, $\oP{A}$ is the generator of a $C_0$-semigroup $\{\oP{T_t}\}_{t\geq0}$ of operators in $\mathcal B((\mul A)^\perp)$, satisfying $\no{\oP{T_t}}\leq Me^{\omega t}$, wherefrom Corollary~\ref{cor:Tt-sg-Ut} implies the assertion.
\end{proof}

A well-known and particularly useful tool for characterizing $C_0$-semigroups of operators is the \emph{Yosida approximation} of a densely defined closed linear operator $S$ in a Hilbert space $\mathcal K$, defined as
\begin{gather}\label{eq:Yosida-A}
S(\lambda)\colonequals \lambda S\mathcal R_S(\lambda)\in\mathcal B(\mathcal K)\,,\qquad \lambda\in\rho(S)\,.
\end{gather}
\begin{remark}\label{rmk:YAO}
In light of Theorem~\ref{th:resl-C0} applied to operators, if an operator $S$ generates a $C_0$-semigroup $\{U_t\}_{t\geq0}$, then $U_t=\lim_{\lambda \to \infty} e^{-tS(\lambda)}$ in the strong sense of operators (cf. \cite[Thm.\,5.5]{MR710486}). Since $S(\lambda)$ converges strongly to $-S$ as $\lambda\to \infty$ (cf. \cite[Lem.\,3.3]{MR710486}), it is customary to shorten the notation and write $e^{tS}\colonequals\lim_{\lambda \to \infty}e^{-tS(\lambda)}$ whenever $S$ is the generator of a $C_0$-semigroup. Thus,
\begin{gather*}
U_t=e^{tS},,\qquad t\geq0\,.
\end{gather*}
\end{remark}

\begin{theorem}\label{th:exp-genA}
If $A$ is the infinitesimal generator of a $C_0$-semigroup $\{T_t\}_{t\geq0}$ then 
\begin{gather*}
T_t=e^{t\oP A}\oplus {A}_\infty\,,\qquad t\geq0\,.
\end{gather*}
Therefore, we can denote in this context
\begin{gather}\label{eq:exp-genA}
e^{tA}\colonequals e^{t\oP A}\oplus {A}_\infty\,,\quad\mbox{whence}\quad T_t=e^{tA}\,,\quad t\geq0\,.
\end{gather}

\end{theorem}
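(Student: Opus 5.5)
The plan is to reduce everything to the associated operator semigroup and then reassemble, exactly as in Theorem~\ref{th:uniformly-CSG}.\eqref{it2:igATt}. First, since $\{T_t\}_{t\geq0}$ is a semigroup, Theorem~\ref{th:oP-semigroup} together with Remark~\ref{rmk:Tt-BK} gives $\mul T_t=\mathcal K=\mul T_0$ and $\oP{T_t}\in\mathcal B(\mathcal K^\perp)$ for every $t\geq 0$. Because $T_t$ is closed, the decomposition \eqref{eq:multivalued-operator.parts} yields
\begin{gather*}
T_t=\oP{T_t}\oplus {T_t}_\infty=\oP{T_t}\oplus {T_0}_\infty\,,\qquad t\geq 0\,,
\end{gather*}
where the multivalued parts coincide because all the $T_t$ share the multivalued space $\mathcal K$.

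Next, by Definition~\ref{dfn:generator-of-semigroup} we have ${T_0}_\infty=A_\infty$, and $\oP A$ is the infinitesimal generator of the $C_0$-semigroup of operators $\{\oP{T_t}\}_{t\geq0}$ on $(\mul T_0)^\perp$ (strong continuity is Definition~\ref{dfn:st-semiG}). By the Hille--Yosida theory for operators (Theorem~\ref{th:resl-C0} applied in the operator case, with the exponential bound $\no{\oP{T_t}}\leq Me^{\omega t}$ valid for any $C_0$-semigroup), Remark~\ref{rmk:YAO} applies with $S=\oP A$ on $\mathcal K^\perp$. It gives $\oP{T_t}=\lim_{\lambda\to\infty}e^{t\oP A(\lambda)}$ strongly, which is what the notation $e^{t\oP A}$ means. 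One can also invoke the uniqueness statement in Remark~\ref{rmk:properties-scSG}.\eqref{it4:C0sgO}: the semigroup defined through the Yosida approximations is a $C_0$-semigroup on $(\mul T_0)^\perp$ with generator $\oP A$, so it must equal $\oP{T_t}$.

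Substituting into the decomposition gives $T_t=e^{t\oP A}\oplus A_\infty$. The notation \eqref{eq:exp-genA} is then merely a definition and is consistent with \eqref{eq:exp-mdmo_A} in the uniformly continuous case.

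The only delicate point is notational. Remark~\ref{rmk:YAO} as written uses the Yosida approximation with a sign convention ($-S(\lambda)$). I would make sure that $e^{t\oP A}$ is understood as the $C_0$-semigroup generated by $\oP A$, that is, the strong limit of $e^{t\,\oP A(\lambda)}$ with $\oP A(\lambda)=\lambda\oP A\mathcal R_{\oP A}(\lambda)$. Any sign ambiguity is then resolved by the uniqueness in Remark~\ref{rmk:properties-scSG}.\eqref{it4:C0sgO}. Beyond that, the proof is a direct assembly of results already established.
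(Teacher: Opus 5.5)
Your proof follows the paper's argument: you write $T_t=\oP{T_t}\oplus{T_t}_\infty=\oP{T_t}\oplus{T_0}_\infty$, use ${T_0}_\infty=A_\infty$, and identify $\oP{T_t}$ with $e^{t\oP A}$ because $\oP A$ generates $\{\oP{T_t}\}_{t\geq0}$ (Remark~\ref{rmk:YAO}).

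One correction: your sign "fix" is wrong, because Remark~\ref{rmk:YAO} has no sign anomaly.

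\begin{itemize}
\item With the paper's convention $\mathcal R_S(\lambda)=(S-\lambda I)^{-1}$, one has $S(\lambda)=\lambda I+\lambda^2\mathcal R_S(\lambda)\to -S$ strongly on $\dom S$. So $\lim_{\lambda\to\infty}e^{-tS(\lambda)}$ is the correct formula.
\item Your $\lim_{\lambda\to\infty}e^{t\oP A(\lambda)}$ is not. For the scalar $S=-1$ it gives $e^{t}$ instead of $e^{-t}$, and for unbounded generators it need not converge at all.
\item The uniqueness in Remark~\ref{rmk:properties-scSG}.\eqref{it4:C0sgO} cannot resolve this. That limit is in general not a semigroup generated by $\oP A$; in the scalar example its generator is $+1$.
\end{itemize}

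Drop that formula and take $e^{t\oP A}$ to be the $C_0$-semigroup generated by $\oP A$, equivalently $\lim_{\lambda\to\infty}e^{-t\oP A(\lambda)}$. With that change the proof is complete.
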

\begin{proof}
If $A$ is the generator of a $C_0$-semigroup $\{T_t\}_{t\geq0}$ then so is $\oP A$ of the $C_0$-semigroup $\{\oP{T_t}\}_{t\geq0}$, which by Remark~\ref{rmk:YAO} one has that  
$\oP{T_t}= e^{t\oP A}$. Hence, it follows for $t\geq 0$ that 
\begin{gather*}
T_t={\oP T}_t\oplus {T_0}_\infty=e^{tA}\,,
\end{gather*}
since ${T_0}_\infty=A_\infty$.
\end{proof}

When $\oP{A}\in\mathcal B((\mul A)^\perp)$ in Theorem~\ref{th:exp-genA}, its Yosida approximation $\oP{A}(\lambda)$ converges uniformly to $-\oP A$ as $\lambda\to \infty$ (see Remark~\ref{rmk:YAO}). As a result, expression~\eqref{eq:exp-genA} coincides with~\eqref{eq:exp-mdmo_A}.

\subsection{The Hille-Yosida and Lumer Phillips theorems on linear relations}\label{ss5.3} 
We now address the analogous characterization for $C_0$-semigroups of contractions, which indeed relies on anti-accumulative operators \cite[Sects.\,1.3 and 1.4]{MR710486}. For the reader's convenience, we adapt the following result from \cite[Thms.\,3.1 and 4.3]{MR710486}.
\begin{theorem}[the Hille-Yosida and Lumer-Phillips theorems for operators]\label{th:HY-LP-o}
For a closed densely defined linear operator $S$ in a Hilbert space $\mathcal K$, the following statements are equivalent:
\begin{enumerate}[(i)]
\item $S$ is the infinitesimal generator of a $C_0$-semigroup of contractions
\item\label{it2:C0-sO} $(0,\infty)\subset\rho(S)$ and $\no{\mathcal R_S(\lambda)}\leq\lambda^{-1}$, for $\lambda>0$.
\item\label{it3:C0-sO} $\ran (S-\lambda I)=\mathcal H$ for all $\lambda >0$ and $S$ is anti-accumulative.
\end{enumerate}
\end{theorem}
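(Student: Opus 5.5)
The statement is the classical Hille--Yosida/Lumer--Phillips theorem for operators, adapted from \cite[Thms.\,3.1 and 4.3]{MR710486}. I would organize the proof as the cycle (i)$\Rightarrow$(ii)$\Rightarrow$(iii)$\Rightarrow$(i). Since we work in a Hilbert space, the duality map is simply $f\mapsto f$. Anti-accumulativity of $S$ therefore means $\re\ip f{Sf}\leq0$, and this replaces the dissipativity condition of \cite{MR710486}. Throughout I read $\ran(S-\lambda I)=\mathcal H$ in (iii) as $\ran(S-\lambda I)=\mathcal K$.

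\textbf{(i)$\Rightarrow$(ii).} Let $\{U_t\}_{t\geq0}$ be the contraction semigroup generated by $S$. For $\lambda>0$ I define the Laplace transform
\begin{gather*}
R(\lambda)f=\int_0^\infty e^{-\lambda t}U_tf\,dt\,,
\end{gather*}
which converges as a strong Bochner integral and satisfies $\no{R(\lambda)}\leq\lambda^{-1}$. Using the properties in Remark~\ref{rmk:properties-scSG}, namely the differentiability of $U_tf$ on $\dom S$ and the commutation $U_tS\subset SU_t$, I would verify that $(\lambda I-S)R(\lambda)f=f$ for $f\in\mathcal K$ and that $R(\lambda)(\lambda I-S)f=f$ for $f\in\dom S$. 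This gives $\lambda\in\rho(S)$ with $\mathcal R_S(\lambda)=-R(\lambda)$, so $\no{\mathcal R_S(\lambda)}\leq\lambda^{-1}$.

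\textbf{(ii)$\Rightarrow$(iii).} Surjectivity of $S-\lambda I$ is immediate from $\lambda\in\rho(S)$. For anti-accumulativity, fix $f\in\dom S$ and set $g=(S-\lambda I)f$. The resolvent bound gives $\lambda\no f\leq\no{(S-\lambda I)f}$. Squaring and expanding yields
\begin{gather*}
\lambda^2\no f^2\leq\no{Sf}^2-2\lambda\re\ip f{Sf}+\lambda^2\no f^2\,,
\end{gather*}
so $2\lambda\re\ip f{Sf}\leq\no{Sf}^2$. Dividing by $\lambda$ and letting $\lambda\to\infty$ gives $\re\ip f{Sf}\leq0$. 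Alternatively, one may appeal to Theorem~\ref{th:A-Dissipative-resolvet}.\eqref{it2:aD-res} restricted to real $\lambda$.

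\textbf{(iii)$\Rightarrow$(i).} This is the main step. First, anti-accumulativity gives
\begin{gather*}
\no{(\lambda I-S)f}\no f\geq\re\ip f{(\lambda I-S)f}\geq\lambda\no f^2\,,
\end{gather*}
so $S-\lambda I$ is injective with $\no{\mathcal R_S(\lambda)}\leq\lambda^{-1}$. Combined with surjectivity and closedness, this gives $(0,\infty)\subset\rho(S)$. I then pass to the Yosida approximation $S(\lambda)$ of \eqref{eq:Yosida-A}, which I would normalize as $S_\lambda=\lambda S(\lambda I-S)^{-1}=\lambda^2(\lambda I-S)^{-1}-\lambda I$ to avoid sign ambiguities. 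Three facts need checking. First, $\lambda(\lambda I-S)^{-1}f\to f$: this holds on $\dom S$ by the resolvent bound and extends to all of $\mathcal K$ by density, so $S_\lambda f\to Sf$ on $\dom S$. Second, $\no{e^{tS_\lambda}}\leq e^{-\lambda t}e^{t\lambda^2\no{(\lambda I-S)^{-1}}}\leq1$, so each $e^{tS_\lambda}$ is a contraction. Third, since the $S_\lambda$ commute,
\begin{gather*}
\no{e^{tS_\lambda}f-e^{tS_\mu}f}\leq t\no{S_\lambda f-S_\mu f}\,,
\end{gather*}
so $e^{tS_\lambda}f$ is Cauchy uniformly for $t$ in compact sets, first for $f\in\dom S$ and then for all $f$ by density. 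The limit $U_t$ is a strongly continuous contraction semigroup. To identify its generator $B$ with $S$, I would pass to the limit in $e^{tS_\lambda}f-f=\int_0^te^{sS_\lambda}S_\lambda f\,ds$ to get $S\subset B$. Equality then follows because $\lambda I-B$ is injective for $\lambda>0$, by (i)$\Rightarrow$(ii) applied to $B$, while $\lambda I-S$ is already surjective.

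I expect the bookkeeping in this last convergence argument to be the main obstacle. Everything else is routine.
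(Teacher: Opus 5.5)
Your proof is correct and is essentially the classical argument that the paper does not reproduce but imports from Pazy's Theorems~3.1 and~4.3: the Laplace transform for (i)$\Rightarrow$(ii), and the Yosida approximation with the commuting-exponentials Cauchy estimate for the converse. Your one deviation, expanding $\no{(S-\lambda I)f}^2$ for (ii)$\Rightarrow$(iii), is a valid Hilbert-space shortcut for Pazy's duality-map argument, but drop your ``alternatively'' remark: Theorem~\ref{th:A-Dissipative-resolvet}.\eqref{it2:aD-res} needs the bound $\no{\mathcal R_S(\lambda)}\leq(\re\lambda)^{-1}$ on all of $\C_{>0}$, whereas (ii) only gives it on $(0,\infty)$.
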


By virtue of Theorem~\ref{th:A-Dissipative-resolvet}.\eqref{it2:aD-res}, conditions \eqref{it2:C0-sO} and \eqref{it3:C0-sO} of Theorem~\ref{th:HY-LP-o} are equivalent to saying that $S$ is maximal anti-accumulative. Therefore, Theorem~\ref{th:HY-LP-o} can be stated as follows.

\begin{corollary}\label{cor:HY-LP-o}
A closed densely defined linear operator $S$ is the infinitesimal generator of a $C_0$-semigroup of contractions if and only if it is maximal anti-accumulative.
\end{corollary}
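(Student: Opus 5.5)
The plan is to show that conditions \eqref{it2:C0-sO} and \eqref{it3:C0-sO} of Theorem~\ref{th:HY-LP-o} are each equivalent to $S$ being maximal anti-accumulative, and then to invoke Theorem~\ref{th:HY-LP-o}. We use Theorem~\ref{th:A-Dissipative-resolvet}.\eqref{it2:aD-res} for the resolvent characterization. The underlying Hilbert space is $\mathcal K$ (the $\mathcal H$ in item \eqref{it3:C0-sO} is read as $\mathcal K$), and $S$ is closed, densely defined and single-valued throughout.

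\emph{Step 1: \eqref{it2:C0-sO} implies maximal anti-accumulative.} Assume $(0,\infty)\subset\rho(S)$ and $\no{\mathcal R_S(\lambda)}\leq\lambda^{-1}$ for $\lambda>0$. Theorem~\ref{th:A-Dissipative-resolvet}.\eqref{it2:aD-res} asks for the bound on the whole half-plane $\C_{>0}$, whereas we only have it on the positive ray. So we first show that $S$ is anti-accumulative directly. Take $f\in\dom S$ and $\lambda>0$. Then
\begin{gather*}
\no{f}\leq\lambda^{-1}\no{(\lambda-S)f}\,.
\end{gather*}
Squaring gives
\begin{gather*}
\lambda^2\no f^2\leq \lambda^2\no f^2-2\lambda\re\ip f{Sf}+\no{Sf}^2\,.
\end{gather*}
Dividing by $\lambda$ and letting $\lambda\to\infty$ yields $\re\ip f{Sf}\leq 0$. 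Hence $S$ is anti-accumulative.

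\emph{Step 2: completing the direction from \eqref{it2:C0-sO}.} Since $S$ is anti-accumulative, Theorem~\ref{th:A-Dissipative-resolvet}.\eqref{it2:aD-res} gives $\C_{>0}\subset\hat\rho(S)$ with the required bound. To reach maximality we need $\C_{>0}\subset\rho(S)$. For closed $S$, the range defect $\dim\ker(S^*-\cc\lambda I)$ is constant on the connected set $\C_{>0}\subset\hat\rho(S)$. It vanishes at $\lambda=1$, because $1\in\rho(S)$. Therefore $\ran(S-\lambda I)=\mathcal K$ for every $\lambda\in\C_{>0}$, so $\C_{>0}\subset\rho(S)$, and the second half of Theorem~\ref{th:A-Dissipative-resolvet}.\eqref{it2:aD-res} shows that $S$ is maximal anti-accumulative.

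\emph{Step 3: \eqref{it3:C0-sO} implies maximal anti-accumulative.} Assume $S$ is anti-accumulative with $\ran(S-\lambda I)=\mathcal K$ for all $\lambda>0$. Anti-accumulativity gives injectivity of $S-\lambda I$ and the bound $\no{(S-\lambda I)^{-1}}\leq\lambda^{-1}$, via
\begin{gather*}
\lambda\no f^2\leq\re\ip{(\lambda-S)f}{f}\,.
\end{gather*}
Combined with surjectivity, this yields $(0,\infty)\subset\rho(S)$. Step 2 then applies verbatim.

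\emph{Step 4: the converse.} If $S$ is maximal anti-accumulative, Theorem~\ref{th:A-Dissipative-resolvet}.\eqref{it2:aD-res} gives $\C_{>0}\subset\rho(S)$ with $\no{\mathcal R_S(\lambda)}\leq(\re\lambda)^{-1}$. Restricting to $\lambda>0$ gives both \eqref{it2:C0-sO} and \eqref{it3:C0-sO}. With Steps 1--4 in hand, Theorem~\ref{th:HY-LP-o} yields the corollary.

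The main obstacle is Step 2, passing from the ray $(0,\infty)$ to the half-plane. The index-constancy fact stated for $\eta_\zeta$ handles this. Alternatively, the Neumann series for the resolvent extends $\rho(S)$ from the ray into $\C_{>0}$ in discs of radius $\lambda$ around each $\lambda>0$; these discs cover $\C_{>0}$.
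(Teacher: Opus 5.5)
Your proof is correct and follows the paper's route. The paper simply asserts that, by Theorem~\ref{th:A-Dissipative-resolvet}.\eqref{it2:aD-res}, conditions \eqref{it2:C0-sO} and \eqref{it3:C0-sO} of Theorem~\ref{th:HY-LP-o} are equivalent to maximal anti-accumulativity, and then restates Theorem~\ref{th:HY-LP-o}. Your Steps 1--3 supply what that one-line justification leaves implicit: the resolvent bound on the ray $(0,\infty)$ already forces anti-accumulativity, and $\rho(S)$ then extends from the ray to all of $\C_{>0}$ (via index constancy or the Neumann-series discs), which is exactly what Theorem~\ref{th:A-Dissipative-resolvet}.\eqref{it2:aD-res} requires.
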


We now present the Hille-Yosida and Lumer-Phillips theorems for linear relations.

\begin{theorem}\label{thm:HY-LP-lr}
A closed linear relation $S$ is the infinitesimal generator of a $C_0$-semigroup $\{T_t\}_{t\geq0}$ such that its operator part $\oP{T_t}$ is a contraction for all $t\geq0$, if and only if $S$ is maximal anti-accumulative.
\end{theorem}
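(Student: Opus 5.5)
The plan is to reduce everything to the operator part and then apply Corollary~\ref{cor:HY-LP-o} in the Hilbert space $(\mul S)^\perp$, in the same way as in Theorems~\ref{th:uniformly-CSG} and \ref{th:resl-C0}. The bridge between the relation and its operator part is Theorem~\ref{th:a-DA-oP}.\eqref{it2:naOp}. It says that a closed $S$ is maximal anti-accumulative if and only if $S$ is densely d.m.o. and $S_S=\oP S$ is a maximal anti-accumulative operator in ${(\mul S)^\perp}^{\oplus_2}$; the identity $S_S=\oP S$ holds by \eqref{eq:Tt-ToP}.

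For the forward direction, let $S$ be the generator of a $C_0$-semigroup $\{T_t\}_{t\geq0}$ whose operator parts $\oP{T_t}$ are contractions. By Definition~\ref{dfn:generator-of-semigroup}, $S=\oP S\oplus {T_0}_\infty$, with $\mul S=\mul T_0$ and $\oP S$ the generator of the operator semigroup $\{\oP{T_t}\}_{t\geq0}$ in $\mathcal B((\mul T_0)^\perp)$. Theorem~\ref{th:prop-Co-Sg}.\eqref{it1:scATt} shows that $S$ is densely d.m.o., and Remark~\ref{rmk:properties-scSG} shows that $\oP S$ is closed and densely defined in $(\mul S)^\perp$. Since $\{\oP{T_t}\}_{t\geq0}$ is a $C_0$-semigroup of contractions, Corollary~\ref{cor:HY-LP-o} applied in $\mathcal K=(\mul S)^\perp$ gives that $\oP S=S_S$ is maximal anti-accumulative there. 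Theorem~\ref{th:a-DA-oP}.\eqref{it2:naOp} then yields that $S$ is maximal anti-accumulative.

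For the converse, let $S$ be maximal anti-accumulative. By Remark~\ref{rmk:aD-dmo}, $S$ is closed and densely d.m.o. By Theorem~\ref{th:a-DA-oP}.\eqref{it2:naOp}, $\oP S=S_S$ is a maximal anti-accumulative operator in $(\mul S)^\perp$. Its domain is dense there because $\cc{\dom \oP S}=\cc{\dom S}=(\mul S)^\perp$. Corollary~\ref{cor:HY-LP-o} therefore provides a $C_0$-semigroup of contractions $\{U_t\}_{t\geq0}\subset\mathcal B((\mul S)^\perp)$ generated by $\oP S$. Set $\mathcal K=\mul S$ and define $T_t$ by \eqref{eq:Tt-sg-Ut}. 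Corollary~\ref{cor:Tt-sg-Ut} shows this is a semigroup of relations with $\oP{T_t}=U_t$, so it is strongly continuous with contractive operator parts. Its generator is $\oP S\oplus {T_0}_\infty=\oP S\oplus S_\infty=S$, because ${T_0}_\infty=\{\vE 0k: k\in\mul S\}=S_\infty$.

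An alternative route for the converse goes through Theorem~\ref{th:resl-C0} with $M=1$ and $\omega=0$. Theorem~\ref{th:A-Dissipative-resolvet}.\eqref{it2:aD-res} gives $(0,\infty)\subset\rho(S)$ and $\no{\mathcal R_S(\lambda)}\leq\lambda^{-1}$, so $\no{[\mathcal R_S(\lambda)]^n}\leq\lambda^{-n}$ by submultiplicativity. This route needs the fact that $[\mathcal R_S(\lambda)]^n$ is an everywhere-defined bounded operator, which Theorem~\ref{th:resolvet-dom-mul} provides.

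The main obstacle is bookkeeping rather than analysis: the operator Hille--Yosida--Lumer--Phillips theorem must be applied in the correct ambient space $(\mul S)^\perp$, not in $\mathcal H$. This requires checking that $\oP S$ really maps $\dom \oP S\subset(\mul S)^\perp$ into $(\mul S)^\perp$, which is exactly the content of Remark~\ref{rmk:dom-multperp-spectral}. It also requires checking that the multivalued part of the constructed semigroup coincides with $S_\infty$, so that the generator recovered is $S$ itself and not merely a relation with the same operator part.
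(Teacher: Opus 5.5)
Your proposal is correct and follows essentially the same route as the paper. In both directions you reduce to the operator part $\oP S=S_S$ on $(\mul S)^\perp$ via Theorem~\ref{th:a-DA-oP}.\eqref{it2:naOp}, apply Corollary~\ref{cor:HY-LP-o} there, and lift back; the differences are cosmetic: you get the densely d.m.o. property from Theorem~\ref{th:prop-Co-Sg} rather than Theorem~\ref{th:resl-C0}, and in the converse you lift the contraction semigroup explicitly through Corollary~\ref{cor:Tt-sg-Ut}, which is what the paper's appeal to Theorem~\ref{th:resl-C0} does internally.
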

\begin{proof}
If $S$ is the generator of a $C_0$-semigroup $\{T_t\}_{t\geq0}$ such that $\oP{T_t}$ is a contraction for all $t\geq0$, then $\oP{S}$ generates the $C_0$-semigroup of contractions $\{\oP{T_t}\}_{t\geq0}$, and $S$ is densely d.m.o., by Theorem~\ref{th:resl-C0}. Therefore, Corollary~\ref{cor:HY-LP-o} implies that $\oP{S}=S_S$ is maximal anti-accumulative in ${(\mul S)^\perp}^{\oplus_2}$, while Theorem~\ref{th:a-DA-oP}.\eqref{it2:naOp} guarantees that $S$ is maximal anti-accumulative.

Conversely, if $S$ is maximal anti-accumulative then Theorem~\ref{th:a-DA-oP}.\eqref{it2:naOp} yields that $S$ is densely d.m.o., being $\oP{S}$ a maximal anti-accumulative operator in ${(\mul S)^\perp}^{\oplus_2}$. Hence, Theorem~\ref{th:resl-C0} and Corollary~\ref{cor:HY-LP-o} imply that $S$ generates a $C_0$-semigroup $\{T_t\}_{t\geq0}$ such that $\oP{T_t}$ is a contraction for all $t\geq0$.
\end{proof}

The following result uses the well-known fact that for a self-adjoint operator $A$, $iA$ is the infinitesimal generator of a $C_0$-semigroup of unitary operators (cf. \cite[Sect.\,7.6]{MR566954}).

\begin{corollary}\label{cor:HY-LP-aA}
A closed linear relation $S$ is the generator of a $C_0$-semigroup $\{T_t\}_{t\geq0}$ such that $\oP{T_t}$ is unitary for all $t\geq0$, if and only if $S$ is anti-selfadjoint.
\end{corollary}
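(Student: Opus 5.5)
The plan is to reduce the statement to the operator part and then transfer back, in the same way as the proof of Theorem~\ref{thm:HY-LP-lr}. The operator-level input is the classical Stone-type fact: a closed densely defined operator $B_0$ on a Hilbert space $\mathcal K$ generates a $C_0$-group, hence a $C_0$-semigroup, of unitary operators if and only if $B_0=-B_0^*$. Indeed, for selfadjoint $A$ one has $iA$ anti-selfadjoint, and conversely every anti-selfadjoint $B_0$ equals $iA$ with $A=-iB_0$ selfadjoint. For the converse direction one also needs that the generator of a unitary $C_0$-semigroup is anti-selfadjoint. I would argue this as follows. Unitarity gives $\oP{T_t}^*=\oP{T_t}^{-1}$, and $\{\oP{T_t}^*\}_{t\geq0}$ is a $C_0$-semigroup with generator $(\oP S)^*$. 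Since $\oP{T_t}$ and $\oP{T_t}^{-1}$ are both contractions, both $\oP S$ and $-\oP S$ are maximal anti-accumulative by Corollary~\ref{cor:HY-LP-o}. Hence $\oP S$ is anti-symmetric and maximal in both directions, which gives $\eta_{<0}=\eta_{>0}=0$ and therefore anti-selfadjointness (Lemma~\ref{lem:max-A-Symmetric} together with Proposition~\ref{pro:characterization-A-symmetric}).

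\emph{Necessity.} Let $S$ generate a $C_0$-semigroup $\{T_t\}_{t\geq0}$ with each $\oP{T_t}$ unitary on $(\mul T_0)^\perp$.
\begin{enumerate}[(1)]
\item Unitary operators are contractions, so Theorem~\ref{thm:HY-LP-lr} applies. By Theorem~\ref{th:prop-Co-Sg}, $S$ is densely d.m.o., and $\oP S=S_S$ generates the unitary semigroup $\{\oP{T_t}\}_{t\geq0}$ in $(\mul S)^\perp$.
\item The operator-level fact above then shows that $S_S$ is anti-selfadjoint in ${(\mul S)^\perp}^{\oplus_2}$.
\item Theorem~\ref{th:oP-asymm}.\eqref{it3:OpaS} then gives that $S$ is anti-selfadjoint.
\end{enumerate}

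\emph{Sufficiency.} Let $S$ be anti-selfadjoint.
\begin{enumerate}[(1)]
\item Theorem~\ref{th:oP-asymm}.\eqref{it3:OpaS} shows that $S$ is densely d.m.o. and $\oP S=S_S$ is an anti-selfadjoint operator in $(\mul S)^\perp$.
\item Writing $\oP S=iA$ with $A=-i\oP S$ selfadjoint, the quoted fact gives a unitary $C_0$-semigroup $U_t=e^{t\oP S}$ on $(\mul S)^\perp$.
\item Corollary~\ref{cor:Tt-sg-Ut} with $\mathcal K=\mul S$ produces the semigroup of relations $T_t=U_t\oplus S_\infty$. Its operator part is $U_t$, which is unitary and strongly continuous.
\item By Definition~\ref{dfn:generator-of-semigroup}, the generator of $\{T_t\}_{t\geq0}$ is $\oP S\oplus S_\infty=S$.
\end{enumerate}

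The main obstacle is the converse operator fact, namely that unitarity of the semigroup forces an anti-selfadjoint generator. I would settle it by identifying the generator of $\{\oP{T_t}^*\}_{t\geq0}$ as $(\oP S)^*$. This holds because the adjoint of a $C_0$-semigroup on a Hilbert space is again a $C_0$-semigroup whose generator is the adjoint of the original generator. Then $\oP{T_t}^*=\oP{T_t}^{-1}$, together with differentiation at $t=0$ on $\dom\oP S$, yields $(\oP S)^*=-\oP S$. The remaining steps are bookkeeping with Theorem~\ref{th:oP-asymm}.
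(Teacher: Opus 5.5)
Your argument is correct. The sufficiency half is essentially the paper's, but the necessity half is organized differently.

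\textbf{Necessity.} The paper never passes to the operator-level Stone theorem. It notes that $S$ and $-S$ generate $\{T_t\}_{t\geq0}$ and $\{\oP{T_t}^{-1}\oplus{T_0}_\infty\}_{t\geq0}$, both with contractive operator parts. It applies Theorem~\ref{thm:HY-LP-lr} to each, so $S$ and $-S$ are maximal anti-accumulative. Hence $\pm1\in\rho(S)$ and $\ker(S^*\pm I)=\{0\}$, and the second part of Proposition~\ref{pro:characterization-A-symmetric}, applied to $S$ itself, finishes. You instead prove that $\oP S=S_S$ is anti-selfadjoint in $(\mul S)^\perp$. You then lift this with Theorem~\ref{th:oP-asymm}.\eqref{it3:OpaS}, using the densely d.m.o.\ property from Theorem~\ref{th:prop-Co-Sg}.

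\textbf{What each route buys.} Your route is symmetric, since both directions go through the same transfer theorem. It also actually justifies, via the adjoint semigroup, a fact the paper only asserts: that $-\oP S$ generates $\{\oP{T_t}^{-1}\}_{t\geq0}$. The paper's route needs nothing beyond the relation-level Hille--Yosida theorem it has already proved.

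\textbf{Sufficiency.} You build $T_t=e^{t\oP S}\oplus S_\infty$ directly with Corollary~\ref{cor:Tt-sg-Ut} and read off its generator. This is equivalent to the paper's appeal to Theorem~\ref{thm:HY-LP-lr} followed by uniqueness (Remark~\ref{rmk:properties-scSG}.\eqref{it4:C0sgO}).

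\textbf{A point to tighten.} Differentiating $\oP{T_t}^{-1}f$ at $t=0$ for $f\in\dom\oP S$ only gives $-\oP S\subset(\oP S)^*$. For equality you need the mirror computation on $g\in\dom(\oP S)^*$, namely $t^{-1}(\oP{T_t}g-g)=-\oP{T_t}\,t^{-1}(\oP{T_t}^{*}g-g)\to-(\oP S)^*g$. Alternatively, simply cite Stone's theorem.

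Similarly, in your first paragraph, applying Corollary~\ref{cor:HY-LP-o} to $-\oP S$ presupposes this identification. Once $(\oP S)^*=-\oP S$ is proved, the detour through deficiency indices is redundant.
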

\begin{proof}
If $S$ generates a $C_0$-semigroup $\{T_t\}_{t\geq0}$ for which $\oP{T_t}$ is unitary for all $t\geq0$, then one has by \eqref{eq:infi-gene-oP} that  $\oP{S}, \oP{-S}$ generate the $C_0$-semigroups of unitary operators $\{\oP{T_t}\}_{t\geq0}, \{\oP{T_t}^{-1}\}_{t\geq0}$ in ${(\mul S)^{\perp}}^{\oplus_2}$, which certainly $\oP{T_t},\oP{T_t}^{-1}\in\mathcal B((\mul S)^{\perp})$, and $S,-S$ generate the $C_0$-semigroup $\{T_t\}_{t\geq0}, \{\oP{T_t}^{-1}\oplus {T_0}_\infty\}_{t\geq0}$, respectively. Since unitary operators are contraction, it follows by Theorem~\ref{thm:HY-LP-lr} that $S,-S$ are maximal anti-accumulative. Thus, Theorem~\ref{th:A-Dissipative-resolvet} implies $\pm 1\in\rho(S)$, i.e., $\mathcal R_S(\pm 1)\in\mathcal B(\mathcal H)$ and the last equality of \eqref{eq:p-Adjoint} yields
\begin{gather*}
\ker(S^*\pm I)=\lrc{\ran (S\pm I)^{-1}}^\perp=\lrc{\dom \mathcal R_S(\pm 1)}^\perp=\mathcal H^\perp=\{0\}\,.
\end{gather*}
Therefore, the second part of Proposition~\ref{pro:characterization-A-symmetric} implies that $S$ is anti-selfadjoint.

Conversely, if $S$ is anti-selfadjoint, then Theorem~\ref{th:A-Dissipative-resolvet} and Proposition~\ref{pro:characterization-A-symmetric} imply that $S$ is maximal anti-accumulative and, consequently, generates a $C_0$-semigroup $\{T_t\}_{t\geq0}$ according to Theorem~\ref{thm:HY-LP-lr}. Moreover, Theorem~\ref{th:oP-asymm}.\eqref{it3:OpaS} and Remark~\ref{rmk:equi-as-s} ensure that both $\oP S$ and $-\oP S$ are anti-selfadjoint, meaning that $-i\oP S$ is selfadjoint in ${(\mul S)^{\perp}}^{\oplus_2}$. Therefore, $\oP{S}=i(-i\oP S)$ is the generator of a $C_0$-semigroup of unitary operators, which is precisely $\{\oP {T_t}\}_{t\geq0}$ by Remark~\ref{rmk:properties-scSG}.\eqref{it4:C0sgO}.
\end{proof}

We conclude with the following result, which provides an analog of Stone's theorem for semigroups of linear relations.

\begin{corollary}\label{cor:Stone-generalization}
A family of closed maximal d.m.o. linear relations $\{T_t\}_{t\geq 0}$ is a  $C_0$-semigroup such that $\oP{T_t}$ is a contraction (unitary) for all $t\geq0$, if and only if there exists a uniquely determined maximal dissipative (self-adjoint) relation $L$ for which
\begin{gather}\label{eq:C0-representation}
T_t=e^{it\oP L}\oplus L_\infty=e^{itL}\,.
\end{gather}
Besides, $e^{it\oP L}=\lim_{\lambda \to \infty} e^{-it\oP L(-i\lambda)}$, as in the Remark \ref{rmk:YAO}.
\end{corollary}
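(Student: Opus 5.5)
The plan is to reduce the statement to Theorem~\ref{thm:HY-LP-lr} and Corollary~\ref{cor:HY-LP-aA} by the substitution $S=iL$, equivalently $L=-iS$. The link is Lemma~\ref{lem:nonA-Diss}. Since $L=-iS$ gives $iL=S$, item (ii) of that lemma shows that $L$ is maximal dissipative if and only if $S$ is maximal anti-accumulative. Here we use that $S$ is anti-accumulative iff $-S$ is anti-dissipative iff $i(-S)=-iS$ is dissipative, and likewise for maximality. For the unitary case, Remark~\ref{rmk:equi-as-s} gives that $S$ is anti-selfadjoint iff $iS$ is selfadjoint, and then $-iS=-(iS)$ is selfadjoint too. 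We also record that multiplication by a nonzero scalar preserves $\mul$ and $\dom$, and that $\oP{(iL)}=i\oP L$ and $(iL)_\infty=L_\infty$.

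For the forward direction, suppose $\{T_t\}_{t\geq0}$ is a $C_0$-semigroup with contractive (unitary) operator parts, and let $S$ be its generator. By Theorem~\ref{thm:HY-LP-lr} (resp. Corollary~\ref{cor:HY-LP-aA}), $S$ is maximal anti-accumulative (resp. anti-selfadjoint). Set $L\colonequals -iS$; by the previous paragraph $L$ is maximal dissipative (resp. selfadjoint). Theorem~\ref{th:exp-genA} gives $T_t=e^{t\oP S}\oplus S_\infty=e^{it\oP L}\oplus L_\infty$, and by \eqref{eq:exp-genA} applied to the generator $iL=S$ this equals $e^{itL}$. That is \eqref{eq:C0-representation}.

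For uniqueness, suppose $\hat L$ is another such relation. Then $i\hat L$ and $iL$ both generate the same semigroup, since by definition $i\hat L$ generates $e^{it\hat L}=T_t$. The generator is unique by the remark following Definition~\ref{dfn:generator-of-semigroup}. Concretely, $\oP{(i\hat L)}=\oP{(iL)}$ because both generate $\{\oP{T_t}\}$, and $\hat L_\infty=L_\infty={T_0}_\infty$. Hence $i\hat L=iL$ and $\hat L=L$.

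For the converse, suppose $L$ is maximal dissipative (resp. selfadjoint). Then $S=iL$ is closed and maximal anti-accumulative (resp. anti-selfadjoint). Theorem~\ref{thm:HY-LP-lr} (resp. Corollary~\ref{cor:HY-LP-aA}) produces a $C_0$-semigroup with contractive (unitary) operator parts generated by $S$. Theorem~\ref{th:exp-genA} identifies it as $e^{t\oP S}\oplus S_\infty=e^{it\oP L}\oplus L_\infty$. Its members are closed and maximal d.m.o. by Definition~\ref{dfn:semigroup}.

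The main obstacle is the final Yosida-type formula, which is mostly a matter of bookkeeping. By Remark~\ref{rmk:YAO}, $e^{t\oP S}=\lim_{\mu\to\infty}e^{-t\oP S(\mu)}$ with $\oP S(\mu)=\mu\oP S\mathcal R_{\oP S}(\mu)$. Substituting $\oP S=i\oP L$ gives $\mathcal R_{\oP S}(\mu)=(i\oP L-\mu)^{-1}=-i(\oP L+i\mu)^{-1}=-i\mathcal R_{\oP L}(-i\mu)$. Therefore $\oP S(\mu)=\mu\, i\oP L(-i)\mathcal R_{\oP L}(-i\mu)=i\,[(-i\mu)\oP L\mathcal R_{\oP L}(-i\mu)]=i\,\oP L(-i\mu)$, where the last step uses the Yosida approximation \eqref{eq:Yosida-A} of $\oP L$ at the point $-i\mu\in\C_-\subset\rho(\oP L)$ (Theorem~\ref{th:maximal-Ll}). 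This yields $e^{it\oP L}=\lim_{\lambda\to\infty}e^{-it\oP L(-i\lambda)}$ strongly. I would check the sign conventions carefully at this step, since the paper's convention in Remark~\ref{rmk:YAO} has $S(\lambda)\to -S$.
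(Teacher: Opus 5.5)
Your proposal is correct and follows the paper's proof essentially step for step. It uses the substitution $L=-iS$ via Lemma~\ref{lem:nonA-Diss} and Remark~\ref{rmk:equi-as-s}, then Theorem~\ref{thm:HY-LP-lr} (resp. Corollary~\ref{cor:HY-LP-aA}) together with Theorem~\ref{th:exp-genA} in both directions, and the same Yosida computation $\oP S(\lambda)=i\,\oP L(-i\lambda)$, whose signs agree with the paper's convention. Your uniqueness argument is simply a more explicit version of the paper's one-line appeal to the uniqueness of the generator.
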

\begin{proof}
If $\{T_t\}_{t\geq 0}$ is a  $C_0$-semigroup such that $\oP{T_t}$ is a contraction (unitary) for all $t\geq0$, then by Theorem~\ref{thm:HY-LP-lr} (Corollary~\ref{cor:HY-LP-aA}) its generator $S$ is maximal anti-accumulative (anti-selfadjoint). Besides, Lemma~\ref{lem:nonA-Diss} (Remark~\ref{rmk:equi-as-s}) yields that $L=-iS$ is maximal dissipative (selfadjoint). It is straightforward to check the identities 
\begin{gather}\label{eq:eqiL-S}
\oP L=-i\oP S\,;\quad L_\infty=S_\infty={T_0}_\infty\,,\quad\mbox{as well as}\quad (iL)_\infty=L_\infty\,; \quad i\oP L=\oP{(iL)}\,.
\end{gather}
Therefore, Theorem~\ref{th:exp-genA} yields
\begin{align*}
T_t&=e^{t\oP S}\oplus S_\infty=e^{it(-i\oP S)}\oplus L_\infty=e^{it\oP L}\oplus L_\infty\\
&=e^{t\oP{(iL)}}\oplus (iL)_\infty=e^{itL}\,.
\end{align*}
Moreover, the Yosida approximation \eqref{eq:Yosida-A} follows 
\begin{gather}\label{eq:Yosida-Av2}
\oP S(\lambda)=\lambda \oP S(\oP S-\lambda I)^{-1}=i(-i\lambda)(-i\oP S)(-i\oP S-(-i\lambda) I)^{-1}=i\oP L(-i\lambda)\,.
\end{gather}
Hence, Remark~\ref{rmk:YAO} and \eqref{eq:Yosida-Av2} yield
\begin{gather*}
e^{it\oP L}=e^{t\oP S}=\lim_{\lambda \to \infty}e^{-t\oP S(\lambda)}=\lim_{\lambda \to \infty}e^{-it\oP L(-i\lambda)}\,.
\end{gather*}
The uniqueness of $L$ follows from the uniqueness of the generator $S$.

Conversely, if $L$ is a maximal dissipative (selfadjoint) which satisfies \eqref{eq:C0-representation}, then due to Lemma~\ref{lem:nonA-Diss} (Remark~\ref{rmk:equi-as-s}), $S=iL$ is maximal anti-accumulative (anti-selfadjoint). Thereby, Theorem~\ref{thm:HY-LP-lr} (Corollary~\ref{cor:HY-LP-aA}) implies that $S$ is the generator of a $C_0$-semigroup $\{\hat T_t\}_{t\geq 0}$ such that $\oP{\hat T_t}$ is a contraction (unitary) for all $t\geq0$. Therefore, the equality $-iS=L$, the identities \eqref{eq:eqiL-S} and Theorem~\ref{th:exp-genA} yield $\hat T_t= T_t$, for all $t\geq0$.
\end{proof}

\section{$C_0$-semigroup generators as one-dimensional perturbations}
\subsection{One-dimensional perturbations of anti-selfadjoint relations}\label{subs-6.1}
Let $S\leq \mathcal H^{\oplus_2}$ be an anti-selfadjoint relation, $\varphi\in\dom S$ a non-zero normalized element and $Z\colonequals \Span\lrb{\vE 0\varphi}$.  One has by Theorem~\ref{th:oP-asymm}.\eqref{it3:OpaS}  that $S$ is densely d.m.o. and $S,Z$ are l.i. Thus, the relation
\begin{gather*}
B=S\cap Z^*\leq S\,,
\end{gather*}
is closed and anti-symmetric. Besides, $Z$ is unidimensional, anti-symmetric, and satisfies $Z=-Z$. Thus by \eqref{eq:p-Adjoint},
\begin{equation}\label{eq:adjoint-B-As}
\begin{aligned}
B^*&=\left(S\cap Z^*\right)^*=-\left(\left(S\cap Z^*\right)^\perp\right)^{-1}=-\left(\cc{S^\perp\dotplus (Z^*)^\perp}\right)^{-1}\\&=\cc{(-S^{-1})^\perp\dotplus(- (Z^*)^{-1})^\perp}=\cc{S^*\dotplus Z}=-S\dotplus Z\,.
\end{aligned}
\end{equation}
Thence, \eqref{rm:A-dissipative-extension} and Proposition~\ref{pro:characterization-A-symmetric}.\eqref{itb:a-sa} imply that the indices of $B$ are equal. Since $S$ is anti-selfadjoint, $(S-I)^{-1}\in\mathcal B(\mathcal H)$, i.e., there exits $\vE hk\in S$ such that $\varphi=k-h$ and by \eqref{eq:adjoint-B-As},
\begin{gather*}
\vE h{-h}=\vE h{-k+\varphi}\in  B^*\,,
\end{gather*}
where $h\neq 0$, since otherwise $\vE 0\varphi\in S$, a contradiction with $S\cap Z=\lrb{\vE 00}$. Besides, for $\vE u{-u}\in B^*$ one has by \eqref{eq:adjoint-B-As} that there exists $\vE ug\in S$ and $\alpha\in\C$ such that $-u=-g+\alpha\varphi$. Thus, 
\begin{gather*}
\vE{u-\alpha h}{u-\alpha h}=\vE ug-\alpha \vE hk\in S\,,
\end{gather*}
whence Proposition~\ref{pro:characterization-A-symmetric}.\eqref{itb:a-sa} yields $u=\alpha h$ and $\ker \dS_{-1}(B^*)=1$. Hence, $\eta_{>0}(B)=\eta_{<0}(B)=1$. 

\begin{theorem}\label{th:od-PaS}
All maximal anti-dissipative and maximal anti-accumulative extensions of $B$ are in one-to-one correspondence with $\C\cup\{\infty\}$. These extensions, denoted $S^{(\tau)}$, are one-dimensional perturbations of $S$ given by
\begin{gather}\label{eq:non-inf-Lt}
S^{(\tau)}=\lrb{\vE{f}{g+\tau\ip{\varphi}f \varphi}\,:\,\vE fg\in S}\,,\quad \tau\neq\infty\,,
\end{gather}
while for $\tau=\infty$:
\begin{gather}\label{eq:inf-Lt}
S^{(\infty)}=B\dotplus Z\,.
\end{gather}
Furthermore, $S^{(\tau)}$ is anti-dissipative if $\tau\in\C_{>0}$, anti-accumulative if $\tau\in\C_{<0}$, and anti-selfadjoint if $\tau\in i\R\cup\{\infty\}$.
\end{theorem}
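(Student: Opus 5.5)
We would rely on the von Neumann formulae for anti-symmetric relations, Theorem~\ref{thm:vN-formulae-AS}, together with the computation preceding the statement. That computation gives $\eta_{>0}(B)=\eta_{<0}(B)=1$, with $\dS_{-1}(B^*)=\Span\lrb{\vE h{-h}}$, where $\vE hk\in S$ and $k-h=\varphi$. Symmetrically, since $(S+I)^{-1}\in\mathcal B(\mathcal H)$, we pick $\vE{h'}{k'}\in S$ with $k'+h'=\varphi$. Then $\vE{h'}{h'}=\vE{h'}{-k'+\varphi}\in B^*$ by \eqref{eq:adjoint-B-As}, and the same argument shows $\dS_1(B^*)=\Span\lrb{\vE{h'}{h'}}$. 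Since both deficiency spaces are one-dimensional, in \eqref{von02-Ad} (with $\lambda=1$ for anti-dissipative and $\lambda=-1$ for anti-accumulative extensions) either $D=\{0\}$, giving $B$ itself, or $D$ is the full line and $V$ is multiplication by a scalar $\theta$ with $\abs\theta\le1$. The maximal extensions, i.e.\ those with $\eta=0$ by \eqref{eq:di-AS-AD}, are exactly $B\dotplus\Span\{\theta d_- - d_+\}$, where $d_\pm$ span the two deficiency spaces. So the maximal anti-dissipative extensions, and separately the maximal anti-accumulative ones, are parametrized by closed discs, and the anti-selfadjoint ones (common to both) by the circle $\abs\theta=1$ (Theorem~\ref{thm:vN-formulae-AS}(3$^\circ$) and Lemma~\ref{lem:max-A-Symmetric}). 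Gluing the two discs along the unit circle gives a sphere, i.e.\ a copy of $\C\cup\{\infty\}$. Rather than building this identification abstractly, we would verify directly that the family $S^{(\tau)}$ exhausts these extensions.

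We next check that each $S^{(\tau)}$ extends $B$ and is maximal of the claimed type. By \eqref{eq:adjoint-B-As} with $Z$ one-dimensional, $B=\{\vE fg\in S:\ip\varphi f=0\}$, so $B\subset S^{(\tau)}$ and $B\subset S^{(\infty)}$. For $\vE fg\in S$ we compute
\[
\re\ip f{g+\tau\ip\varphi f\varphi}=\re\ip fg+\re\lrp{\tau\abs{\ip\varphi f}^2}=\re\tau\,\abs{\ip\varphi f}^2,
\]
using that $S$ is anti-symmetric and the paper's convention for which slot is conjugate-linear. This gives anti-dissipativity for $\re\tau>0$, anti-accumulativity for $\re\tau<0$, and anti-symmetry for $\tau\in i\R$. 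For maximality we observe that $S^{(\tau)}=S+\tau P$ with $P=\ip\varphi\cdot\,\varphi\in\mathcal B(\mathcal H)$. It is closed, and by the adjoint-sum rule $(S^{(\tau)})^*=S^*+\cc\tau P=-S^{(-\cc\tau)}$. Maximality then follows from $\dim[S^{(\tau)}/B]=1$ together with \eqref{eq:di-AS-AD}, which forces the relevant index of $S^{(\tau)}$ to be $0$. For $\tau\in i\R$ the identity $(S^{(\tau)})^*=-S^{(\tau)}$ gives anti-selfadjointness directly. For $\tau=\infty$, the relation $B\dotplus Z$ is anti-symmetric: cross terms $\re\ip f{\alpha\varphi}$ vanish since $f\in\dom B\perp\varphi$. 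Its adjoint is $B^*\cap Z^*=(-S\dotplus Z)\cap Z^*$, and we would show this equals $-(B\dotplus Z)$.

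The main obstacle is surjectivity and injectivity of $\tau\mapsto S^{(\tau)}$ onto the set of maximal extensions. Our plan is to identify, for each $\tau$, the one-dimensional quotient $S^{(\tau)}/B$ with the line $\Span\{\theta d_--d_+\}$. Take $f_0=h'+\cdots$ suitably normalized with $\ip\varphi{f_0}\neq0$; a representative of $S^{(\tau)}/B$ is $\vE{f_0}{g_0+\tau\ip\varphi{f_0}\varphi}$. Decomposing it via $B^*=-B\dotplus\dS_{-1}(B^*)\dotplus\dS_1(B^*)$ yields a Möbius-type expression $\theta=\theta(\tau)$, which we would expect to send $\C_{>0}$ and $\C_{<0}$ onto the open discs, $i\R\cup\{\infty\}$ onto the unit circle, and to be bijective. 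Distinct $\tau$ give distinct extensions in any case, since $S^{(\tau)}-S^{(\tau')}$ detects $(\tau-\tau')\ip\varphi f\varphi$ on $f\in\dom S\not\perp\varphi$. Because each maximal anti-dissipative extension is determined by a single scalar parameter, matching $\theta(\tau)$ onto the closed disc completes the correspondence. Computing the explicit Möbius map is the step we expect to require care.
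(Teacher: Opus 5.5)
Your verification half is sound and essentially the paper's. This covers $B=\lrb{\vE fg\in S:\ip\varphi f=0}\subset S^{(\tau)}$, the identity $\re\ip f{g+\tau\ip\varphi f\varphi}=\re\tau\,\abs{\ip\varphi f}^2$, closedness of $S+\tau P$, maximality from $\dim[S^{(\tau)}/B]=1$ and \eqref{eq:di-AS-AD}, and $(S^{(\tau)})^*=-S^{(-\cc\tau)}$.

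\textbf{The gap is exhaustiveness.} The route you propose for it is set up in the wrong space. You take $d_-=\vE h{-h}\in\dS_{-1}(B^*)$ and $d_+=\vE{h'}{h'}\in\dS_1(B^*)$, and, reading \eqref{von02-Ad} literally, claim the maximal extensions are $B\dotplus\Span\lrb{\theta d_- - d_+}$. But every anti-dissipative or anti-accumulative $L\supset B$ satisfies $L\subset -B^*=S\dotplus Z$. Indeed, for $\vE uv\in L$ and $\vE fg\in B$, the function $s\mapsto\re\ip{u+sf}{v+sg}$ is real-affine in $s\in\C$, because $\re\ip fg=0$. It has constant sign, so its linear part vanishes. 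This gives $\ip ug=-\ip vf$, i.e.\ $\vE u{-v}\in B^*$. For $\vE uv\in B^*$, by contrast, the cross term equals $2\re\ip ug$, which need not vanish.

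A concrete case: take $\mathcal H=\C^2$, $S(x_1,x_2)=(-x_2,x_1)$, $\varphi=e_1$. Then $h=-\tfrac12(1,1)$, $h'=\tfrac12(1,-1)$ and $B=\Span\lrb{\vE{(0,1)}{(-1,0)}}$. The general element $\vE FG=a\vE{(0,1)}{(-1,0)}+c(\theta d_- - d_+)$ has $\re\ip FG=\re[\cc a c(\theta+1)]+\tfrac12\abs c^2(1-\abs\theta^2)$. So no $\theta\neq-1$ gives an anti-dissipative or anti-accumulative relation. For $\theta=-1$, the vector $\theta d_- - d_+$ already lies in $B$. In this example none of the $S^{(\tau)}$ has your form, so the identification of $S^{(\tau)}/B$ with $\Span\lrb{\theta d_- - d_+}$ via a M\"obius map cannot be carried out.

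The fix is to read \eqref{von02-Ad} with $-B^*$ in place of $B^*$, i.e.\ with $\dS_1(-B^*)=\Span\lrb{\vE hh}$ and $\dS_{-1}(-B^*)=\Span\lrb{\vE{h'}{-h'}}$. Even after this correction, the decisive step is only announced, not shown: that $\theta(\tau)$ maps $\lrb{\re\tau\geq0}\cup\lrb{\infty}$ bijectively onto the closed disc. So surjectivity remains unproved.

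\textbf{The missing idea.} What the paper uses makes the disc picture unnecessary. Once $L\subset S\dotplus Z$ and $\dim[L/B]=1$ (from \eqref{eq:di-AS-AD}, since a maximal $L$ has vanishing index), pick $\vE f{g+\alpha\varphi}\in L\setminus B$ with $\vE fg\in S$.
\begin{itemize}
\item If $\ip\varphi f\neq0$, this vector lies in $S^{(\tau)}$ with $\tau=\alpha/\ip\varphi f$. Hence $L=B\dotplus\Span\lrb{\vE f{g+\alpha\varphi}}\subset S^{(\tau)}$, and equality follows from $\dim[S^{(\tau)}/B]=1$.
\item If $\ip\varphi f=0$, then $\vE fg\in B$ and $\alpha\neq0$. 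So $Z\subset L$ and $L=B\dotplus Z=S^{(\infty)}$.
\end{itemize}
Put differently, the $S^{(\tau)}/B$, $\tau\in\C\cup\lrb{\infty}$, are exactly the lines of the two-dimensional space $(S\dotplus Z)/B$. Your sign computation then sorts them into anti-dissipative ($\re\tau\geq0$), anti-accumulative ($\re\tau\leq0$), and anti-selfadjoint ($\tau\in i\R\cup\lrb{\infty}$).
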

\begin{proof}
Since $B\subset S$ and for $\vE hk\in B$, one has $\vE hk\in Z^*$ and $\ip{h}{\varphi}=\ip{k}{0}=0$, i.e., $\dom B\perp \varphi$ and $B\subset S^{(\tau)}$. Besides, it a simple matter to verify that $S^{(\tau)}$ is anti-dissipative if $\tau\in\C_{>0}$, anti-accumulative if $\tau\in\C_{<0}$, and anti-symmetric if $\tau\in i\R\cup\{\infty\}$. The maximality of each $S^{(\tau)}$ is a direct consequence of the indices presented in \eqref{eq:di-AS-AD} and this also reveals that $S^{(\tau)}$ is anti-selfadjoint when $\tau\in i\R\cup\{\infty\}$. 

It remains to prove that any anti-dissipative, respectively anti-accumulative, extensions of $B$ has the indicated form. If $L$ is a maximal anti-dissipative extensions of $B$, then the formulae of Theorem~\ref{thm:vN-formulae-AS} that $L\subset -B^*$ and \eqref{eq:adjoint-B-As} implies that for $\vE fh\in L$, there exists $\vE fg\in S$ and $\alpha\in\C$ such that $h=g+\alpha \varphi$. By virtue of $B\subset L$, assume that $\vE fg\in S$ does not belong to $B$. So, for $\ip \varphi f\neq0$, one may take $\tau=\alpha/\ip \varphi f$ and it follows that
\begin{gather*}
\vE fh=\vE f{g+\tau\ip \varphi f\varphi}\in S^{(\tau)}\,,
\end{gather*}
whence $0\leq\re\ip fh=\re (\tau \abs{\ip f\varphi}^2)$, i.e, $\re\tau \geq0$. Therefore, $L\subset S^{(\tau)}$ and equals, due to maximality.  When $\ip \varphi f=0$, one has that $\vE fg\in B$, viz.  $L= S^{(\infty)}$. The case when $L$ is a maximal anti-accumulative extensions of $B$ is straightforward, bearing in mind that $-L$ is a maximal anti-dissipative extensions of the closed anti-symmetric relation $-B$.
\end{proof}

Theorem~\ref{th:od-PaS} presents a variation of the result concerning one-dimensional perturbations of selfadjoint relations, analogous to that found in \cite[Thm.\,2.4]{MR1430397} (cf. \cite[Thm.\,4.5]{RC2025PADI}).

The following result is straightforward from Theorems~\ref{th:exp-genA},~\ref{thm:HY-LP-lr},~\ref{th:od-PaS} and Corollary~\ref{cor:HY-LP-aA}.

\begin{corollary}
For a fixed $\tau\in\C_{<0}\cup i\R\cup\{\infty\}$, the extension $S^{(\tau)}$ given in \eqref{eq:non-inf-Lt}-\eqref{eq:inf-Lt} generates the $C_0$-semigroup $\{T_t\}_{t\geq 0}$, where
\begin{gather*}
T_t=e^{tS^{(\tau)}}=e^{t\oP S^{(\tau)}}\oplus S^{(\tau)}_\infty\,,\qquad t\geq0\,.
\end{gather*}
In this expression, $e^{t\oP S^{(\tau)}}\in\mathcal B((\mul S^{(\tau)})^\perp)$ is a contraction, becoming unitary if $\tau\in i\R\cup\{\infty\}$.
\end{corollary}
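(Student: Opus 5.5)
The argument is a direct assembly of earlier results; the plan is to split $\tau\in\C_{<0}\cup i\R\cup\{\infty\}$ into the anti-accumulative case $\tau\in\C_{<0}$ and the anti-selfadjoint case $\tau\in i\R\cup\{\infty\}$.

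\emph{Step 1: the extension is maximal anti-accumulative.} By Theorem~\ref{th:od-PaS}, $S^{(\tau)}$ is a maximal anti-accumulative extension of $B$ when $\tau\in\C_{<0}$, and anti-selfadjoint when $\tau\in i\R\cup\{\infty\}$. In the second case, Theorem~\ref{th:A-Dissipative-resolvet} together with Proposition~\ref{pro:characterization-A-symmetric} shows that an anti-selfadjoint relation is in particular maximal anti-accumulative; this is the same observation used at the start of the converse part of Corollary~\ref{cor:HY-LP-aA}. Either way, $S^{(\tau)}$ is closed and maximal anti-accumulative.

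\emph{Step 2: generation and contractivity.} Theorem~\ref{thm:HY-LP-lr} now says that $S^{(\tau)}$ generates a $C_0$-semigroup $\{T_t\}_{t\geq0}$ whose operator parts $\oP{T_t}$ are contractions. Theorem~\ref{th:exp-genA} gives the representation $T_t=e^{t\oP S^{(\tau)}}\oplus S^{(\tau)}_\infty=e^{tS^{(\tau)}}$. By Remark~\ref{rmk:Tt-BK}, with $\mathcal K=\mul T_0=\mul S^{(\tau)}$, the operator part $\oP{T_t}=e^{t\oP S^{(\tau)}}$ lies in $\mathcal B((\mul S^{(\tau)})^\perp)$.

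\emph{Step 3: the unitary case.} For $\tau\in i\R\cup\{\infty\}$, Corollary~\ref{cor:HY-LP-aA} shows that $\oP{T_t}$ is unitary, since $S^{(\tau)}$ is anti-selfadjoint. Uniqueness of the generated semigroup (Theorem~\ref{th:prop-Co-Sg}) guarantees that this unitary semigroup is the same $\{T_t\}_{t\geq0}$ found in Step 2.

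The only point needing care is the identification of the operator part of the generator. One must check that $\oP{(S^{(\tau)})}$ is the closed densely defined operator in $(\mul S^{(\tau)})^\perp$ to which Remark~\ref{rmk:YAO} applies, so that $e^{t\oP S^{(\tau)}}$ is well defined. This follows because $S^{(\tau)}$ is densely d.m.o. by Remark~\ref{rmk:aD-dmo}, and then $\oP S^{(\tau)}=S^{(\tau)}_{S^{(\tau)}}$ by \eqref{eq:Tt-ToP}.
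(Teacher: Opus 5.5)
Your proposal is correct and follows essentially the same route as the paper. The paper's proof just cites Theorem~\ref{th:od-PaS} (giving $S^{(\tau)}$ maximal anti-accumulative, or anti-selfadjoint), Theorem~\ref{thm:HY-LP-lr} (generation with contractive operator parts), Corollary~\ref{cor:HY-LP-aA} (unitarity) and Theorem~\ref{th:exp-genA} (the exponential form), and your added bookkeeping via Remarks~\ref{rmk:Tt-BK}, \ref{rmk:aD-dmo} and the uniqueness in Theorem~\ref{th:prop-Co-Sg} only makes explicit what the paper calls straightforward.
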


\subsection{One-dimensional schr\"odinger generators}
For a fixed $0<a<\infty$, let $L_2(0,a)$ and $L_1(0,a)$ denote the spaces of square-integrable and absolutely-integrable functions, respectively, with respect to the Lebesgue measure on $(0,a)$. We also consider the space of smooth functions with compact support $C_{0}^{\infty}(0,a)$ and the absolutely continuous functions $AC[\alpha,\beta]$ for $0<\alpha<\beta<a$.

Consider the differential Schr\"odinger operator on $L_2(0,a)$ given by
\begin{gather*}
H\colonequals-\frac{d^{2}}{dx^{2}}+V(x)\,,
\end{gather*}
where $V(x)$ is a real-valued function in $L_1(0,a)$, and define the maximal domain $M$ as
\begin{align*}
M\colonequals\{f\in L_{2}(0,a)\,:\,& f, f'\in AC[\alpha,\beta]\,\mbox{ for every }[\alpha,\beta]\subset (0,a)\,,\\ &\mbox{ and }\, Hf\in L_{2}(0,a)\}\,.
\end{align*} 

Let $A$ be the operator $H$ restricted to $\dom A$, where
\begin{align*}
\dom A\colonequals\{ f\in M\,:\, f'(0)=f(a)=f'(a)=0\}\,.
\end{align*}
Then $A$ is a densely defined closed symmetric operator in $L_2(0,a)$, possessing deficiency indices $\eta_{\pm}(A)=1$. Its adjoint $A^*$ is given by $A^*=H\rE{\dom{A^*}}$, with
\begin{gather*}
\dom{A^*}=\{f\in M\,:\, f'(0)=0\}\,.
\end{gather*}
Furthermore, all maximal dissipative extensions $L_{\tau}$ of $A$ are densely defined in $L_2(0,a)$. These extensions are parametrized by $\kappa\in\C_{+}\cup\R\cup\{\infty\}$ and are given by $L_{\kappa}=H\rE{\dom{L_\kappa}}$, where
\begin{align}\label{podeSo}
\dom{L_\kappa}=\lrb{f\in M\,:\,f'(0)=0\,, f(a)=\kappa f'(a)}\,,\quad \text{for }\tau\neq\infty\,,
\end{align}
and for $\kappa=\infty$:
\begin{align}\label{eq:podeSo-inf}
\dom{L_\infty}=\lrb{f\in M\,:\,f'(0)=f'(a)=0}\,.
\end{align}
Among these maximal dissipative extensions, $L_{\kappa}$ is selfadjoint precisely when $\kappa\in\R\cup\{\infty\}$. This construction is adapted from \cite[Subsect.\, 4.2]{RC2025PADI} (see also \cite[Chap.\,15]{MR2953553}).

Now, for a fixed $\kappa\in\R\cup\{\infty\}$, let us consider a non-zero normalized element $\varphi\in\dom L_\kappa$. We define the relations $Z\colonequals\Span\lrb{\vE 0\varphi}$ and  $A_\kappa=L_\kappa\cap Z^*\leq L_\kappa$. This $A_\kappa$ can be realized as an operator $H\rE{\dom A_{\kappa}}$, where its domain is given by
\begin{gather*}
\dom{A_\kappa}=\lrb{f\in \dom L_\kappa\,:\, f\perp\varphi}\,.
\end{gather*}
This construction shows that $A_\kappa$ is a closed, non-densely defined symmetric operator. Consequently, since $-iL_\kappa$ is anti-selfadjoint (cf. Remark~\ref{rmk:equi-as-s}), we can adapt the results from Subsection~\ref{subs-6.1} to the following theorem.

\begin{theorem}
The deficiency indices of $A_\kappa$ are $\eta_\pm(A_\kappa)=1$. Besides, all maximal dissipative and maximal accumulative extensions of $A_\kappa$ are in one-to-one correspondence with $\C\cup\{\infty\}$. These extensions, denoted $L_\kappa^{(\tau)}$, are all operators and one-dimensional perturbations of $L_\kappa$ given by
\begin{gather}\label{eq:non-inf-Lk}
L_\kappa^{(\tau)}=\lrb{\vE{f}{g+\tau\ip{\varphi}f \varphi}\,:\,\vE fg\in L_\kappa}\,,\quad \tau\neq\infty\,,
\end{gather}
while $L_\kappa^{(\infty)}$ is the unique linear relation with non-trivial multivalued part, given by 
\begin{gather}\label{eq:inf-Lk}
L_\kappa^{(\infty)}=A_\kappa\dotplus Z\,.
\end{gather}
Furthermore, $L_\kappa^{(\tau)}$ is dissipative if $\tau\in\C_{+}$, accumulative if $\tau\in\C_{-}$, and selfadjoint if $\tau\in \R\cup\{\infty\}$.
\end{theorem}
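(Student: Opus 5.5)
The plan is to reduce everything to Theorem~\ref{th:od-PaS} through the correspondence of Remark~\ref{rmk:equi-as-s} and Lemma~\ref{lem:nonA-Diss}. Set $S\colonequals -iL_\kappa$, which is anti-selfadjoint because $L_\kappa$ is selfadjoint ($\kappa\in\R\cup\{\infty\}$). Since $\dom S=\dom L_\kappa\ni\varphi$, the construction of Subsection~\ref{subs-6.1} applies with the same $\varphi$ and the same $Z$. Note that $Z=-iZ$ and $(-iT)\cap Z^*=-i(T\cap Z^*)$, because $\vE hk\in Z^*$ if and only if $\ip h\varphi=0$, a condition that depends only on $h$. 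Hence $B=S\cap Z^*=-iA_\kappa$.

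From the discussion preceding Theorem~\ref{th:od-PaS} we get $\eta_{>0}(B)=\eta_{<0}(B)=1$. To transfer this, we use $\dS_\lambda(B^*)=\dS_\lambda(iA_\kappa^*)$ and the identity
\[
\vE f{\lambda f}\in iA_\kappa^* \iff \vE f{-i\lambda f}\in A_\kappa^*.
\]
Taking $\lambda=\pm1$ then gives $\eta_\pm(A_\kappa)=1$.

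Next, Lemma~\ref{lem:nonA-Diss} says that $L\mapsto -iL$ is a bijection between maximal dissipative (accumulative) extensions of $A_\kappa$ and maximal anti-dissipative (anti-accumulative) extensions of $B$. The sign needs checking: $L$ is dissipative if and only if $i(-iL)=L$ is dissipative, if and only if $-iL$ is anti-dissipative. Therefore every maximal dissipative or accumulative extension of $A_\kappa$ has the form $iS^{(\tau')}$, and Theorem~\ref{th:od-PaS} gives the one-to-one correspondence with $\C\cup\{\infty\}$.

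Computing these extensions,
\[
iS^{(\tau')}=\lrb{\vE f{ig+i\tau'\ip\varphi f\varphi}: \vE fg\in S}=\lrb{\vE f{g'+i\tau'\ip\varphi f\varphi}: \vE f{g'}\in L_\kappa}.
\]
So we set $\tau=i\tau'$, and the reparametrization $\tau'\mapsto i\tau'$ is a bijection of $\C\cup\{\infty\}$. It sends $\C_{>0}$ to $\C_+$, $\C_{<0}$ to $\C_-$, and $i\R\cup\{\infty\}$ to $\R\cup\{\infty\}$. Consequently $L_\kappa^{(\tau)}$ is dissipative for $\tau\in\C_+$, accumulative for $\tau\in\C_-$, and selfadjoint for $\tau\in\R\cup\{\infty\}$. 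For $\tau=\infty$ we get $iS^{(\infty)}=iB\dotplus iZ=A_\kappa\dotplus Z$.

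It remains to prove the operator claims, and this is where the obstacle lies, mild as it is. For finite $\tau$, $\mul L_\kappa^{(\tau)}=\{0\}$ because $L_\kappa$ is an operator: $f=0$ forces $g=0$ and $\ip\varphi f=0$. For $\tau=\infty$, $\mul(A_\kappa\dotplus Z)\supset\Span\{\varphi\}\neq\{0\}$, so it is a genuine relation. Uniqueness follows since all other members of the family are operators. One should also double-check that the direct sum is well defined, namely $A_\kappa\cap Z=\{0\}$, which holds because $A_\kappa$ is an operator.
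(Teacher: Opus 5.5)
Your proposal is correct and follows the paper's own route: the paper obtains this theorem only by noting that $-iL_\kappa$ is anti-selfadjoint and applying Subsection~\ref{subs-6.1} to it. You make explicit the steps the paper leaves implicit, namely $B=-iA_\kappa$, the transfer of the deficiency indices through $B^*=iA_\kappa^*$, the bijection $L\mapsto -iL$ of extensions from Lemma~\ref{lem:nonA-Diss}, and the reparametrization $\tau=i\tau'$ sending $\C_{>0},\C_{<0},i\R\cup\{\infty\}$ onto $\C_+,\C_-,\R\cup\{\infty\}$.
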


Combining the preceding theorem with Theorem~\ref{th:exp-genA} and Corollary~\ref{cor:Stone-generalization}, we arrive at the following corollary.

\begin{corollary}
For $\kappa\in\R\cup\{\infty\}$ and $\tau\in\C_{+}\cup \R\cup\{\infty\}$, the extension $L_\kappa^{(\tau)}$ given in \eqref{eq:non-inf-Lk}-\eqref{eq:inf-Lk} generates the $C_0$-semigroup 
\begin{gather*}
\lrb{e^{itL_\kappa^{(\tau)}}}_{t\geq0}\,.
\end{gather*}
Besides, $e^{itL_\kappa^{(\tau)}}\in\mathcal B(L_2(0,a))$ is a contraction for $\tau\in\C_{+}$, and unitary if $\tau\in \R$. Moreover, the operator part of $e^{it{L_\kappa^{(\infty)}}}$, acting on $L_2(0,a)\ominus\{\varphi\}$, is a unitary operator.
\end{corollary}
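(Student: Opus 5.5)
The statement to prove is the last corollary: for $\kappa\in\R\cup\{\infty\}$ and $\tau\in\C_+\cup\R\cup\{\infty\}$, the relation $L_\kappa^{(\tau)}$ generates $\{e^{itL_\kappa^{(\tau)}}\}_{t\ge0}$, with the contraction and unitarity properties of its operator part. I would reduce it to Corollary~\ref{cor:Stone-generalization} together with the preceding theorem on the extensions $L_\kappa^{(\tau)}$.

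\textbf{Step 1: the hypothesis of Corollary~\ref{cor:Stone-generalization}.} Fix $\kappa\in\R\cup\{\infty\}$ and $\tau\in\C_+\cup\R\cup\{\infty\}$. By the preceding theorem, $L_\kappa^{(\tau)}$ is a maximal dissipative extension of $A_\kappa$ when $\tau\in\C_+$. It is selfadjoint when $\tau\in\R\cup\{\infty\}$.

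A selfadjoint relation is also maximal dissipative. Indeed, it is symmetric, hence dissipative by Proposition~\ref{pro:characterization-symmetric}. It has $\C_-\subset\rho$ by the second part of that proposition together with Theorem~\ref{th:Dissipative-resolvet}, so it is maximal dissipative by Lemma~\ref{lem:max-Symmetric}. Hence in every case $L\colonequals L_\kappa^{(\tau)}$ is maximal dissipative.

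\textbf{Step 2: applying Corollary~\ref{cor:Stone-generalization}.} The corollary gives that $S=iL$ is maximal anti-accumulative, by Lemma~\ref{lem:nonA-Diss}. It therefore generates a $C_0$-semigroup with contractive operator parts, by Theorem~\ref{thm:HY-LP-lr}. By Theorem~\ref{th:exp-genA} and \eqref{eq:C0-representation}, this semigroup is
\begin{gather*}
T_t=e^{it\oP L}\oplus L_\infty=e^{itL}\,.
\end{gather*}

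\textbf{Step 3: unitarity for selfadjoint $L$.} When $\tau\in\R\cup\{\infty\}$, $L$ is selfadjoint, so $iL$ is anti-selfadjoint by Remark~\ref{rmk:equi-as-s}. Corollary~\ref{cor:HY-LP-aA} then shows that $\oP{T_t}$ is unitary on $(\mul L)^\perp$.

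\textbf{Step 4: identifying the spaces.} For $\tau\neq\infty$, the preceding theorem says $L_\kappa^{(\tau)}$ is an operator. Then $\mul L=\{0\}$, so $T_t=e^{it\oP L}$ lies in $\mathcal B(L_2(0,a))$; it is a contraction for $\tau\in\C_+$ and unitary for $\tau\in\R$.

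For $\tau=\infty$, $L=A_\kappa\dotplus Z$, so $\mul L=\Span\{\varphi\}$. This needs a short check: $\mul A_\kappa=\{0\}$ because $A_\kappa$ is an operator, and $\vE 0\varphi\notin A_\kappa$. The operator part therefore acts on $(\mul L)^\perp=L_2(0,a)\ominus\{\varphi\}$, where it is unitary by Step 3.

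\textbf{Expected obstacle.} The only delicate point is the bookkeeping in Step 4: verifying $\mul L_\kappa^{(\infty)}=\Span\{\varphi\}$, and interpreting the claim ``$e^{itL}\in\mathcal B(L_2(0,a))$'' only for $\tau\neq\infty$, where the multivalued part is trivial.
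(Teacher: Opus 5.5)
Your proposal is correct and follows the paper's route. The paper obtains this corollary in one line by combining the preceding theorem (maximal dissipativity for $\tau\in\C_+$, selfadjointness for $\tau\in\R\cup\{\infty\}$, and the operator versus nontrivial-multivalued dichotomy) with Theorem~\ref{th:exp-genA} and Corollary~\ref{cor:Stone-generalization}; your Steps 2--4 unpack that combination, including the correct bookkeeping $\mul L_\kappa^{(\infty)}=\Span\{\varphi\}$ and restricting the claim of membership in $\mathcal B(L_2(0,a))$ to $\tau\neq\infty$.

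The detour in Step 1 is unnecessary, since Corollary~\ref{cor:Stone-generalization} treats the selfadjoint case directly. It is also cited imprecisely: Lemma~\ref{lem:max-Symmetric} as stated only yields ``maximal dissipative or maximal accumulative,'' whereas Theorem~\ref{th:Dissipative-resolvet} alone gives maximal dissipativity once $\C_-\subset\rho(L)$. It does no harm to the argument.
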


\subsection*{Acknowledgment}
This work was partially supported by CONACYT-Mexico Grant CBF2023-2024-1842. 

\def\cprime{$'$} \def\lfhook#1{\setbox0=\hbox{#1}{\ooalign{\hidewidth
  \lower1.5ex\hbox{'}\hidewidth\crcr\unhbox0}}} \def\cprime{$'$}
  \def\cprime{$'$} \def\cprime{$'$} \def\cprime{$'$} \def\cprime{$'$}
  \def\cprime{$'$} \def\cprime{$'$}
\providecommand{\bysame}{\leavevmode\hbox to3em{\hrulefill}\thinspace}
\providecommand{\MR}{\relax\ifhmode\unskip\space\fi MR }
\providecommand{\MRhref}[2]{%
  \href{http://www.ams.org/mathscinet-getitem?mr=#1}{#2}
}
\providecommand{\href}[2]{#2}

\end{document}